\newtheorem{lemma}{Lemma}
\newtheorem{proposition}{Proposition}
\newtheorem{theorem}{Theorem}
\DeclareOldFontCommand{\rm}{\normalfont\rmfamily}{\mathrm}
\definecolor{mycolor}{rgb}{0.1, 0.1, 0.7}
\DeclareMathAlphabet{\mathpzc}{OT1}{pzc}%
{m}{it}
\begin{document}

\title{Uncertainty Relations in Pre- and Post-Selected Systems}

\author{Sahil}
\email{sahilmd@imsc.res.in}
\affiliation{Optics and Quantum Information Group, The Institute of Mathematical Sciences,
CIT Campus, Taramani, Chennai 600113, India}
\affiliation{Homi Bhabha National Institute, Training School Complex, Anushakti Nagar, Mumbai 400085, India}

\author{Sohail}
\email{sohail@hri.res.in}
\affiliation{Quantum Information \& Computation Group, Harish-Chandra Research Institute, A CI of Homi Bhabha national Institute, Chhatnag Road, Jhunsi, Prayagraj - 211019, India.}

\author{Sibasish Ghosh}
\email{sibasish@imsc.res.in}
\affiliation{Optics and Quantum Information Group, The Institute of Mathematical Sciences,
CIT Campus, Taramani, Chennai 600113, India}
\affiliation{Homi Bhabha National Institute, Training School Complex, Anushakti Nagar, Mumbai 400085, India}

\begin{abstract}
 In this work, we  derive Robertson-Heisenberg like uncertainty relation   for two incompatible observables in a pre- and post-selected (PPS)  system. The newly defined standard deviation and the uncertainty relation in the PPS system have physical meanings which we  present here.  We demonstrate two unusual properties   in the PPS system using  our uncertainty relation.  First,  for  commuting observables, the lower bound  of the uncertainty relation in the PPS system does not become zero  even if the initially prepared state i.e., pre-selection   is the eigenstate of  both the observables when  specific post-selections are considered. This implies that for such case, two commuting observables can disturb each other's measurement results which is in fully contrast with  the Robertson-Heisenberg  uncertainty relation.  Secondly, unlike the standard quantum system, the PPS system makes it feasible to  prepare sharply a quantum state  (pre-selection)  for non-commuting observables {(to be detailed in the main text)}.   Some  applications of uncertainty and uncertainty relation in the PPS system are provided: $(i)$ detection of mixedness of an unknown  state, $(ii)$ stronger uncertainty relation in the standard quantum system, ($iii$) ``purely quantum uncertainty relation"  that is,  the uncertainty relation which is not affected (i.e., neither increasing nor decreasing) under the classical mixing  of quantum states, $(iv)$ state dependent tighter uncertainty relation in the standard quantum system, and  $(v)$ tighter upper bound for the out-of-time-order correlation function.
\end{abstract}

\maketitle
\section{introduction}
The uncertainty relation, which Heisenberg discovered, is one of the most well-known scientific findings \cite{Heisenberg-1927,zurek-1983}. It asserts that it is impossible to accurately measure the position and the momentum of a particle. In other words, measuring the position of a particle  always affects   the momentum of that particle and vice versa. Robertson developed the uncertainty relation known as  ``\emph{Robertson-Heisenberg Uncertainty Relation}" (RHUR)  \cite{Robertson-Heisenberg-1929} in the very later years to describe the difficulty of jointly sharp preparation of a quantum state {(see Ref.  \cite{explanation-1} for the notion of {\it sharp preparation})} for incompatible observables. This relation not only limits the joint sharp preparation for   non-commuting observables  but also proved it's usefulness:  to formulate  quantum mechanics \cite{Busch-2007,Lahti-1987}, for entanglement detection \cite{Hofmann-2003,Guhne-2004}, for the security analysis of quantum key distribution in quantum cryptography \cite{Fuchs-1996}, as a fundamental building block for quantum mechanics and quantum gravity \cite{mjwhall-2005}, etc.\par
On the one side, we have the standard quantum systems where the RHUR hold while pre- and post-selected (PPS) systems, on the other side, are  different kind of quantum mechanical systems that were developed by Aharonov, Bergmann, and Lebowitz (ABL) \cite{ABL1964, Aharonov-Vaidman-2008, Kofman-Nori} to address the issue of temporal asymmetry in quantum mechanics.    Recently, in the references \cite{Gammelmark-2013,Tan-2015}, the authors   generalized    the probabilities of obtaining the measurement results of an observable in a PPS system given by ABL \cite{ABL1964}.\par 
  In the later years,   Aharonov,  Albert, and Vaidman (AAV)  \cite{AAV-1988} introduced the notion of ``weak value"  defined as 
\begin{align}
\braket{A_w}^{\phi}_{\psi}=\frac{\braket{\phi|A|\psi}}{\braket{\phi|\psi}},\label{intro-1}
\end{align}
 in a pre- and post-selected system when the observable $A$ is  measured weakly.  Here, $\ket{\psi}$ and $\ket{\phi}$ are pre- and post-selected states, respectively.   Weak values have  strange features for being complex and its real part can  lie outside the max-min range of the eigenvalues of the operator of interest when the pre- and post-selections are nearly orthogonal.\par
 {In order to obtain the real and imaginary parts of the weak value of $A$ \cite{Lundeen-Resch-2005, Jozsa-2007}, first  the system of interest and a pointer (ancilla) is prepared  in the product state   $\ket{\psi}\otimes \ket{\xi}$. Then  the system-pointer is evolved  under  the global unitary  $U=exp(-iHt)$, where  $H=gA\otimes P_x$ is the von Nuemann Hamiltonian, $A$ is the measurement operator of the system, $P_x$ is the pointer's momentum observable,  `$g$' is the coupling coefficient between system and  pointer and `$t$' is the interaction time. Now after the time evolution of the  system-pointer, the system is projected to $\ket{\phi}$ and  as a result, the state of the   pointer collapses to the unnormalized state $\ket{\widetilde{\xi}_{\phi}} \approx {{\braket{\phi|\psi}}}(1-igt\braket{A_w}^{\phi}_{\psi}P_x)\ket{\xi}$ in the limit $g\ll 1$, i.e., weak interaction. Now, it can be shown that  the  average position and momentum shifts of the pointer in  state  $\ket{{\xi}_{\phi}}=\frac{\ket{\widetilde{\xi}_{\phi}}}{\sqrt{\braket{\widetilde{\xi}_{\phi}|\widetilde{\xi}_{\phi}}}}$  are  
$\braket{X}_{{\xi}_{\phi}}=gt \,\mathcal{R}{e}(\braket{A_w}^{\phi}_{\psi})$ and  $\braket{P_x}_{{\xi}_{\phi}}=\frac{gt}{2\sigma^2}\,\,\mathcal{I}{m}(\braket{A_w}^{\phi}_{\psi})$, respectively 
with the Gaussian pointer  $\braket{x|\xi}=\left(\frac{1}{\sqrt{2\pi}\sigma}\right)^{1/2}e^{-x^2/4\sigma^2}$, $\sigma$ is the rms width of the position distribution  $|\braket{x|\xi}|^2$ of the pointer and thus providing the full knowledge of the weak value of $A$.}\par
Recently, a lot of attention was paid to these aspects \cite{Omar-2014, Hosten-2008, Lundeen-2011, Lundeen-Bamber-2012, Thekkadath-2016, Cheshire-2013, Denkmayr, Das-Pati-2020, Sahil-2021, Pan-2019, Turek-2020, Aharonov-2001, Lundeen-Steinberg-2009, Yokota-Imoto,  Quantum_Paradoxes, Yakir-2012, Solli-2004,Rozema-2012, Pati-2015, Nirala-2019, Pati-2020, Goswami-2019}.  The measurements involving weak values are known as “Weak Measurements” or “weak PPS measurements”.   Since it depends on probabilistic post-selection $\ket{\phi}$, a weak value can be thought of as conditional expectation value. Moreover,  when the post-selection is same as pre-selection i.e., $\ket{\phi}=\ket{\psi}$, it becomes
  \begin{align}
\braket{A}_{\psi}=\braket{\psi|A|\psi},\label{intro-2}
\end{align}
the expectation value   in the standard quantum system. The PPS systems can therefore be thought of as being more general than the so-called standard quantum systems. \par
As pre- and post-selected systems are already useful practically as well as fundamentally, then  an immediate question can be asked whether  there exists   any uncertainty relation like {the RHUR} which can give the limitations  on joint sharp preparation of  the given pre- and post-selected states when non-commuting observables are measured.\par
 In the present study, we demonstrate the existence of such uncertainty relations in PPS systems, which are    expected as PPS systems are more generalised versions of standard ones. We first define the standard deviation of an observable in the PPS system for the given pre- and post-selections with geometrical as well as physical interpretations. After that,  we derive our main result of this paper  ``\emph{uncertainty relations in  pre- and post-selected systems}" using the well known Cauchy-Schwarz inequality. \par
  We provide  the following physical applications  of our results: ($i$) detection of the purity of an unknown state  of   any quantum systems (e.g., qubit, qutrit, two qubit, qutrit-qubit, etc)  using two different definitions of the uncertainty of an observable in the PPS system, ($ii$) stronger uncertainty relation in the standard quantum system (i.e., the uncertainty relation that can not be made trivial or the lower bound can not be made zero for almost all possible choices of initially prepared systems) using the uncertainty relation in the PPS system,  ($iii$) purely quantum uncertainty relations that is,  the uncertainty relations which are not affected (i.e., neither increasing nor decreasing) under the classical mixing  of quantum states using the uncertainty relations  in PPS systems.  $(iv)$ state dependent tighter uncertainty relation in the standard system  by introducing the idea of  post-selection, and  finally $(v)$ tighter   upper bound for the out-of-time-order correlation function. Moreover, as the RHUR has a plenty of applications, uncertainty relation in the PPS systems can also be  applied  in quantum optics, information, technologies, etc.\par
{This paper is organized as follows. In Sec. \ref{II}, we discuss uncertainty relations in standard quantum systems. In Sec. \ref{III}, we derive our main results of this paper.  Application of our results  are given  in Sec. \ref{IV}  and finally we  conclude our work in Sec. \ref{V}.}

\section{Uncertainty relation in standard quantum system }\label{II}
 In this section, we first  interpret the standard deviation of an observable in  standard quantum systems  from  geometrical as well as  information-theoretic perspective. For establishing the standard deviation in a PPS system, we will  introduce a similar interpretation. The RHUR's well-known interpretation is also provided here. \par
 \subsection{Standard deviation}
  We consider the system Hilbert space to be $\mathcal{H}$ and let $\ket{\psi}$ be a state vector in $\mathcal{H}$.  Due to the probabilistic nature of the measurement outcomes of the observable  $A$,  the uncertainty in the measurement is defined as the standard deviation:
 \begin{align}
  \braket{\Delta{A}}_{\psi}=\sqrt{\braket{\psi|A^2|\psi} - \braket{\psi|A|\psi}^2}\label{RHUR-1}.
\end{align}
\emph{\textbf{Geometric interpretation.---}}
Standard deviation can be  given  a geometrical interpretation using the following {proposition}. 
{\begin{proposition}
     If $\ket{\psi}\in \mathcal{H}$ is an initially prepared state of a standard quantum system and $A\in \mathcal{L}(\mathcal{H})$  is a hermitian operator, then we can decompose $A\ket{\psi}\in\mathcal{H}$ as 
     \begin{align}
{A}\ket{\psi}=\braket{{A}}_{\psi}\ket{\psi} + \braket{\Delta A}_{\psi} \ket{\psi^{\perp}_{A}},\label{RHUR-2}
\end{align}
where $\ket{\psi^{\perp}_{A}}=\frac{1}{\braket{\Delta A}_{\psi}}({A}-\braket{A}_{\psi})\ket{\psi}$,  and $\braket{A}_{\psi}=\braket{\psi |{A}|\psi}$. Eq. (\ref{RHUR-2}) is sometimes known as the ``Aharonov–Vaidman identity" \cite{Lev-formula}.
\end{proposition}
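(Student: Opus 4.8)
The plan is to read the identity as the orthogonal decomposition of the vector $A\ket{\psi}$ into its component along $\ket{\psi}$ and its component in the orthogonal complement, so the proof reduces to exhibiting this decomposition and checking that the vector $\ket{\psi^{\perp}_{A}}$ introduced in the statement genuinely merits its name, i.e. that it is a unit vector orthogonal to $\ket{\psi}$. First I would simply take the defining relation $\braket{\Delta A}_{\psi}\ket{\psi^{\perp}_{A}}=(A-\braket{A}_{\psi})\ket{\psi}$ and rearrange it into $A\ket{\psi}=\braket{A}_{\psi}\ket{\psi}+\braket{\Delta A}_{\psi}\ket{\psi^{\perp}_{A}}$, which is already Eq. (\ref{RHUR-2}); the entire remaining content is to justify the labeling of the two terms.

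To that end I would verify two facts. For orthogonality, using that $A$ is Hermitian (so that $\braket{A}_{\psi}=\braket{\psi|A|\psi}$ is real), I compute $\braket{\psi|\psi^{\perp}_{A}}\propto\braket{\psi|(A-\braket{A}_{\psi})|\psi}=\braket{A}_{\psi}-\braket{A}_{\psi}=0$. For normalization, I expand the squared norm,
\begin{align}
\braket{\psi^{\perp}_{A}|\psi^{\perp}_{A}} &= \frac{\braket{\psi|(A-\braket{A}_{\psi})^2|\psi}}{\braket{\Delta A}_{\psi}^2}\nonumber\\
&= \frac{\braket{\psi|A^2|\psi}-\braket{A}_{\psi}^2}{\braket{\Delta A}_{\psi}^2}=1,\nonumber
\end{align}
where the cross term collapses precisely because $\braket{A}_{\psi}$ is real, and the final equality is nothing but the definition (\ref{RHUR-1}) of $\braket{\Delta A}_{\psi}$.

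There is no serious obstacle here: the statement is an algebraic identity, and the only place where a hypothesis is actually used is the Hermiticity of $A$, which guarantees that $\braket{A}_{\psi}$ is real so that $(A-\braket{A}_{\psi})$ is again Hermitian and the cancellations above go through. I would close by remarking that the decomposition says geometrically that $A$ displaces $\ket{\psi}$ by a ``distance'' $\braket{\Delta A}_{\psi}$ in the orthogonal direction $\ket{\psi^{\perp}_{A}}$, which is exactly the interpretation this section is building toward.
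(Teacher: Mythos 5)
Your proof is correct and is essentially the paper's own argument: the paper constructs $\ket{\widetilde{\psi}^{\perp}_{A}}=(A-\braket{A}_{\psi})\ket{\psi}$ via Gram--Schmidt and normalizes it by computing $\sqrt{\braket{\widetilde{\psi}^{\perp}_{A}|\widetilde{\psi}^{\perp}_{A}}}=\braket{\Delta A}_{\psi}$, which is exactly your orthogonality-plus-normalization check in a slightly different order. Your explicit remark that Hermiticity of $A$ (hence realness of $\braket{A}_{\psi}$) is what makes the cross terms cancel is a small clarification the paper leaves implicit, but it does not change the route.
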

\begin{proof}
 Let $A\ket{\psi}$ and $\ket{\psi}$  are two non-orthogonal state vectors. Using Gram-Schmidt orthogonalization process, we find the unnormalized  state vector $\ket{\widetilde{\psi}^{\perp}_{A}}\in\mathcal{H}$ orthogonal to $\ket{\psi}$ as 
\begin{equation}
 \ket{\widetilde{\psi}_{A}^{\perp}}= {A}\ket{\psi} - \frac{(\braket{\psi |{A})|\psi} }{\braket{\psi|\psi}}\ket{\psi}=({A}-\braket{A}_{\psi})\ket{\psi},\label{Lev-1}
\end{equation}
where $\braket{\psi|\psi}=1$ and after normalization, Eq. (\ref{Lev-1}) becomes
\begin{align}
{A}\ket{\psi}=\braket{{A}}_{\psi}\ket{\psi} + \braket{\Delta A}_{\psi} \ket{\psi^{\perp}_{A}},\label{Lev-2}
\end{align}
where $\ket{\psi^{\perp}_{A}}={\ket{\widetilde{\psi}_{A}^{\perp}}}/{\sqrt{\braket{\widetilde{\psi}_{A}^{\perp}|\widetilde{\psi}_{A}^{\perp}}}}$ and  $\sqrt{\braket{\widetilde{\psi}_{A}^{\perp}|\widetilde{\psi}_{A}^{\perp}}}=\sqrt{\braket{\psi|A^2|\psi}-\braket{\psi|A|\psi}^2}=\braket{\Delta A}_{\psi}$ and $\braket{A}_{\psi}=\braket{\psi|A|\psi}$.
\end{proof}

So, Eq. (\ref{RHUR-2}) can be interpreted as the unnormalized state vector ${A}\ket{\psi}$ which has two components and these are $\braket{{A}}_{\psi}$ along $\ket{\psi}$ and $\braket{\Delta A}_{\psi}$ along $\ket{\psi^{\perp}_{A}}$. Here we interpret   $\braket{\Delta A}_{\psi}$ as  disturbance of the state vector due to  the measurement of the operator $A$ or as the measurement error (or standard deviation) of that operator when the system is prepared in the state  $\ket{\psi}$. For instance, if we set up the system in one of the eigenstates of the observable $A$, then  from  Eq. (\ref{RHUR-2}), it can be seen that  the standard deviation of $A$ is zero.\\
\emph{\textbf{Information-theoretic interpretation.---}}
 From an information-theoretic approach, Eq. (\ref{RHUR-1}) can be written as
  \begin{align}
  \braket{\Delta{A}}_{\psi}=\sqrt{\sum_{i=1}^{d-1} {\left|{\braket{\psi_i^{\perp}|A|\psi}}\right|^2}},\label{RHUR-3}
 \end{align}
 where  \{$\ket{\psi}, \ket{\psi^{\perp}_{1}}, \ket{\psi^{\perp}_{2}}, \cdots, \ket{\psi^{\perp}_{d-1}}\}$ forms an orthonormal basis  such that  $I=\ket{\psi}\bra{\psi}+\sum_{i=1}^{d-1}{\ket{\psi_i^{\perp}}\bra{\psi_i^{\perp}}}$ and `$d$' is the dimension of the system. So, the origin of the non-zero standard deviation  $ \braket{\Delta{A}}_{\psi}$   in the standard quantum system  can also be thought of  due to the non-zero contributions of the unnormalized fidelities  \{$|\hspace{-1mm}\braket{\psi_i^{\perp}|A|\psi}\hspace{-1mm}|$\}$_{i=1}^{d-1}$ which can be viewed as the spread of the information of the observable $A$ along \{$\ket{\psi_i^{\perp}}$\}$_{i=1}^{d-1}$ directions. 
 \subsection{RHUR}
 The well known  RHUR    for two non-commuting operators $A$ and $B$  on a Hilbert space $\mathcal{H}$   when the system is  prepared in the state  $\ket{\psi}$ is given  by
\begin{align}
\braket{\Delta{A}}_{\psi}^{{2}}\braket{\Delta{B}}_{\psi}^{{2}}\geq\left[\frac{1}{2i}\braket{\psi|[A,B]|\psi}\right]^2,\label{RHUR-4}
\end{align}
where  $\braket{\Delta{A}}_{{\psi}}$ and  $\braket{\Delta{B}}_{{\psi}}$ are  the standard deviations of the operators $A$ and $B$, respectively, and $[A,B]=AB-BA$ is the commutator of $A$ and $B$. The derivation of Eq. (\ref{RHUR-4}) using the \emph{Aharonov–Vaidman identity} can be found in \cite{Lev-formula,Leifer-2023}. The stronger version is obtained  by adding  the ``Schr$\ddot{o}$dinger's term"  in Eq. (\ref{RHUR-4}) as 
\begin{align}
\braket{\Delta{A}}_{\psi}^{{2}}\braket{\Delta{B}}_{\psi}^{{2}}&\geq \left[\frac{1}{2i}\braket{\psi|[A,B]|\psi}\right]^2 \nonumber\\
&+ \left[\frac{1}{2}\braket{\psi|\{A,B\}|\psi} - \braket{A}_{\psi}\braket{B}_{\psi}\right]^2.\label{RHUR-5}
\end{align}\par
The {RHUR} is usually interpreted as the following: it puts bound on the sharp preparation  of a quantum state for two non-commuting observables. Hence, a quantum state in which the standard deviations of the two non-commuting observables are both zero cannot exist.
\section{Main results}\label{III}
The idea of the standard deviation or information dispersion (see preceding section) is a crucial component of the theory in a preparation-measurement situation. Pre- and post-selected systems are typical examples, therefore we define the standard deviation (uncertainty) of an observable and show that for such systems, there exist RHUR-like uncertainty relations for two non-commuting observables. \par
\subsection{Standard deviation in PPS system}  
\hspace{-3.6mm}\emph{\textbf{Geometric definition.---}}
It is well known that when the pre-selection and the post-selection are same, the PPS system becomes the standard quantum system (see introduction). The  following  proposition generalizes Eq. (\ref{RHUR-2}) for the PPS system.
\begin{proposition}
If a PPS system is in a pre-selected state $\ket{\psi}$ and post-selected state $\ket{\phi}$, then for a hermitian operator $A\in \mathcal{L}(\mathcal{H})$, we can decompose $A \ket{\psi}$ as 
\begin{align}
{A}\ket{\psi}=\braket{\phi |{A}|\psi}\ket{\phi} + \braket{\Delta A}^{\phi}_{\psi}  \ket{\phi^{\perp}_{A\psi}},\label{SDPPS-1}
\end{align}
where 
\begin{align}
\braket{\Delta A}^{\phi}_{\psi}=\sqrt{\braket{\psi|A^2|\psi} - |\braket{\phi|A|\psi}|^2} \label{SDPPS-2}
\end{align}
and  
$\ket{\phi^{\perp}_{A\psi}}=\frac{1}{\braket{\Delta A}_{\psi}^{\phi}}({A\ket{\psi}}-\braket{\phi|A|\psi}\ket{\phi})$, a normalized state vector which is orthogonal to $\ket{\phi}$.
\end{proposition}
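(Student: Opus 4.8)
The plan is to mirror the Gram--Schmidt construction used in the proof of Proposition 1, but now orthogonalizing $A\ket{\psi}$ against the post-selected state $\ket{\phi}$ rather than against $\ket{\psi}$ itself. First I would take $\ket{\phi}$ to be normalized and project $A\ket{\psi}$ onto it; the projection coefficient is $\braket{\phi|A|\psi}$, which plays here the role that the expectation value $\braket{A}_{\psi}$ played in the standard case. This suggests defining the unnormalized residual
\begin{equation}
\ket{\widetilde{\phi}^{\perp}_{A\psi}} = A\ket{\psi} - \braket{\phi|A|\psi}\ket{\phi},
\end{equation}
so that $A\ket{\psi} = \braket{\phi|A|\psi}\ket{\phi} + \ket{\widetilde{\phi}^{\perp}_{A\psi}}$ holds by construction. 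Orthogonality is then immediate: $\braket{\phi|\widetilde{\phi}^{\perp}_{A\psi}} = \braket{\phi|A|\psi} - \braket{\phi|A|\psi}\braket{\phi|\phi} = 0$, which uses only that $\ket{\phi}$ is a unit vector.

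The substantive step is computing the norm $\braket{\widetilde{\phi}^{\perp}_{A\psi}|\widetilde{\phi}^{\perp}_{A\psi}}$, and this is where I expect the only real care to be needed. Here I would use the hermiticity of $A$ to write the bra as $\bra{\widetilde{\phi}^{\perp}_{A\psi}} = \bra{\psi}A - \braket{\psi|A|\phi}\bra{\phi}$, invoking the conjugate relation $\braket{\psi|A|\phi} = \braket{\phi|A|\psi}^{*}$. Expanding the four resulting terms, the leading term is $\braket{\psi|A^2|\psi}$, the two cross terms each equal $-\braket{\phi|A|\psi}\braket{\psi|A|\phi}$, and the remaining term contributes $+\braket{\phi|A|\psi}\braket{\psi|A|\phi}$ (after using $\braket{\phi|\phi}=1$). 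These combine to give $\braket{\psi|A^2|\psi} - \braket{\phi|A|\psi}\braket{\psi|A|\phi} = \braket{\psi|A^2|\psi} - |\braket{\phi|A|\psi}|^2$, which is exactly $(\braket{\Delta A}^{\phi}_{\psi})^2$ as defined in Eq. (\ref{SDPPS-2}).

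Finally, normalizing $\ket{\widetilde{\phi}^{\perp}_{A\psi}}$ by its norm $\braket{\Delta A}^{\phi}_{\psi}$ yields the unit vector $\ket{\phi^{\perp}_{A\psi}}$ orthogonal to $\ket{\phi}$, and substituting back reproduces the decomposition Eq. (\ref{SDPPS-1}). The hard part is purely the norm calculation: one must recognize that hermiticity turns the conjugate pairing $\braket{\phi|A|\psi}\braket{\psi|A|\phi}$ into the \emph{modulus squared} $|\braket{\phi|A|\psi}|^2$ rather than a bare square, which is precisely what distinguishes this PPS expression from its standard counterpart. Conceptually the only difference from Proposition 1 is that the reference direction is now the post-selection $\ket{\phi}$, so the component of $A\ket{\psi}$ along it is the unnormalized weak value $\braket{\phi|A|\psi}$ instead of the ordinary expectation value; I would note in passing that the construction requires $\braket{\Delta A}^{\phi}_{\psi}\neq 0$ for the normalization, the degenerate case $A\ket{\psi}\parallel\ket{\phi}$ being handled trivially by the identity itself.
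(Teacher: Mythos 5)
Your proposal is correct and follows essentially the same route as the paper: Gram--Schmidt orthogonalization of $A\ket{\psi}$ against the normalized post-selection $\ket{\phi}$, followed by the norm computation $\braket{\widetilde{\phi}^{\perp}_{A\psi}|\widetilde{\phi}^{\perp}_{A\psi}}=\braket{\psi|A^2|\psi}-|\braket{\phi|A|\psi}|^2$ using hermiticity of $A$. You merely spell out the cross-term cancellation and the degenerate case $A\ket{\psi}\parallel\ket{\phi}$ more explicitly than the paper does.
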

\begin{proof}
 We assume that  $A\ket{\psi}$ and $\ket{\phi}$  are two non-orthogonal state vectors.  The unnormalized  state vector $\ket{\widetilde{\phi}^{\perp}_{A\psi}}\in\mathcal{H}$ which is  orthogonal to $\ket{\phi}$ is obtained using Gram-Schmidt orthonormalization process as 
\begin{equation}
 \ket{\widetilde{\phi}^{\perp}_{A\psi}} = {A}\ket{\psi} - \frac{\braket{\phi|({A}|\psi})}{\braket{\phi|\phi}}\ket{\phi}
 ={A}\ket{\psi} - \braket{\phi|A|\psi}\ket{\phi},\label{SDPPS-3}
\end{equation}
where $\braket{\phi|\phi}=1$ and after normalization, Eq. (\ref{SDPPS-3}) becomes
\begin{align}
{A}\ket{\psi}=\braket{\phi |{A}|\psi}\ket{\phi} + \braket{\Delta A}^{\phi}_{\psi}  \ket{\phi^{\perp}_{A\psi}},\nonumber
\end{align}
where $\ket{\phi^{\perp}_{A\psi}}={\ket{\widetilde{\phi}_{A\psi}^{\perp}}}/{\sqrt{\braket{\widetilde{\phi}_{A\psi}^{\perp}|\widetilde{\phi}_{A\psi}^{\perp}}}}$ and  $\sqrt{\braket{\widetilde{\phi}_{A\psi}^{\perp}|\widetilde{\phi}_{A\psi}^{\perp}}}=\sqrt{\braket{\psi|A^2|\psi}-|\braket{\phi|A|\psi}|^2}=\braket{\Delta A}^{\phi}_{\psi}$.
\end{proof}
\par
 To define the standard deviation of the observable $A$ in the PPS system, we now present an argument which is similar to the one used to describe the standard deviation of an observable in a standard quantum system.  So,  Eq. (\ref{SDPPS-1}) can be interpreted geometrically  as the unnormalized state vector ${A}\ket{\psi}$ which  has two components  $\braket{\phi|A|\psi}$ along the post-selection  $\ket{\phi}$ and $\braket{\Delta A}^{\phi}_{\psi}$ along $\ket{\phi^{\perp}_{A\psi}}$. \emph{{Here we define $\braket{\Delta A}^{\phi}_{\psi}$ as the {standard deviation} of the observable  $A$ when the system is pre-selected  in   $\ket{\psi}$ and post-selected in  $\ket{\phi}$}}.\par

The standard deviation $\braket{\Delta A}^{\phi}_{\psi}$  can be realized via the weak value of the observable $A$ as  
\begin{align}
 \braket{\Delta{A}}^{\phi}_{\psi}&=\sqrt{\braket{\psi|A^2|\psi} - |{\braket{A_w}^{\phi}_{\psi}}|^2|{\braket{\phi|\psi}}|^2}\label{SDPPS-4}\\
 &=\sqrt{\braket{\Delta{A}}^{2}_{\psi} +  {\braket{A}^2_{\psi}}  - |{\braket{A_w}^{\phi}_{\psi}}|^2|{\braket{\phi|\psi}}|^2},\label{SDPPS-5}
\end{align}  
 where $\braket{A_w}^{\phi}_{\psi}$ is the weak value of the observable $A$ defined in Eq. (\ref{intro-1}) and we have used Eq. (\ref{RHUR-1}) to derive Eq. (\ref{SDPPS-5}).  $|{\braket{\phi|\psi}}|^2$ is the  success probability of the post-selection $\ket{\phi}$. Eq. (\ref{SDPPS-4}) is no longer a valid expression  if pre- and post-selected states are orthogonal to one another because in this situation, weak value is not defined. Then, go back to Eq. (\ref{SDPPS-2}).  It should be noted that Eq. (\ref{SDPPS-2}) holds true whether the measurement is strong or weak.\\
\emph{\textbf{Information-theoretic definition.---}}
  Another expression of the standard deviation  $\braket{\Delta{A}}^{\phi}_{\psi}$ in the PPS system can be derived  by inserting an identity operator $I=\ket{\phi}\bra{\phi}+\sum_{i=1}^{d-1}{\ket{\phi_i^{\perp}}\bra{\phi_i^{\perp}}}$, {where \{$\ket{\phi}, \ket{\phi^{\perp}_{1}}, \ket{\phi^{\perp}_{2}}, \cdots, \ket{\phi^{\perp}_{d-1}} \}$} forms an orthonormal basis,  in the first term of the right hand side of Eq. (\ref{SDPPS-4}) as
 \begin{align}
  \braket{\Delta{A}}^{\phi}_{\psi}=\sqrt{\sum_{i=1}^{d-1} {\left|{\braket{A_w}^{\phi_i^{\perp}}_{\psi}}\right|^2|{\braket{\phi_i^{\perp}|\psi}}|^2}}.\label{SDPPS-6}
 \end{align}\par
 From an information-theoretic perspective, Eq. (\ref{SDPPS-6}) may now be understood as follows: non-zero standard deviation in the PPS system arises as a result of the non-zero contributions from the weak values  \{$\braket{A_w}^{\phi_i^{\perp}}_{\psi}$\}$_{i=1}^{d-1}$ along {the orthogonal post-selections} \{$\ket{\phi_i^{\perp}}$\}$_{i=1}^{d-1}$.  Note that two consecutive measurements are taken into account in a PPS system: the operator of interest $A$ and the projection operator  $\Pi_{\phi}=\ket{\phi}\bra{\phi}$ which corresponds to the post-selection $\ket{\phi}$.  {As a result, it is hard to tell whether or not $A$ has been measured when the weak value is zero. Because of this, it is crucial to have non-zero weak values which carry the information about  the observable $A$.    Null weak values have recently been given a useful interpretation \cite{Duprey-2017}: if a successful post-selection occurs with a null weak value, then  the property represented by the observable $A$ cannot be detected by the weakly coupled quantum pointer. In other words,  the pointer state remains unchanged  when the weak value is zero (see introduction section). Thus, one should anticipate that the standard deviation in the PPS system should be zero if we obtain null weak values for the post-selections \{$\ket{\phi_i^{\perp}}$\}$_{i=1}^{d-1}$,  that means the information about the observable $A$ is not dispersed throughout the post-selections \{$\ket{\phi_i^{\perp}}$\}$_{i=1}^{d-1}$.}\par
 {In addition to the standard deviation's geometrical and information-theoretical explanations (Eqs. (\ref{SDPPS-1}) and (\ref{SDPPS-6}), respectively) in the PPS system, we now study the minimum (zero) and maximum uncertainty (or standard deviation) which  provide additional insights to understand the standard deviation.}\par
  \emph{\textbf{Zero uncertainty.---}} The uncertainty $\braket{\Delta{A}}^{\phi}_{\psi}$ defined in Eq. (\ref{SDPPS-2}) in the PPS system is  zero if and only if 
\begin{align}
{A}\ket{\psi} &= \braket{{\phi_z}|A|\psi}\ket{\phi_z},\label{SDPPS-7}
\end{align}
or $\ket{\phi_z}\propto {A}\ket{\psi}$. We have used the notation $\ket{\phi_z}$ as the post-selection for which uncertainty  in PPS system becomes zero. The zero uncertainty in the PPS system can now be realised in the following way:  the weak value {$\braket{{A_w}}^{\phi_z}_{\psi}$} becomes non-zero i.e., $\frac{\braket{\psi|A^2|\psi}}{\braket{\psi|A|\psi}}\neq 0$  when we post-select the system to $\ket{\phi_z}$, and the weak values for all post-selections \{$\ket{{\phi_z}_i^{\perp}}$\}$_{i=1}^{d-1}$ orthogonal to $\ket{\phi_z}$ are zero. As a result, the right side of Eq. (\ref{SDPPS-6}) is reduced to zero. It should be noted that all post-selections orthogonal  to $\ket{\phi_z}$ are ``legitimate post-selections," meaning that their weak values are clearly specified. Equivalently, we can state that  the information about the observable $A$ is not dispersed along  the post-selections \{$\ket{{\phi_z}_i^{\perp}}$\}$_{i=1}^{d-1}$ as null weak values do not carry informations about the observable $A$ (according to the above information-theoretic definition). Hence, it is guaranteed that in a particular direction  there will be one and only one non-zero weak value of $A$ in a PPS system if and only if the condition (\ref{SDPPS-7}) is met.\par
\emph{Usefulness of zero uncertainty state:} In this paragraph we provide the following usefulness of the zero uncertainty post-selected state $\ket{\phi_z}$.
\par
  \textbf{1)} In a  parameter estimation  scenario, where the task is to obtain the precision limit in the estimation of interaction coefficient `g' in the interaction Hamiltonian $H=gA\otimes p$ (`$p$' is the pointer's momentum variable), Fisher information plays an important role whose   maximum value  is given by $F^{max}(g)=4\sigma^2\braket{\psi|A^2|\psi}$, where $\sigma$ is the standard deviation of initial distribution of the pointer {state and $\ket{\psi}$ is the initially prepared state of the system} \cite{Alves-2015}.  In an arbitrarily post-selected state $\ket{\phi}$, Fisher information is given by $F_{\phi}(g)=4\sigma^2|\hspace{-1mm}\braket{\phi|A|\psi}\hspace{-1mm}|^2 \leq F^{max}(g)$ \cite{Alves-2015}. Recently, {it was shown in the Ref. \cite{Shukur-2020} that the Fisher information can be expressed in terms of quasiprobability distribution of an arbitrary pre-selected state  when the system is a PPS system. Although the quasiprobabilities are in general complex and can take negative values as well, the Fisher information is always positive and real. The Fisher information of the pre-selected state with negative quasiprobability distribution  can surpass the usual quantum Fisher information (defined in standard quantum system).} Violation of such limit implies that the error which  occurs in estimating the unknown parameter can be reduced significantly  using the Fisher information  of the pre-selected state with negative quasiprobability distribution compared to  the usual  quantum Fisher information. One can immediately see using Eq. (\ref{SDPPS-2}) that $F_{\phi}(g)=4\sigma^2[\braket{\psi|A^2|\psi}- (\braket{\Delta{A}}^{\phi}_{\psi})^2]$. Now it is obvious that for the zero uncertainty post-selected state $\ket{\phi_z}$ as appeared in Eq. (\ref{SDPPS-7}), we have $F_{\phi_z}(g)=F^{max}(g)$.  Hence, to achieve the maximum Fisher information $F^{max}(g)$ in the PPS system, one must post-select  the system in $\ket{\phi_z}=A\ket{\psi}/\sqrt{\braket{\psi|A^2|\psi}}$ which corresponds to the zero uncertainty. \par
\textbf{2)} The post-selection $\ket{\phi_z}$  alone has the ability to provide the information (e.g., $\braket{\Delta A}_{\psi}$ and $\braket{A}_{\psi}$) about the observable $A$. Indeed by noting  that  $\braket{\psi|A^2|\psi}=p_z(\braket{A_w}^{\phi_z}_{\psi})^2$ and $\braket{\psi|A|\psi}=p_z\braket{A_w}^{\phi_z}_{\psi}$, we have $\braket{\Delta A}_{\psi}^2 =(1-p_z)p_z(\braket{A_w}^{\phi_z}_{\psi})^2$,  where $p_z=|\hspace{-1mm}\braket{\phi_z|\psi}\hspace{-1mm}|^2$ is the probability of obtaining the post-selection $\ket{\phi_z}=A\ket{\psi}/\sqrt{\braket{\psi|A^2|\psi}}$.\par
 {\emph{\textbf{Maximum uncertainty.---}}} To achieve the maximum  value of $\braket{\Delta{A}}^{\phi}_{\psi}$, the weak value $\braket{{A}_w}^{\phi}_{\psi}$ in Eq. (\ref{SDPPS-4})  has to be zero i.e., when the post-selection $\ket{\phi}$ is orthogonal to $A\ket{\psi}$ and hence max($\braket{\Delta{A}}^{\phi}_{\psi})=\sqrt{\braket{\psi|A^2|\psi}}$.  {Note that, in a preparation-measurement scenario, maximum measurement error is also found to be $\sqrt{\braket{\psi|A^2|\psi}}$ whether the measurement of the observable $A$ is performed in standard  system (see Eq. (\ref{RHUR-1}))  or while performing the best estimation the operator $A$ from the measurement of another  hermitian operator \cite{Hall-2004}}.\par

\subsection{Uncertainty relation in PPS system} \label{III B}

After defining the  standard deviation of an observable  in a PPS system, interpreting  it geometrically and informationally, and maintaining a parallel comparison and connection with the standard deviation in the standard  system, we are now in a position to provide an uncertainty relation in a PPS system for two incompatible observables.  One can formulate many different  types of uncertainty relations  in PPS systems (for example, \cite{Pati-Wu}), but our interpretation of an uncertainty relation in a PPS system is based on the standard deviation  defined in Eq. (\ref{SDPPS-2}) or (\ref{SDPPS-4}). Since the weak value of the  observable $A$ in the standard deviation Eq. (\ref{SDPPS-4}) in the PPS system replaces the average value of the same  observable $A$ in the standard deviation Eq. (\ref{RHUR-1}) in standard quantum system, it is not surprising that the mathematical expression of the uncertainty relation in the PPS system is a modified version of the RHUR (\ref{RHUR-4}), where the average values of the incompatible observables $A$ and $B$ in Eq. (\ref{RHUR-4})  will be replaced by the weak values of the respective observables when the system is pre-selected in $\ket{\psi}$ and post-selected in $\ket{\phi}$. The explicit form of the uncertainty relation in the PPS system is provided in the following theorem.
\\
\begin{theorem}
    Let $A$, $B$ $\in \mathcal{L}(\mathcal{H})$  be two non-commuting hermitian operators which are measured in  the PPS system of our interest with $\ket{\psi}$ and $\ket{\phi}$ being pre- and post-selected states, respectively, then the product of their standard deviations satisfies 
 \begin{align}
 \left(\hspace{-0mm}\braket{\Delta{A}}^{\phi}_{\psi}\hspace{-0mm}\right)^2\hspace{-1mm}\left(\hspace{-0mm}\braket{\Delta{B}}^{\phi}_{\psi}\hspace{-0mm}\right)^2 \geq \left[\frac{1}{2i}\bra{\psi}\hspace{-1mm}[{A},{B}]\hspace{-1mm}\ket{\psi}  - \mathcal{I}m \left(W_{AB}\right) \right]^2\hspace{-1mm},\label{URPPS-1}
  \end{align} 
 where 
$W_{AB}\hspace{-1mm}=\hspace{-1mm}\braket{\psi|A|\phi}\hspace{-1mm}\braket{\phi|B|\psi}\hspace{-1mm}=\hspace{-1mm}\left(\hspace{-0mm}\braket{A_w}^{\phi}_{\psi}\hspace{-0mm}\right)^{*} \hspace{-2mm}\braket{B_w}^{\phi}_{\psi}|{\braket{\phi|\psi}}|^2$ (using the definition of the  weak value defined in Eq. (\ref{intro-1})).
\end{theorem}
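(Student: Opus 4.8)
The plan is to mirror the Aharonov–Vaidman-identity derivation of the ordinary RHUR, but with the orthogonal components now supplied by Proposition 2. First I would apply Eq. (\ref{SDPPS-1}) to both observables and work with the two unnormalized vectors from Eq. (\ref{SDPPS-3}),
\[
\ket{\widetilde{\phi}^{\perp}_{A\psi}}=A\ket{\psi}-\braket{\phi|A|\psi}\ket{\phi},\qquad
\ket{\widetilde{\phi}^{\perp}_{B\psi}}=B\ket{\psi}-\braket{\phi|B|\psi}\ket{\phi}.
\]
By construction their squared norms are exactly $\big(\braket{\Delta A}^{\phi}_{\psi}\big)^2$ and $\big(\braket{\Delta B}^{\phi}_{\psi}\big)^2$, so the left-hand side of Eq. (\ref{URPPS-1}) is precisely the product of these two norms.

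The central inequality is Cauchy–Schwarz applied to this pair,
\[
\braket{\widetilde{\phi}^{\perp}_{A\psi}|\widetilde{\phi}^{\perp}_{A\psi}}\,
\braket{\widetilde{\phi}^{\perp}_{B\psi}|\widetilde{\phi}^{\perp}_{B\psi}}
\geq\left|\braket{\widetilde{\phi}^{\perp}_{A\psi}|\widetilde{\phi}^{\perp}_{B\psi}}\right|^2 .
\]
The next step is to evaluate the cross inner product. Using hermiticity of $A$, so that $\braket{\phi|A|\psi}^{*}=\braket{\psi|A|\phi}$, together with $\braket{\phi|\phi}=1$, I would expand the four terms; the two mixed terms and the pure-$\ket{\phi}$ term cancel down to the compact result $\braket{\widetilde{\phi}^{\perp}_{A\psi}|\widetilde{\phi}^{\perp}_{B\psi}}=\braket{\psi|AB|\psi}-W_{AB}$, with $W_{AB}=\braket{\psi|A|\phi}\braket{\phi|B|\psi}$ as in the statement. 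I expect this collapse to be the one place where a conjugate or a sign could slip, so it is the step I would verify most carefully; everything else is bookkeeping.

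Finally I would discard the real part via the elementary bound $|z|^2\geq\big(\mathcal{I}m\,z\big)^2$ and take the imaginary part term by term. Since $\braket{\psi|AB|\psi}^{*}=\braket{\psi|BA|\psi}$, one has $\mathcal{I}m\braket{\psi|AB|\psi}=\frac{1}{2i}\braket{\psi|[A,B]|\psi}$, whence $\mathcal{I}m\big(\braket{\psi|AB|\psi}-W_{AB}\big)=\frac{1}{2i}\braket{\psi|[A,B]|\psi}-\mathcal{I}m(W_{AB})$, which is exactly the bracket squared on the right of Eq. (\ref{URPPS-1}). Retaining the real part as well would yield a Schrödinger-type strengthening in the spirit of Eq. (\ref{RHUR-5}); dropping it is precisely what reduces the bound to the RHUR-like form. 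As a consistency check, setting $\ket{\phi}=\ket{\psi}$ gives $W_{AB}=\braket{A}_{\psi}\braket{B}_{\psi}$, which is real so that $\mathcal{I}m(W_{AB})=0$, and the inequality collapses back to Eq. (\ref{RHUR-4}).
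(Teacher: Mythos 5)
Your proposal is correct and follows essentially the same route as the paper: Cauchy--Schwarz applied to the unnormalized orthogonal components $\ket{\widetilde{\phi}^{\perp}_{A\psi}}$ and $\ket{\widetilde{\phi}^{\perp}_{B\psi}}$ from Eq. (\ref{SDPPS-3}), the cross term collapsing to $\braket{\psi|AB|\psi}-\braket{\psi|A|\phi}\braket{\phi|B|\psi}$, and then $|z|^2\geq(\mathcal{I}m\,z)^2$ to isolate the commutator piece. The evaluation of the cross inner product and the recovery of the RHUR at $\ket{\phi}=\ket{\psi}$ both check out.
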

 \begin{proof}
     Cauchy-Schwarz inequality for two unnormalized state vectors $\ket{\widetilde{\phi}^{\perp}_{A\psi}}$ and $\ket{\widetilde{\phi}^{\perp}_{B\psi}}$ in $\mathcal{H}$  becomes
 \begin{align}
 {\braket{\widetilde{\phi}^{\perp}_{A\psi}|\widetilde{\phi}^{\perp}_{A\psi}}} {\braket{\widetilde{\phi}^{\perp}_{B\psi}|\widetilde{\phi}^{\perp}_{B\psi}}} \geq \left|{{\braket{\widetilde{\phi}^{\perp}_{A\psi}|\widetilde{\phi}^{\perp}_{B\psi}}}}\right|^2,\label{URPPS-2}
  \end{align} 
  Now, as $\left|{{\braket{\widetilde{\phi}^{\perp}_{A\psi}|\widetilde{\phi}^{\perp}_{B\psi}}}}\right|^2=[\mathcal{R}e({{\braket{\widetilde{\phi}^{\perp}_{A\psi}|\widetilde{\phi}^{\perp}_{B\psi}}}})]^2+[\mathcal{I}m({{\braket{\widetilde{\phi}^{\perp}_{A\psi}|\widetilde{\phi}^{\perp}_{B\psi}}}})]^2$ and hence 
 \begin{align}
 \left|{{\braket{\widetilde{\phi}^{\perp}_{A\psi}|\widetilde{\phi}^{\perp}_{B\psi}}}}\right|^2 \geq  [\mathcal{I}m({{\braket{\widetilde{\phi}^{\perp}_{A\psi}|\widetilde{\phi}^{\perp}_{B\psi}}}})]^2,\label{URPPS-3}
  \end{align} 
  where $\mathcal{I}m({{\braket{\widetilde{\phi}^{\perp}_{A\psi}|\widetilde{\phi}^{\perp}_{B\psi}}}})=\frac{1}{2i}\left( {\braket{\widetilde{\phi}^{\perp}_{A\psi}|\widetilde{\phi}^{\perp}_{B\psi}}} - {\braket{\widetilde{\phi}^{\perp}_{B\psi}|\widetilde{\phi}^{\perp}_{A\psi}}}  \right)$ and  $\mathcal{R}e({{\braket{\widetilde{\phi}^{\perp}_{A\psi}|\widetilde{\phi}^{\perp}_{B\psi}}}})=\frac{1}{2}(\braket{\widetilde{\phi}^{\perp}_{A\psi}|\widetilde{\phi}^{\perp}_{B\psi}} + {\braket{\widetilde{\phi}^{\perp}_{B\psi}|\widetilde{\phi}^{\perp}_{A\psi}}})$. Now put  $\ket{\widetilde{\phi}^{\perp}_{A\psi}}={A}\ket{\psi} - \braket{\phi|A|\psi}\ket{\phi}$ defined in  Eq. (\ref{SDPPS-3}) for operator $A$ and similarly for operator $B$ also, then  we have 
 \begin{align}
 {\braket{\widetilde{\phi}^{\perp}_{A\psi}|\widetilde{\phi}^{\perp}_{B\psi}}}=\braket{\psi|AB|\psi}-\braket{\psi|A|\phi}\braket{\phi|B|\psi}.\label{URPPS-4}
 \end{align} 
 Note that ${\braket{\widetilde{\phi}^{\perp}_{A\psi}|\widetilde{\phi}^{\perp}_{A\psi}}}=\left(\braket{\Delta{A}}^{\phi}_{\psi}\right)^2$ is  square of the standard deviation of the observable $A$ in the PPS system defined in Eq. (\ref{SDPPS-2}) and similarly $\braket{\widetilde{\phi}^{\perp}_{B\psi}|\widetilde{\phi}^{\perp}_{B\psi}}=\left(\braket{\Delta{B}}^{\phi}_{\psi}\right)^2$ is square of the standard deviation of the observable $B$ in the PPS system. Finally, putting  these values and  using Eqs. (\ref{URPPS-3}) and  (\ref{URPPS-4}) in Eq. (\ref{URPPS-2}),  it becomes  Eq. (\ref{URPPS-1}).
 \end{proof}
{Eq.  (\ref{URPPS-1}) is always true for any  strong PPS systems \cite{ABL1964, Aharonov-Vaidman-2008, Kofman-Nori} or weak PPS systems \cite{AAV-1988}}.  For  weak PPS measurements \cite{AAV-1988}, $W_{AB}$ is expressed in terms of weak values of   both the observables. If the pre- and post-selected states are same i.e.,  $\ket{\phi}=\ket{\psi}$, then one  gets back the  RHUR  (\ref{RHUR-4}) as argued before.  Eq. (\ref{URPPS-1}) with ``Schr$\ddot{o}$dinger's term" becomes 
 \begin{align}
 \left(\hspace{-0mm}\braket{\Delta{A}}^{\phi}_{\psi}\hspace{-0mm}\right)^2\hspace{-1mm}\left(\hspace{-0mm}\braket{\Delta{B}}^{\phi}_{\psi}\hspace{-0mm}\right)^2 \geq &\left[\frac{1}{2i}\bra{\psi}\hspace{-1mm}[{A},{B}]\hspace{-1mm}\ket{\psi}  - \mathcal{I}m \left(W_{AB}\right) \right]^2 \nonumber \\
  + &  \left[\frac{1}{2}\braket{\psi|\{A,B\}|\psi} - \mathcal{R}e\left(W_{AB}\right)\,\right]^2\hspace{-1mm}.\label{URPPS-5}
  \end{align} \par
The uncertainty relation (\ref{URPPS-1}) can be interpreted in the same way as we did for the RHUR   (\ref{RHUR-4}). That is,  it bounds the sharp preparation of the  pair  for pre- and post-selections ($\ket{\psi}$, $\ket{\phi}$) for  two non-commuting observables. The lower bound contains an additional term $\mathcal{I}m(W_{AB})$ compared to the  RHUR (\ref{RHUR-4}). So even if $[A,B] \neq 0$, the bound on the right hand side of Eq. (\ref{URPPS-1}) can become zero implying the possibility of both the standard deviations being zero implying further the possibility of sharp preparation of a pair of pre- and post-selected states.   Below, we provide  the necessary and sufficient  condition for  such case (see \emph{Observation 2}). Recently, the authors of the references \cite{Bao-2020,Molmer-2021} confirmed that in a PPS system using the ABL- rule \cite{ABL1964, Aharonov-Vaidman-2008, Kofman-Nori}, it is possible to go beyond the standard lower bound in  the RHUR for position and momentum observables. Not exactly, but a similar property i.e., achieving arbitrary small lower bound  (which depends on the pre- and post-selections) of the product of standard deviations of two non-commuting observables  in a PPS system is possible in the relations (\ref{URPPS-1}). We now explore two peculiar characteristics of the uncertainty relations (\ref{URPPS-1}) and  (\ref{URPPS-5}) that cannot be observed in standard quantum systems.   \par
\emph{\textbf{Observation 1.}} { If the lower bound in an uncertainty relation in any quantum system is non-zero, then we say that two incompatible observables disturb each others' measurement results. Now consider the following case.  If  $\ket{\psi}$ is a common eigenstate of both  ${A}$ and ${B}$, then  $\braket{\Delta{A}}_{\psi}=0$, $\braket{\Delta{B}}_{\psi}=0$ implying that the measurement of one doesn't disturbs the outcome of the other.  Surprisingly, this property doesn't hold in the PPS system. Note that, even if $\ket{\psi}$ is a common eigenstate of both  ${A}$ and ${B}$, the lower bound of the relation (\ref{URPPS-5})  doesn't become zero  for specific post-selections which implies $\braket{\Delta{A}}^{\phi}_{\psi}\neq 0$, $\braket{\Delta{B}}^{\phi}_{\psi}\neq 0$. Hence we can say that  the measurement of ${A}$ is invariably disturbed by the measurement of ${B}$  or vice versa in a PPS system. In the Ref.  \cite{Vaidman-1993}, Vaidman demonstrated the same property in a PPS system using the ABL- rule}.\par
\emph{\textbf{Observation 2.}} With two non-commuting observables in the standard quantum system, sharp preparation of a quantum state is impossible. Or equivalently, for  an  initially prepared state $\ket{\psi}$, it is  impossible to have $\braket{\Delta{A}}_{\psi}=0$, $\braket{\Delta{B}}_{\psi}=0$ if $[A,B]\neq 0$. But in the PPS system, we can prepare \textit{any} quantum state $\ket{\psi}$ which can give $\braket{\Delta{A}}^{\phi}_{\psi}=0$, $\braket{\Delta{B}}^{\phi}_{\psi}=0$ for a specific choice of post-selection   implying sharp preparation of $\ket{\psi}$ for  non-commuting observables $A$ and $B$. It is easy to show that  both the uncertainties  $\braket{\Delta{A}}^{\phi}_{\psi}$ and  $\braket{\Delta{B}}^{\phi}_{\psi}$ are  zero for the common post-selection $\ket{{\phi}_z}$ if and only if
\begin{equation} 
\begin{aligned}
 \ket{{\phi}_z} \propto A\ket{\psi},\hspace{1cm}
\ket{{\phi}_z} \propto B\ket{\psi}.\nonumber
\end{aligned}
\end{equation}
After the normalization, we find the common post-selection condition
\begin{align}
\ket{{\phi}_z}= \frac{A\ket{\psi}}{\sqrt{\braket{\psi|A^2|\psi}}}=\frac{B\ket{\psi}}{\sqrt{\braket{\psi|B^2|\psi}}},\label{URPPS-6}
\end{align}
upto some phase factors.\par
\emph{Example.} Now, consider an example of two non-commuting observables $A=\frac{1}{\sqrt{2}}(I+\sigma_x)$ and $B=\frac{1}{\sqrt{2}}(\sigma_z+\sigma_x)$ with the initially prepared state  $\ket{0}$. With these specific choices, it is possible to show that condition (\ref{URPPS-6}) is satisfied. Recall that in order to conduct the experiment using weak values, the average values of the observables must not be zero; for this reason, we did not take into account the Pauli observables $\sigma_x$ and $\sigma_y$ with initially prepared state $\ket{0}$. Nonetheless, if one does not adhere to weak values, this example is  still  true. So, the common  post-selection  for this case is  $\ket{\phi_{e}}=(\ket{0}+\ket{1})/\sqrt{2}$  and hence both the uncertainties $\braket{\Delta{A}}^{\phi_{e}}_{0}$ and $\braket{\Delta{B}}^{\phi_{e}}_{0}$ of the non-commuting observables $A$ and $B$, respectively are zero  for the given initially prepared state $\ket{0}$ and the conditioned   post-selection  $\ket{\phi_{e}}$ in Eq. (\ref{URPPS-6}).  In the PPS system, it is now feasible to do the hitherto impossibly difficult task of jointly sharply preparing a quantum state for two non-commuting observables. \par
 The aforementioned \emph{Observations 1} and \emph{2}   demonstrate that PPS systems are capable of being even stranger than their well-known unusual results e.g., quantum Cheshire Cats \cite{Cheshire-2013},  measurement of a component of a spin 1/2  particle which can reach $100\hbar$ \cite{AAV-1988}, etc. \par
 \emph{\textbf{Comments.---}} The characteristics of the uncertainty relations (\ref{URPPS-1}) and (\ref{URPPS-5}) in PPS systems  as compared to the RHUR  (\ref{RHUR-4}) and Eq. (\ref{RHUR-5})  are substantially altered by the post-selections.   These uncertainty inequalities (\ref{URPPS-1}) and (\ref{URPPS-5}) will undoubtedly have applications like the RHUR for quantum foundations, information and technologies. For instances, $(i)$ they can be used for information extraction using commuting observables because the inequalities do not become trivial for particular choices of post-selections, $(ii)$ one can obtain a series of uncertainty inequalities by changing the post-selections and that is advantageous for practical purposes (see stronger uncertainty relations in Sec. \ref{IV}), $(iii)$ existing applications of uncertainty relations (\ref{RHUR-4}) and (\ref{RHUR-5}) in standard systems, such as entanglement detection \cite{Hofmann-2003}, quantum metrology \cite{Giovannetti-2006, Pezze-2018}, etc., can be revisited using uncertainty relations (\ref{URPPS-1}) and (\ref{URPPS-5}) in the PPS systems, ($iv$) PPS system based spin squeezing: spin-squeezed states are a class of states having squeezed spin variance along a certain direction, at the cost of anti-squeezed variance along an orthogonal direction. This is done by using the RHUR  (\ref{RHUR-4}) in the standard quantum system \cite{Ueda-1993,Wineland-1992,Wineland-1994,Ma-2011}. Such analysis can be reintroduced  in the light of PPS systems. {As there is no unique definition of spin squeezing in the standard quantum systems, it is, by means of Eq. (\ref{URPPS-1}), also possible to define the spin squeezing non uniquely in the PPS system}. A very careful analysis is required to see whether there exists some states in the PPS systems for which  $\braket{\Delta{A}}^{\phi}_{\psi}=\braket{\Delta{B}}^{\phi}_{\psi}$ and inequality (\ref{URPPS-1} is saturated similar to coherent spin states  in  the standard quantum systems.   \par
 \subsubsection{Intelligent pre- and post-selected states}
 {In the standard quantum system, the states for which the equality condition holds in the RHUR (\ref{RHUR-4}) are known as  intelligent states  or minimum-uncertainty states \cite{Aragone-1974, Chalbaud-1976, Trifonov-1994}. Minimum uncertainty  states have been proposed to improve the accuracy of phase measurement in quantum interferometer \cite{Mlodinow-1993}. Minimum-uncertainty states   in the PPS systems  can also be defined based on the  following condition.}\par
  One can  find the condition for which the inequality (\ref{URPPS-1}) saturates (see Appendix \ref{A}) is given by 
 \begin{equation}
 \hspace{-2mm}{A}\hspace{-1mm}\ket{\psi}\hspace{-0mm} - \hspace{-0mm}\braket{\phi|A|\psi}\hspace{-1mm}\ket{\phi}=\pm i\frac{\braket{\Delta{A}}^{\phi}_{\psi}}{\braket{\Delta{B}}^{\phi}_{\psi}}\left({B}\hspace{-1mm}\ket{\psi} \hspace{-0mm}- \hspace{-0mm}\braket{\phi|B|\psi}\hspace{-1mm}\ket{\phi}\right).\label{URPPS-7}
\end{equation}
{If the sign of `$i$' on the RHS of  Eq. (\ref{URPPS-7}) is taken to be positive (negative) when the observable $A$  appears on the LHS of the Eq. (\ref{URPPS-7}), then the sign of `$i$' on the RHS of  Eq. (\ref{URPPS-7}) is taken be  negative (positive) when the observable $B$  appears on the LHS of the Eq. (\ref{URPPS-7}).}  So, the pre- and post-selected states which satisfy the  condition (\ref{URPPS-7}, can be referred as  the ``intelligent pre- and post-selected states". {For the given pre-selection and observables in Eq. (\ref{URPPS-7}), one can find the post-selection which will make the Eq. (\ref{URPPS-1}) the most tight i.e., equality. 
 \par

\subsection{Uncertainty equality in  PPS system}
Recently, in the reference \cite{Yao-2015},  the authors have shown that there exist   variance-based uncertainty equalities from which a series of uncertainty inequalities with hierarchical structure can be obtained. It was shown that stronger uncertainty relation given by Maccone and Pati \cite{Maccone-Pati} is a special case of these  uncertainty inequalities. Here we show such uncertainty equalities in the PPS systems.  We provide  interpretation of the uncertainty inequalities derived from the uncertainty equalities. Further, in application section \ref{IV}, we use  uncertainty equalities in PPS systems to obtain stronger uncertainty relations and state dependent tighter  uncertainty relations.
\begin{theorem}
   The product of standard deviations of two non-commuting  hermitian operators $A$, $B$ $\in \mathcal{L}(\mathcal{H})$  in  a PPS system  with pre- and post-selected states  $\ket{\psi}$ and $\ket{\phi}$, respectively  satisfies 
\begin{align}
\braket{\Delta{A}}^{\phi}_{\psi}\braket{\Delta{B}}^{\phi}_{\psi}\hspace{-1mm}=\frac{\mp\left(\frac{1}{2i}\braket{\psi|[A,B]|\psi}-\mathcal{I}m(W_{AB})\right)}{1-\frac{1}{2}\sum_{k=1}^{d-1}{\big|\braket{\psi|\frac{A}{\braket{\Delta{A}}^{\phi}_{\psi}}\pm i\frac{B}{\braket{\Delta{B}}^{\phi}_{\psi}}|\phi^{\perp}_{k}}\big|^2}},\label{UEPPS-1}
\end{align}
where we have assumed that $\braket{\Delta{A}}^{\phi}_{\psi}$ and $\braket{\Delta{B}}^{\phi}_{\psi}$ are non-zero, and the sign should be considered such that the numerator is always  real and positive. Here  \{$\ket{\phi}, \ket{\phi_k^{\perp}}^{d-1}_{k=1}$\}  is an complete orthonormal  basis  in the $d$-dimensional Hilbert space.
\end{theorem}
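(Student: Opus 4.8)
The plan is to upgrade the Cauchy–Schwarz step used in Theorem 1 to an exact identity by keeping track of the component of one orthogonalized vector that is \emph{not} parallel to the other. First I would introduce the normalized perpendicular vectors $\ket{a}=\ket{\widetilde{\phi}^{\perp}_{A\psi}}/\braket{\Delta A}^{\phi}_{\psi}$ and $\ket{b}=\ket{\widetilde{\phi}^{\perp}_{B\psi}}/\braket{\Delta B}^{\phi}_{\psi}$, which are well defined precisely because we assume $\braket{\Delta A}^{\phi}_{\psi},\braket{\Delta B}^{\phi}_{\psi}\neq 0$; by the second Proposition each is a unit vector orthogonal to $\ket{\phi}$. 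I would then compute the norm $\|\,\ket{a}\pm i\ket{b}\,\|^{2}$ directly, obtaining $\|\,\ket{a}\pm i\ket{b}\,\|^{2}=2\mp 2\,\mathcal{I}m(\braket{a|b})$, since $\braket{a|a}=\braket{b|b}=1$ and $\braket{a|b}-\braket{b|a}=2i\,\mathcal{I}m(\braket{a|b})$.

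The second step is to identify the two pieces of this identity with the quantities in the statement. Using Eq.~(\ref{URPPS-4}), $\braket{\widetilde{\phi}^{\perp}_{A\psi}|\widetilde{\phi}^{\perp}_{B\psi}}=\braket{\psi|AB|\psi}-W_{AB}$, and since $\mathcal{I}m(\braket{\psi|AB|\psi})=\frac{1}{2i}\braket{\psi|[A,B]|\psi}$, the numerator emerges as $\mathcal{I}m(\braket{a|b})=\big(\frac{1}{2i}\braket{\psi|[A,B]|\psi}-\mathcal{I}m(W_{AB})\big)/(\braket{\Delta A}^{\phi}_{\psi}\braket{\Delta B}^{\phi}_{\psi})$, which is manifestly real. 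For the left-hand norm I would expand $\ket{a}\pm i\ket{b}$ in the complete basis $\{\ket{\phi},\ket{\phi^{\perp}_k}\}$; because both $\ket{a}$ and $\ket{b}$ are orthogonal to $\ket{\phi}$, only the $\ket{\phi^{\perp}_k}$ terms survive, and completeness gives $\|\,\ket{a}\pm i\ket{b}\,\|^{2}=\sum_{k=1}^{d-1}\big|\braket{\phi^{\perp}_k|(a\pm ib)}\big|^{2}$. Evaluating $\braket{\phi^{\perp}_k|a}=\braket{\phi^{\perp}_k|A|\psi}/\braket{\Delta A}^{\phi}_{\psi}$ from Eq.~(\ref{SDPPS-3}), where the $\ket{\phi}$ term drops by orthogonality, reproduces exactly the matrix elements $\braket{\psi|\big(\frac{A}{\braket{\Delta A}^{\phi}_{\psi}}\pm i\frac{B}{\braket{\Delta B}^{\phi}_{\psi}}\big)|\phi^{\perp}_k}$ of the denominator, after a complex conjugation inside the modulus.

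Finally I would combine the two computations: substituting $\mathcal{I}m(\braket{a|b})$ into $\|\,\ket{a}\pm i\ket{b}\,\|^{2}=2\mp 2\,\mathcal{I}m(\braket{a|b})$ and solving the resulting linear relation for the product $\braket{\Delta A}^{\phi}_{\psi}\braket{\Delta B}^{\phi}_{\psi}$ yields precisely Eq.~(\ref{UEPPS-1}). The part that requires care is the bookkeeping of signs: the overall $\pm$ originates from the free choice in $\ket{a}\pm i\ket{b}$, the conjugation in the previous step flips the sign attached to $B$ in the denominator relative to the norm I computed, and the statement then fixes the sign so that the numerator is real and nonnegative, consistent with the left side being a product of standard deviations. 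I expect this sign matching, rather than any analytic difficulty, to be the only delicate point; the rest is a direct norm expansion in which no inequality is invoked, which is exactly why the outcome is an equality rather than a bound.
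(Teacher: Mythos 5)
Your proof is correct and is essentially the paper's own argument: since $\ket{a}\pm i\ket{b}=\Pi\ket{\xi^{\pm}}$ with $\Pi=I-\ket{\phi}\bra{\phi}$ and $\ket{\xi^{\pm}}=\big(\tfrac{A}{\braket{\Delta A}^{\phi}_{\psi}}\pm i\tfrac{B}{\braket{\Delta B}^{\phi}_{\psi}}\big)\ket{\psi}$, your two evaluations of $\|\ket{a}\pm i\ket{b}\|^{2}$ coincide with the paper's two evaluations of $\braket{\xi^{\mp}|\Pi|\xi^{\mp}}$ via $\Pi=I-\ket{\phi}\bra{\phi}$ and $\Pi=\sum_{k}\ket{\phi^{\perp}_{k}}\bra{\phi^{\perp}_{k}}$. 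The sign bookkeeping you flag (the conjugation flipping the sign on $B$ inside the modulus, and the choice making the numerator nonnegative) is handled the same way in the paper.
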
  
\begin{proof}
    Consider  an orthonormal complete basis  \{$\ket{\phi}, \ket{\phi_k^{\perp}}^{d-1}_{k=1}$\} in the $d$-dimensional Hilbert space $\mathcal{H}$.  Now, define the projection operator $\Pi=I-\ket{\phi}\bra{\phi}$ and the unnormalized  state vector  $\ket{\xi^{\pm}}=\big(\frac{A}{\braket{\Delta{A}}^{\phi}_{\psi}}\pm i\frac{B}{\braket{\Delta{B}}^{\phi}_{\psi}}\big)\ket{\psi}$. Then we have  the following identity
\begin{align}
\hspace{-6pt}\braket{\xi^{\mp}|\Pi|\xi^{\mp}}&\hspace{-1mm}=\hspace{-1mm}\braket{\xi^{\mp}|\xi^{\mp}} - \braket{\xi^{\mp}|\phi}\braket{\phi|\xi^{\mp}}\nonumber\\
&=\hspace{-1mm}\left\{\hspace{-1mm}\frac{\braket{\psi|A^2|\psi}}{\left(\hspace{-1mm}\braket{\Delta{A}}^{\phi}_{\psi}\hspace{-1mm}\right)^2}+\frac{\braket{\psi|B^2|\psi}}{\left(\hspace{-1mm}\braket{\Delta{B}}^{\phi}_{\psi}\hspace{-1mm}\right)^2}\mp \frac{i\braket{\psi|[A,B]|\psi}}{\braket{\Delta{A}}^{\phi}_{\psi}\braket{\Delta{B}}^{\phi}_{\psi}}\hspace{-1mm}\right\}\nonumber\\
&-\hspace{-1mm}\left\{\hspace{-1mm}\frac{|{\braket{\phi|A|\psi}}|^2}{\left(\hspace{-1mm}\braket{\Delta{A}}^{\phi}_{\psi}\hspace{-1mm}\right)^2}+\frac{|{\braket{\phi|B|\psi}}|^2}{\left(\hspace{-1mm}\braket{\Delta{B}}^{\phi}_{\psi}\hspace{-1mm}\right)^2}\pm \frac{2\mathcal{I}m(W_{AB})}{\braket{\Delta{A}}^{\phi}_{\psi}\hspace{-1mm}\braket{\Delta{B}}^{\phi}_{\psi}}\hspace{-1mm}\right\}\nonumber\\
&=2\pm 2\frac{\left(\frac{1}{2i}\braket{\psi|[A,B]|\psi}-\mathcal{I}m(W_{AB})\right)}{\braket{\Delta{A}}^{\phi}_{\psi}\braket{\Delta{B}}^{\phi}_{\psi}},\label{UEPPS-2}
\end{align}
where we have used Eq. (\ref{SDPPS-2}) and $W_{AB}=\braket{\psi|A|\phi}\braket{\phi|B|\psi}$. Now, we use another expression of  $\Pi=\sum_{k=1}^{d-1}{\ket{\phi^{\perp}_{k}}\bra{\phi^{\perp}_{k}}}$ to calculate the same identity
\begin{align}
\braket{\xi^{\mp}|\Pi|\xi^{\mp}}&=\sum_{k=1}^{d-1}{\Big|\braket{\psi|\frac{A}{\braket{\Delta{A}}^{\phi}_{\psi}}\pm i\frac{B}{\braket{\Delta{B}}^{\phi}_{\psi}}|\phi^{\perp}_{k}}\Big|^2}.\label{UEPPS-3}
\end{align}
So, from the  Eqs. (\ref{UEPPS-2}) and (\ref{UEPPS-3}), we obtain the uncertainty equality (\ref{UEPPS-1}) in the PPS system. 
\end{proof}
\begin{theorem}
   The sum of the variances of two non-commuting  hermitian operators $A$, $B$ $\in \mathcal{L}(\mathcal{H})$  in  a PPS system  with pre- and post-selected states  $\ket{\psi}$ and $\ket{\phi}$, respectively  satisfies 
\begin{align}
\left(\hspace{-1mm}\braket{\Delta{A}}^{\phi}_{\psi}\right)^2\hspace{-1mm} + \left(\hspace{-1mm}\braket{\Delta{B}}^{\phi}_{\psi}\right)^2
\hspace{-1mm}=&\pm\Big(i\braket{\psi|[A,B]|\psi}
\hspace{-1mm}-2 \mathcal{I}m(W_{AB})\hspace{-1mm}\Big)\nonumber\\
&+\sum_{k=1}^{d-1}|\hspace{-1mm}\braket{\phi_k^{\perp}|(A\mp iB)|\psi}\hspace{-1mm}|^2.\label{UEPPS-4}
\end{align}
Here, the `$\pm$' sign is taken  suitably such that the first  term  in right side  is always positive.
\end{theorem}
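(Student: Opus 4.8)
The plan is to reproduce the strategy used to establish the previous theorem, Eq. (\ref{UEPPS-1}), but with the \emph{unnormalized} combination in place of the normalized one. Specifically, I would introduce the state vector $\ket{\eta^{\pm}}=(A\pm iB)\ket{\psi}$ together with the same projector $\Pi=I-\ket{\phi}\bra{\phi}=\sum_{k=1}^{d-1}\ket{\phi^{\perp}_{k}}\bra{\phi^{\perp}_{k}}$ built from the complete orthonormal basis $\{\ket{\phi},\ket{\phi^{\perp}_{k}}\}$. The entire proof then amounts to evaluating the single scalar $\braket{\eta^{\mp}|\Pi|\eta^{\mp}}$ in two different ways and equating the outcomes.

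First I would compute $\braket{\eta^{\mp}|\Pi|\eta^{\mp}}=\braket{\eta^{\mp}|\eta^{\mp}}-|\braket{\phi|\eta^{\mp}}|^{2}$ directly. Using the Hermiticity of $A$ and $B$, the diagonal term expands as $\braket{\eta^{\mp}|\eta^{\mp}}=\braket{\psi|A^{2}|\psi}+\braket{\psi|B^{2}|\psi}\mp i\braket{\psi|[A,B]|\psi}$, since the cross terms collapse into the commutator. The overlap term yields $|\braket{\phi|\eta^{\mp}}|^{2}=|\braket{\phi|A|\psi}|^{2}+|\braket{\phi|B|\psi}|^{2}$ plus a mixed contribution which, upon recognizing $W_{AB}=\braket{\psi|A|\phi}\braket{\phi|B|\psi}$, is proportional to $\mathcal{I}m(W_{AB})$. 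Subtracting and invoking the PPS standard deviation of Eq. (\ref{SDPPS-2}), namely $\braket{\psi|A^{2}|\psi}-|\braket{\phi|A|\psi}|^{2}=(\braket{\Delta{A}}^{\phi}_{\psi})^{2}$ and likewise for $B$, casts this first evaluation as the sum $(\braket{\Delta{A}}^{\phi}_{\psi})^{2}+(\braket{\Delta{B}}^{\phi}_{\psi})^{2}$ minus the commutator-and-$W_{AB}$ terms.

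Next I would evaluate the same scalar through the resolution $\Pi=\sum_{k=1}^{d-1}\ket{\phi^{\perp}_{k}}\bra{\phi^{\perp}_{k}}$, which at once gives $\braket{\eta^{\mp}|\Pi|\eta^{\mp}}=\sum_{k=1}^{d-1}|\braket{\phi^{\perp}_{k}|(A\mp iB)|\psi}|^{2}$. Equating this with the first evaluation and solving for the sum of the variances produces Eq. (\ref{UEPPS-4}).

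The main obstacle is the sign bookkeeping rather than any conceptual step: the $\pm$ in the definition of $\ket{\eta^{\pm}}$ is correlated with the $\mp$ inside $|\braket{\phi^{\perp}_{k}|(A\mp iB)|\psi}|^{2}$ and with the signs multiplying $\braket{\psi|[A,B]|\psi}$ and $\mathcal{I}m(W_{AB})$, so these must be tracked consistently. Because $[A,B]$ is anti-Hermitian, $i\braket{\psi|[A,B]|\psi}$ is real, so both sides of the identity are manifestly real; moreover $\braket{\eta^{\mp}|\Pi|\eta^{\mp}}=\sum_{k}|\braket{\phi^{\perp}_{k}|(A\mp iB)|\psi}|^{2}\ge 0$ is non-negative, which fixes the overall sign to the one keeping the leading commutator-plus-$W_{AB}$ term positive (as the statement prescribes) and furnishes a convenient consistency check against sign slips.
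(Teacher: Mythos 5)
Your proposal is correct and is essentially identical to the paper's own proof: the two traces the paper equates, $Tr\big((A\mp iB)\ketbra{\psi}{\psi}(A\pm iB)\,\Pi\big)$ with $\Pi=I-\ketbra{\phi}{\phi}$ written once in that form and once as $\sum_{k}\ketbra{\phi_k^{\perp}}{\phi_k^{\perp}}$, are exactly your scalar $\braket{\eta^{\mp}|\Pi|\eta^{\mp}}$ evaluated in the two ways you describe. The one caution concerns precisely the sign bookkeeping you deferred: carrying it through gives the cross term of $|\braket{\phi|\eta^{\mp}}|^{2}$ as $\pm 2\,\mathcal{I}m(W_{AB})$, so the first term on the right of the identity comes out as $\pm\big(i\braket{\psi|[A,B]|\psi}+2\,\mathcal{I}m(W_{AB})\big)=\mp 2\big(\tfrac{1}{2i}\braket{\psi|[A,B]|\psi}-\mathcal{I}m(W_{AB})\big)$, consistent with the combination appearing in Theorems 1 and 2 but with the opposite relative sign on $\mathcal{I}m(W_{AB})$ from the one printed in Eq.\ (\ref{UEPPS-4}), a discrepancy you can confirm with $A=\sigma_x$, $B=\sigma_y$, $\ket{\psi}=\ket{0}$, $\ket{\phi}=(\ket{0}+\ket{1})/\sqrt{2}$.
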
  

 \begin{proof}
    Consider  an orthonormal complete basis  \{$\ket{\phi}, \ket{\phi_k^{\perp}}^{d-1}_{k=1}$\} in the $d$-dimensional Hilbert space $\mathcal{H}$ and hence $I-\ketbra{\phi}{\phi}=\sum_{k=1}^{d-1}{\ket{\phi^{\perp}_{k}}\bra{\phi^{\perp}_{k}}}$. By equating the following two 
\begin{align}
&Tr\Big((A\mp iB)\ketbra{\psi}{\psi}(A\pm iB)(I-\ket{\phi}\bra{\phi})\Big),\nonumber\\
&Tr\Big((A\mp iB)\ketbra{\psi}{\psi}(A\pm iB)(\sum_{k=1}^{d-1}\ket{\phi_k^{\perp}}\bra{\phi_k^{\perp}})\Big)\nonumber,
\end{align}
we have Eq. (\ref{UEPPS-4}).
 \end{proof}

 \begin{table*}[tbp]
\caption{Comparison of different properties between standard quantum systems and PPS systems.}
\label{Table}
\begin{center}
\begin{tabular}{c|c|c|c}
\hline\hline
Properties  & Standard  quantum systems & Pre- and post-selected  systems \\ \hline\hline
\parbox{2cm} {Standard deviation} & 
\parbox{5.6cm} { \ \ \ \ \ \ \ \
\ \ \ \ \ \ \ \ \ \ \ \ \ \ \ \ \ \ \ \ \ \ \ \ \ \ \ \ \ \ \ \ \ \
\ \ \ \ \ \ \ \ \ \ \ \ \ \ \ \ \ \ \ \ \ \ \ \ \ \ \ \ \ \ \ $\braket{\Delta{A}}_{\psi}= \left(\braket{\psi|A^2|\psi} - \braket{\psi|A|\psi}^2\right)^{1/2}$ \ \ \ \ \ \ \ \ \ \ \ \ \ \ \ \ \ \ \ \ \ \ \ \ \ \ \ \ \ \ \ \
\ \ \ \ \ \ } & 
\parbox{8.6cm} {  $\braket{\Delta{A}}^{\phi}_{\psi}=\left(\braket{\psi|A^2|\psi} - |{\braket{A_w}^{\phi}_{\psi}}|^2|{\braket{\phi|\psi}}|^2\right)^{1/2}$  \ \ \ \ \ \ \ \ \ \ \ \ \
\ \ \ \ \ \ \ \ \ \ \ \ \ \ \ \ \ \ \ \ \ \ \ \ \ } \\ \hline

\parbox{2cm} {Zero standard deviation} & 
\parbox{5.6cm} {\ \ \ \ \ \ \ \
\ \ \ \ \ \ \ \ \ \ \ \ \ \ \ \ \ \ \ \ \ \ \ \ \ \ \ \ \ \ \ \ \ \
\ \ \ \ \ \ \ \ \ \ \ \ \ \ \ \ \ \ \ \ \  Only if  $\ket{\psi}$ is an eigenstate of $A$  i.e., $\ket{\psi}\propto A\ket{\psi}$}  \ \ \ \ \
\ \ \ \ \ \ & 
\parbox{8.6cm} { \ \ \ \ \ \ \ \ \ \ \ \ \  Only if  $ \ket{\phi} \propto A\ket{\psi}$}  \\ \hline

\parbox{2cm} {Uncertainty relation} & 
\parbox{5.6cm} { \ \ \ \ \ \ \ \
\ \ \ \ \ \ \ \ \ \ \ \ \ \ \ \ \ \ \ \ \ \ \ \ \ \ \ \ \ \ \ \ \ \
\ $ \braket{\Delta{A}}_{\psi}^{{2}}\braket{\Delta{B}}_{\psi}^{{2}}\geq\left[\frac{1}{2i}\braket{\psi|[A,B]|\psi}\right]^2$ \ \
\ \ \ \ \ \ \ \ \ \ \ \ \ \ \ \ \ \ \ \ \ \ \ \ \ \ \ \ \ \ \ \ \ \ \ \ \ } & 
\parbox{8.6cm} { \ \ \ \ \ \ \ \   $\left(\braket{\Delta{A}}^{\phi}_{\psi}\right)^2\left(\braket{\Delta{B}}^{\phi}_{\psi}\right)^2 \hspace{-1mm}\geq \Big[\frac{1}{2i}\bra{\psi}\hspace{-1mm}[{A},{B}]\hspace{-1mm}\ket{\psi}  - \mathcal{I}m \left(W_{AB}\right) \Big]^2$  \ \ \ \ \ \ \ \ } \\ \hline

\parbox{2cm} {Joint sharp preparation} & 
\parbox{5.6cm} {If $\ket{\psi}$ is the eigenstate of  both $A$ and $B$} & 
\parbox{8.6cm} { If $\ket{\phi}=\frac{A\ket{\psi}}{\sqrt{\braket{\psi|A^2|\psi}}}=\frac{B\ket{\psi}}{\sqrt{\braket{\psi|B^2|\psi}}}$,  \hspace{0.5mm}  up to some phase factors} \\ \hline
\end{tabular}%
\end{center}
\end{table*}

An inequality can be obtained by discarding some of the terms in the summation corresponding to `$k$'  or all the terms  except one term  in  Eq. (\ref{UEPPS-1}) or Eq. (\ref{UEPPS-4}).  It is also possible to obtain an  arbitrarily tight inequality  by discarding the minimum valued  term  inside the summation   in the denominator of Eq. (\ref{UEPPS-1}) for a particular value of `$k$'. . Note that we have to optimize the minimum $\big|\braket{\psi|\frac{A}{\braket{\Delta{A}}^{\phi}_{\psi}}\pm i\frac{B}{\braket{\Delta{B}}^{\phi}_{\psi}}|\phi^{\perp}_{k}}\big|^2$ over all possible choice of basis \{$\ket{\phi_k^{\perp}}^{d-1}_{k=1}$\} in the  subspace  orthogonal to $\ket{\phi}$.\par
 In an experiment, let's assume that a few post-selected states from \{$\ket{\phi_k^{\perp}}\}^{d-1}_{k=1}$  are not detected by the detector because of certain technical difficulties. Using such imprecise experimental data, one may still be able to obtain an uncertainty relation. In that case,  the terms corresponding to the unregistered post-selections in Eq. (\ref{UEPPS-1}) or Eq. (\ref{UEPPS-4}) are to be eliminated.\par
 \subsection{Uncertainty relation for mixed pre-selection in PPS system}
 So far, we have only considered  the  pre-selected state to be  pure in a PPS system.  Let us now generalize the definition of the  standard deviation   and derive  the uncertainty relations  for  mixed pre-selected state in the PPS system.  A direct generalization of the standard deviation defined in  Eq. (\ref{SDPPS-2})  is given by
{ \begin{align}
\braket{\Delta A}^{\phi}_{\rho}=\sqrt{Tr(A^2\rho) - \braket{\phi|A\rho A|\phi}}.\label{MixedPPS-1}
\end{align}
See Ref. \cite{Explanation-2} for the motivation for calling $\braket{\Delta A}^{\phi}_{\rho}$ as a  standard deviation when the pre-selection is a mixed state. If $\rho=\sum_{i}p_i\ket{\psi_i}\bra{\psi_i}$, where $\sum_{i}p_i=1$, then the variance is given by}
 \begin{align}
\left(\braket{\Delta A}^{\phi}_{\rho}\right)^2=\sum_{i}p_i\left(\hspace{-1mm}\braket{\Delta{A}}^{\phi}_{\psi_i}\hspace{-1mm}\right)^2 .\label{MixedPPS-2}
\end{align}\par
\textit{Eq. (\ref{MixedPPS-2})  demonstrates the intriguing fact that, the variance of $A$ in PPS system i.e.,   $\braket{Var A}_{\rho}^{\phi}=(\braket{\Delta A}^{\phi}_{\rho})^2$  {respects} classical mixing of  quantum states}. Mathematically, classical mixing of  quantum states are represented by a density operator. By taking advantage of this property, one can obtain a purely quantum uncertainty relation when the pre-selection $\rho$ is a mixed state  (see Sec. \ref{IV}). It may be noted here that in  standard quantum systems, the variance  $Var A=\braket{\Delta A}^2_{\rho}$  increases, in general, under  the classical mixing of  quantum states.\par
To realize the standard deviation  in PPS system via weak value for mixed pre-selected state{, a generalization of  the standard deviation $ \braket{\Delta{A}}^{\phi}_{\psi}$ given in  Eq. (\ref{SDPPS-4}), can be \emph{defined} as}
 \begin{align}
\braket{\Delta A_w}^{\phi}_{\rho}=\sqrt{Tr(A^2\rho) -|\braket{A_w}^{\phi}_{\rho}|^2\braket{\phi|\rho|\phi}},\label{MixedPPS-3}
\end{align}
where $\braket{A_w}^{\phi}_{\rho}=\frac{\braket{\phi|A\rho|\phi}}{\braket{\phi|\rho |\phi}}$ is the weak value of the operator $A$ when the pre- and post-selections are $\rho$ and $\ket{\phi}$, respectively. {Now,  $\braket{Var A_w}_{\rho}^{\phi}:=(\braket{\Delta A_w}^{\phi}_{\rho})^2$ can be viewed as a variance-like quantity (henceforth called as \emph{generalized variance})  of $A$ involving weak value and, it is,  in general,  different from the variance $\braket{Var A}_{\rho}^{\phi}=(\braket{\Delta A}^{\phi}_{\rho})^2$,  and $\braket{Var A_w}_{\rho}^{\phi}$ is  always non-decreasing  under the classical mixing of quantum states. This property  is certified by the following {proposition}.}
\begin{proposition}
    {The generalized variance $\braket{Var A_w}_{\rho}^{\phi}$  is lower bounded by the variance $\braket{Var A}_{\rho}^{\phi}$, that is 
    \begin{align}
\braket{Var A_w}_{\rho}^{\phi}\geq \sum_{i}p_i\left(\hspace{-1mm}\braket{\Delta{A}}^{\phi}_{\psi_i}\hspace{-1mm}\right)^2 =\braket{Var A}_{\rho}^{\phi},\label{MixedPPS-4}
 \end{align} 
  where  equality holds if the pre-selection  $\rho$ is  a pure state.}
\end{proposition}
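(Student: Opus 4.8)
The plan is to notice that the two variances differ only in a single term, so that the claimed inequality collapses to a Cauchy--Schwarz estimate. First I would make both sides explicit. Substituting $\braket{A_w}^{\phi}_{\rho}=\braket{\phi|A\rho|\phi}/\braket{\phi|\rho|\phi}$ into Eq.~(\ref{MixedPPS-3}) gives
\[
\braket{Var A_w}^{\phi}_{\rho}=\operatorname{Tr}(A^2\rho)-\frac{|\braket{\phi|A\rho|\phi}|^2}{\braket{\phi|\rho|\phi}},
\]
while expanding the right-hand side of (\ref{MixedPPS-4}) with $\rho=\sum_i p_i\ketbra{\psi_i}{\psi_i}$ and Eq.~(\ref{SDPPS-2}) yields
\[
\sum_i p_i\big(\braket{\psi_i|A^2|\psi_i}-|\braket{\phi|A|\psi_i}|^2\big)=\operatorname{Tr}(A^2\rho)-\braket{\phi|A\rho A|\phi}.
\]
Both expressions carry the common term $\operatorname{Tr}(A^2\rho)$, so the inequality to be proved is equivalent to the single statement
\[
\braket{\phi|A\rho A|\phi}\,\braket{\phi|\rho|\phi}\ \geq\ |\braket{\phi|A\rho|\phi}|^2.
\]

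The key step is to read this off as Cauchy--Schwarz. Since $\rho\geq0$ possesses a Hermitian square root, I would introduce $\ket{u}=\rho^{1/2}A\ket{\phi}$ and $\ket{v}=\rho^{1/2}\ket{\phi}$. Using that $A$ and $\rho^{1/2}$ are Hermitian and $\rho^{1/2}\rho^{1/2}=\rho$, one checks $\braket{u|u}=\braket{\phi|A\rho A|\phi}$, $\braket{v|v}=\braket{\phi|\rho|\phi}$ and $\braket{u|v}=\braket{\phi|A\rho|\phi}$, so the displayed inequality is precisely $|\braket{u|v}|^2\leq\braket{u|u}\braket{v|v}$. (Equivalently, one may apply the Cauchy--Schwarz inequality $|\sum_i a_i\overline{b_i}|^2\leq(\sum_i|a_i|^2)(\sum_i|b_i|^2)$ to the sequences $a_i=\sqrt{p_i}\,\braket{\phi|A|\psi_i}$ and $b_i=\sqrt{p_i}\,\braket{\phi|\psi_i}$, which reproduces the bound directly from the eigendecomposition of $\rho$.)

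I do not anticipate a genuine obstacle here; the only care required is the bookkeeping that matches the three inner products to the three operator expectations. For the equality clause, I would specialise to a pure pre-selection $\rho=\ketbra{\psi}{\psi}$, where $\rho^{1/2}=\rho$, so that both $\ket{u}$ and $\ket{v}$ are proportional to $\ket{\psi}$ and hence collinear; Cauchy--Schwarz is then saturated, recovering $\braket{Var A_w}^{\phi}_{\psi}=\braket{Var A}^{\phi}_{\psi}=(\braket{\Delta A}^{\phi}_{\psi})^2$ in agreement with Eq.~(\ref{SDPPS-2}). If desired, the same criterion shows more generally that equality holds exactly when $\rho^{1/2}A\ket{\phi}$ and $\rho^{1/2}\ket{\phi}$ are collinear, which one could record as a remark.
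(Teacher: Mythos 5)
Your proof is correct and follows essentially the same route as the paper: both reduce the claim to the Cauchy--Schwarz bound $|\braket{\phi|A\rho|\phi}|^2\leq\braket{\phi|A\rho A|\phi}\braket{\phi|\rho|\phi}$, which the paper obtains by applying Cauchy--Schwarz to the sequences $\sqrt{p_i}\braket{\phi|A|\psi_i}$ and $\sqrt{p_i}\braket{\psi_i|\phi}$ --- precisely your parenthetical variant --- while your primary $\rho^{1/2}$ packaging is just a decomposition-free restatement of the same step. Your closing remark that equality holds exactly when $\rho^{1/2}A\ket{\phi}$ and $\rho^{1/2}\ket{\phi}$ are collinear is a small addition beyond the paper's statement (which only asserts the pure-state direction) and is consistent with the paper's later observation that equality does not in general force purity.
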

\begin{proof}
Let $\rho=\sum_{i}p_i\ket{\psi_i}\bra{\psi_i}$, then using  Eq. (\ref{MixedPPS-3}) (after using the definition of the weak value for mixed pre-selection), we have 
\begin{align}
&{\braket{Var A_w}_{\rho}^{\phi}}\nonumber\\
&=\left(\braket{\Delta A_w}^{\phi}_{\rho}\right)^2\nonumber\\
&=Tr(A^2\rho) -\frac{|\braket{\phi|A\rho|\phi}|^2}{\braket{\phi|\rho|\phi}}\nonumber\\
&=\sum_{i}p_i\braket{\psi_i|A^2|\psi_i}-\frac{|\sum_{i}\sqrt{p_i}\braket{\phi|A|\psi_i}\sqrt{p_i}\braket{\psi_i|\phi}|^2}{\braket{\phi|\rho|\phi}}\nonumber\\
&\geq \hspace{-1mm}\sum_{i}p_i\hspace{-1mm}\braket{\psi_i|A^2|\psi_i}\hspace{-1mm} -\frac{\left(\sum_{i}{p_i}|\hspace{-1mm}\braket{\phi|A|\psi_i}\hspace{-1mm}|^2\right)\hspace{-1mm}\big(\hspace{-1mm}\sum_{i}{p_i}\hspace{-1mm}\braket{\phi|\psi_i}\hspace{-1mm}\braket{\psi_i|\phi}\hspace{-1mm}\big)}{\braket{\phi|\rho|\phi}}\nonumber\\
&=\sum_{i}p_i\braket{\psi_i|A^2|\psi_i} -\sum_{i}{p_i}|\braket{\phi|A|\psi_i}|^2\nonumber\\
&=\sum_{i}p_i \left(\braket{\Delta A}^{\phi}_{\psi_i}\right)^2=\left(\braket{\Delta A}^{\phi}_{\rho}\right)^2{=\braket{Var A}_{\rho}^{\phi}},\nonumber
\end{align}
where we have used the Cauchy-Schwarz inequality for the complex numbers in the first inequality  and  Eq. (\ref{MixedPPS-2}) in the last line.  When $\rho$ is pure, equality  holds automatically.
\end{proof}
 {As $\braket{Var A}_{\rho}^{\phi}$   does neither increase nor decrease under  classical mixing of  quantum states, the inequality $\braket{Var A_w}_{\rho}^{\phi}\geq \braket{Var A}_{\rho}^{\phi}$  clearly implies  that under  classical mixing  of quantum states, the generalized variance $\braket{Var A_w}_{\rho}^{\phi}$  is always  non-decreasing. In fact, one can easily verify that $\braket{Var A_w}_{\rho}^{\phi}$ is sum of the \emph{quantum uncertainty} $\braket{Var A}_{\rho}^{\phi}$ and   the \emph{classical uncertainty} $C(\rho, A, \phi):=\braket{\phi|A\rho A|\phi}-|\braket{A_w}_{\rho}^{\phi}|^2\braket{\phi|\rho|\phi}$, both of which will be discussed in detail  in Sec. \ref{IV C}.}\par
It is important to note that,  in general  the equality in Eq. (\ref{MixedPPS-4}) does not imply that the pre-selection $\rho$ is pure.  In Sec. \ref{IV A} (see below), we  show that only  in the  qubit  system,  equality of Eq. (\ref{MixedPPS-4}) implies that  the pre-selection is a pure state. To make an equality in Eq. (\ref{MixedPPS-4})} in higher dimensional systems, we  need to put conditions on the observable and post-selection (see below in Sec. \ref{IV A} ).\par
The uncertainty relation (\ref{URPPS-1}) or (\ref{URPPS-5}) can be generalized   for mixed pre-selection $\rho$ also which is given by 
\begin{align}
 \left(\braket{\Delta{A}}^{\phi}_{\rho}\right)^2 \left(\braket{\Delta{B}}^{\phi}_{\rho}\right)^2 &\geq  \left[\frac{1}{2i}\braket{[A,B]}_{\rho}-\mathcal{I}m W_{AB}\right]^2,\label{MixedPPS-5}
\end{align}
where $W_{AB}=\braket{\phi|B\rho A|\phi}$. See the derivation of Eq. (\ref{MixedPPS-5}) in Appendix \ref{B}.  Eq. (\ref{MixedPPS-5}) holds also when the definition of standard deviation defined in Eq. (\ref{MixedPPS-3}) is considered
due  the \emph{Proposition 3}.\par
See TABLE \ref{Table} for the comparison  of different properties between standard quantum systems and PPS systems.

\section{Applications}\label{IV}
Suitably post-selected systems can offer some essential information regarding quantum systems. Below, we provide a few applications of standard deviations and uncertainty relations in PPS systems.

\subsection{Detection of mixedness of an unknown state}\label{IV A}
{Practically, partial information about a quantum state is often of great help. For example,  whether an interaction has taken place with the environment, one must verify the purity of the system's state.  Quantum state tomography (QST) is   the most resource intensive way  to verify the purity of a quantum state but here we provide   some results  that can be used to  detect purity of the quantum state  using   less resources compared to the QST. }\par
 We will use the inequality (\ref{MixedPPS-4}) in \emph{Proposition 3} to detect the mixedness of an unknown pre-selected state in a PPS system.
 The proofs of the following \emph{Lemmas} are given in  Appendix \ref{C}. 
 \begin{lemma}
     Qubit system: In the case of a two-level quantum system (i.e., a qubit), equality in Eq. (\ref{MixedPPS-4}) holds if and only if the pre-selected state $\rho$ is pure irrespective of choice of the observable $A$ and the post-selected state $\ket{\phi}$.
 \end{lemma}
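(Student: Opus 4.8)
The plan is to isolate the single inequality used in the proof of \emph{Proposition 3} and read off its saturation condition, then exploit the fact that a qubit Hilbert space is only two-dimensional. Writing $\rho=\sum_i p_i\ketbra{\psi_i}{\psi_i}$ in spectral form with $\{\ket{\psi_i}\}$ orthonormal and $p_i>0$, the only non-trivial step in \emph{Proposition 3} is the Cauchy–Schwarz inequality applied to the numbers $u_i=\sqrt{p_i}\braket{\phi|A|\psi_i}$ and $v_i=\sqrt{p_i}\braket{\phi|\psi_i}$. Consequently $\braket{Var A_w}_{\rho}^{\phi}=\braket{Var A}_{\rho}^{\phi}$ if and only if $\big|\sum_i u_i\bar v_i\big|^2=\big(\sum_i|u_i|^2\big)\big(\sum_i|v_i|^2\big)$, i.e. if and only if the sequences $(u_i)$ and $(v_i)$ are proportional. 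I would therefore record the saturation condition as
\begin{align}
\braket{\phi|A|\psi_i}=\mu\,\braket{\phi|\psi_i}\quad\text{for every }i\text{ with }p_i>0,\nonumber
\end{align}
for some scalar $\mu$.

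The \emph{if} part is then immediate and requires nothing about the dimension: when $\rho$ is pure the spectral sum has a single term, the Cauchy–Schwarz step collapses to an identity between two scalars, and equality in Eq. (\ref{MixedPPS-4}) holds for \emph{every} $A$ and $\ket{\phi}$. This recovers the pure-state remark already made in \emph{Proposition 3}.

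For the \emph{only if} part I would argue by contraposition, and here the two-dimensionality is essential. If the qubit state $\rho$ is mixed it has rank two, so its eigenvectors $\ket{\psi_1},\ket{\psi_2}$ form an orthonormal basis of the whole space. The saturation condition then says $\bra{\phi}(A-\mu I)\ket{\psi_1}=\bra{\phi}(A-\mu I)\ket{\psi_2}=0$, and since these two vectors span $\mathcal{H}$ the bra $\bra{\phi}(A-\mu I)$ must vanish identically; by hermiticity of $A$ this forces $A\ket{\phi}=\mu\ket{\phi}$ with $\mu$ real. Hence a genuinely mixed qubit can saturate Eq. (\ref{MixedPPS-4}) only in the degenerate situation where $\ket{\phi}$ is an eigenstate of $A$. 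Because a non-degenerate $A$ possesses only two eigenstates among the continuum of qubit pure states, one can always choose $A$ and a post-selection $\ket{\phi}$ violating this, so equality cannot hold irrespective of the choice of $A$ and $\ket{\phi}$; equivalently, if it does hold for all $A,\ket{\phi}$ then $\rho$ must be pure.

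The step I expect to be the main obstacle is precisely this eigenstate loophole: the pointwise statement ``equality $\Rightarrow$ $\rho$ pure'' is \emph{false}, because a mixed $\rho$ together with a post-selection $\ket{\phi}$ equal to an eigenstate of $A$ already saturates the bound (e.g. $\rho=\tfrac12 I$, $A=\sigma_z$, $\ket{\phi}=\ket{0}$). The content of the lemma is therefore the quantified version, and the care needed is to justify that ruling out this exceptional family of choices—guaranteed by the universal quantifier ``irrespective of $A$ and $\ket{\phi}$''—is exactly what pins $\rho$ to purity in dimension two, whereas in higher dimensions the spanning argument breaks down and extra conditions on $A$ and $\ket{\phi}$ become unavoidable.
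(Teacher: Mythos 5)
Your proof is correct and follows essentially the same route as the paper's own: the only inequality in the proof of \emph{Proposition 3} is a single Cauchy--Schwarz step, its saturation condition is exactly the paper's Eq.~(\ref{C2}), $\braket{\phi|A|\psi_i}=\lambda\braket{\phi|\psi_i}$ for every state in the decomposition, and in two dimensions the requirement that all the $\ket{\psi_i}$ be orthogonal to the single vector $(A-\lambda I)\ket{\phi}$ confines them to a one-dimensional subspace. Where you genuinely add something is in the degenerate case: the paper's proof tacitly assumes that $\ket{\widetilde{\phi}^{\lambda}_{A}}=(A-\lambda I)\ket{\phi}$ is a nonzero vector, whereas you correctly note that it vanishes precisely when $\ket{\phi}$ is an eigenstate of $A$ (with the real eigenvalue playing the role of $\lambda$), in which case the orthogonality constraint is vacuous and a mixed $\rho$ still saturates the bound. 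Your example checks out: for $\rho=\tfrac{1}{2}I$, $A=\sigma_z$, $\ket{\phi}=\ket{0}$ one finds $\braket{Var A_w}_{\rho}^{\phi}=\braket{Var A}_{\rho}^{\phi}=\tfrac{1}{2}$. Hence the pointwise reading of the Lemma (``for every $A$ and $\ket{\phi}$, equality $\Leftrightarrow$ purity'') is false as stated, and the quantified reading you adopt --- equality for all choices, or equivalently for a single pair with $\ket{\phi}$ not an eigenstate of $A$, iff $\rho$ is pure --- is the version the spanning argument actually establishes. This caveat is not cosmetic for the application in Sec.~\ref{IV A}: to certify purity of an unknown qubit from one test, the post-selection must be chosen to not be an eigenstate of the measured observable, a condition the paper's proof omits.
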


 \begin{lemma}
     Qutrit system: If for an observable $A$ and a complete orthonormal basis $\{\ket{\phi_k}\}_{k=1}^3$ (to be used as post- selected states) of any three-level quantum system (i.e., a qutrit), and the condition $\braket{\phi_1|A|\phi_{2}} = 0$ also holds good, 
     then equality in Eq. (\ref{MixedPPS-4}) holds good if and only if the pre-selected state $\rho$ is pure.
 \end{lemma}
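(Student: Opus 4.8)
The plan is to extract the equality condition directly from the Cauchy–Schwarz step that produced \emph{Proposition 3}, and then to run a rank argument on $\rho$. The inequality (\ref{MixedPPS-4}) is the statement $|\braket{\phi|A\rho|\phi}|^2\le\braket{\phi|A\rho A|\phi}\,\braket{\phi|\rho|\phi}$, i.e.\ Cauchy–Schwarz for the vectors $\sqrt{\rho}\,A\ket{\phi}$ and $\sqrt{\rho}\ket{\phi}$; hence for a post-selection $\ket{\phi}$ equality holds if and only if these two vectors are proportional, equivalently $(A-\lambda)\ket{\phi}\in\ker\rho$ for some $\lambda\in\mathbb{C}$. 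The ``if'' direction is then immediate: for a pure pre-selection $\rho=\ket{\chi}\bra{\chi}$ both $\sqrt{\rho}\,A\ket{\phi_k}$ and $\sqrt{\rho}\ket{\phi_k}$ are multiples of $\ket{\chi}$, so equality holds at every post-selection, in any dimension.

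For the converse I would take the three basis elements $\ket{\phi_1},\ket{\phi_2},\ket{\phi_3}$ as post-selections and assume equality at each, giving scalars $\lambda_k$ with $(A-\lambda_k)\ket{\phi_k}\in\ker\rho$. Let $r$ be the rank of $\rho$; I would rule out $r=3$ and $r=2$. If $r=3$ then $\ker\rho=\{0\}$, so $A\ket{\phi_k}=\lambda_k\ket{\phi_k}$ for each $k$, i.e.\ $A$ is diagonal in the basis; in particular $\braket{\phi_1|A|\phi_3}=0$, contradicting the non-degeneracy of $A$ discussed below.

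The heart of the argument is $r=2$. Writing $\ker\rho=\mathbb{C}\ket{n}$ and $\nu_j=\braket{\phi_j|n}$, the equality conditions read $A\ket{\phi_k}=\lambda_k\ket{\phi_k}+c_k\ket{n}$, so that the off-diagonal matrix elements are $\braket{\phi_j|A|\phi_k}=c_k\nu_j$ for $j\neq k$. Imposing Hermiticity, $\braket{\phi_j|A|\phi_k}=\overline{\braket{\phi_k|A|\phi_j}}$, turns these into $c_k\nu_j=\overline{c_j}\,\overline{\nu_k}$. Now the hypothesis $\braket{\phi_1|A|\phi_2}=c_2\nu_1=0$, used together with $\braket{\phi_1|A|\phi_3}=c_3\nu_1\neq0$ and $\braket{\phi_2|A|\phi_3}=c_3\nu_2\neq0$, closes the argument: the last two relations force $\nu_1\neq0$ and $\nu_2\neq0$, so $c_2=0$; but Hermiticity then gives $\braket{\phi_2|A|\phi_3}=\overline{c_2}\,\overline{\nu_3}=0$, a contradiction. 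Hence $r=1$ and $\rho$ is pure.

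I expect the $r=2$ case to be the main obstacle, and it is also where the hypothesis on $A$ really acts. Geometrically, equality at all three post-selections forces the off-diagonal part of $A$ in the $\{\ket{\phi_k}\}$ basis to have the rank-one form $\nu_j\overline{\nu_k}$ (up to a real scale set by $\ket{n}$), and the pattern ``one off-diagonal zero, the other two non-zero'' is incompatible with this form. This is precisely why the statement needs, in addition to $\braket{\phi_1|A|\phi_2}=0$, that $\braket{\phi_1|A|\phi_3}$ and $\braket{\phi_2|A|\phi_3}$ be non-zero: if instead $\braket{\phi_2|A|\phi_3}$ also vanished one could take $A=\sum_k\lambda_k\ket{\phi_k}\bra{\phi_k}+\alpha\,\ket{n}\bra{n}$ with $\ket{n}\perp\ket{\phi_2}$ and any rank-two $\rho$ with $\ker\rho=\mathbb{C}\ket{n}$, and equality would hold at all three post-selections with $\rho$ mixed. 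Pinning down this non-degeneracy condition, and the clean derivation of the rank-one off-diagonal structure from Hermiticity, is the step I would be most careful about.
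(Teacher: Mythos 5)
Your proof is correct and follows the same overall strategy as the paper's: extract the Cauchy--Schwarz equality condition (your invariant form $(A-\lambda_k)\ket{\phi_k}\in\ker\rho$ is exactly the paper's condition $\braket{\phi_k|A|\psi_i}=\lambda_k\braket{\phi_k|\psi_i}$ for every component $\ket{\psi_i}$ of $\rho$, restated independently of the decomposition), observe that purity gives equality trivially, and then run a case analysis that forces $\rho$ to be rank one. The difference is organizational but consequential: the paper stratifies by the dimension of the span of the three vectors $(A-\lambda_k)\ket{\phi_k}$ (possibilities ($i$)--($iii$) in Appendix \ref{C}), whereas you stratify by $\mathrm{rank}(\rho)$ and use Hermiticity to extract the rank-one structure $\braket{\phi_j|A|\phi_k}=c_k\nu_j$ of the off-diagonal part. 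Your version is tighter in two respects. First, the paper's trichotomy omits the case in which all three vectors $(A-\lambda_k)\ket{\phi_k}$ vanish (a zero-dimensional span): the discussion around Eqs. (\ref{C7})--(\ref{C12}) declares this situation a contradiction, but in fact it places no constraint whatsoever on $\rho$, and it is precisely the regime of your counterexample. Second, that counterexample ($A=\sum_k\lambda_k\ket{\phi_k}\bra{\phi_k}+\alpha\ket{n}\bra{n}$ with $\ket{n}\perp\ket{\phi_2}$ and $\rho$ any rank-two state with kernel $\mathbb{C}\ket{n}$) is genuine: it satisfies $\braket{\phi_1|A|\phi_2}=0$, achieves equality in Eq. (\ref{MixedPPS-4}) at all three post-selections, and yet has $\rho$ mixed. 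So the non-degeneracy hypotheses $\braket{\phi_1|A|\phi_3}\neq0$ and $\braket{\phi_2|A|\phi_3}\neq0$, which both your argument and the paper's use (the paper invokes them only to dismiss sub-cases), must actually be added to the statement of the lemma for either proof to close; with that amendment your argument is complete. The one residual point worth a sentence is the requirement $\braket{\phi_k|\rho|\phi_k}\neq0$ for each $k$ (the paper's ``valid post-selection'' clause): if $\sqrt{\rho}\ket{\phi_k}=0$, Cauchy--Schwarz equality holds vacuously without implying $(A-\lambda_k)\ket{\phi_k}\in\ker\rho$, so this degenerate case must be excluded before your rank argument applies.
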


\begin{lemma}
    Qubit-qubit system: Consider any  two non orthogonal post-selections  $\ket{\phi_B}$ and $\ket{\phi^{\prime}_B}$ in the subsystem B.  For any observable $A$,  equality of $\braket{\Delta {(\hspace{-.5mm}A\hspace{-.5mm}\otimes \hspace{-.5mm}I)_w}}^{\phi\hspace{-.5mm}_{A\hspace{-.5mm}B}}_{\rho}$ and $\braket{\Delta {(\hspace{-.5mm}A\hspace{-.5mm}\otimes \hspace{-.5mm}I)}}^{\phi\hspace{-.5mm}_{A\hspace{-.5mm}B}}_{\rho}$ and separately of  $\braket{\Delta {(\hspace{-.5mm}A\hspace{-.5mm}\otimes \hspace{-.5mm}I)_w}}^{\phi^{\prime}\hspace{-.5mm}_{A\hspace{-.5mm}B}}_{\rho}$ and $\braket{\Delta {(\hspace{-.5mm}A\hspace{-.5mm}\otimes \hspace{-.5mm}I)}}^{\phi^{\prime}\hspace{-.5mm}_{A\hspace{-.5mm}B}}_{\rho}$  hold only when the $2\otimes 2$ pre-selected  state  $\rho$ is pure, where $\ket{\phi\hspace{-.5mm}_{A\hspace{-.5mm}B}}=\ket{\phi_A}\ket{\phi_B}$ and $\ket{\phi^{\prime}\hspace{-.5mm}_{A\hspace{-.5mm}B}}=\ket{\phi_A}\ket{\phi_B^{\prime}}$.   {Two non orthogonal post-selections  $\ket{\phi_B}$ and $\ket{\phi^{\prime}_B}$ in the subsystem B are required here due to the fact that there  exists an unique $2\otimes 2$ mixed  density matrix which  satisfies the equality of Eq. (\ref{MixedPPS-4}) }
\end{lemma}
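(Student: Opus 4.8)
The plan is to reduce the claimed equality to the saturation of the Cauchy--Schwarz inequality that already appears in the proof of \emph{Proposition 3}, and then to exploit the product structure of the post-selections by a dimension count on $\ker\rho$. Writing a decomposition $\rho=\sum_i p_i\ketbra{\psi_i}{\psi_i}$ and setting $M=A\otimes I$, the single inequality step in \emph{Proposition 3} is Cauchy--Schwarz applied to the complex vectors with components $a_i=\sqrt{p_i}\braket{\phi|M|\psi_i}$ and $b_i=\sqrt{p_i}\braket{\phi|\psi_i}$. Hence the equality $\braket{Var M_w}^{\phi}_{\rho}=\braket{Var M}^{\phi}_{\rho}$ holds if and only if $(a_i)\propto(b_i)$, i.e. there is a scalar $\lambda$ (necessarily the weak value $\braket{M_w}^{\phi}_{\rho}$) with $\braket{\phi|M|\psi_i}=\lambda\braket{\phi|\psi_i}$ for every $i$ in the support of $\rho$. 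Equivalently, the vector $((A-\bar\lambda)\otimes I)\ket{\phi}$ is orthogonal to $\operatorname{supp}(\rho)$, i.e. it lies in $\ker\rho$. I would establish this reformulation first, since everything after it is geometry in $\mathcal{H}_A\otimes\mathcal{H}_B$.

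Next I would insert the product form of the post-selections. For $\ket{\phi_{AB}}=\ket{\phi_A}\ket{\phi_B}$ the orthogonal vector factorizes as $\ket{u}\otimes\ket{\phi_B}$ with $\ket{u}=(A-\bar\lambda)\ket{\phi_A}\in\mathcal{H}_A$, because $A\otimes I$ acts trivially on $B$. Thus the equality for $\ket{\phi_{AB}}$ says precisely that $\ker\rho$ contains a product vector whose $B$-part is $\ket{\phi_B}$, and the equality for $\ket{\phi'_{AB}}=\ket{\phi_A}\ket{\phi'_B}$ says $\ker\rho$ contains a product vector with $B$-part $\ket{\phi'_B}$. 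Since $\ket{\phi_B}$ and $\ket{\phi'_B}$ are non-orthogonal and not proportional, these are two linearly independent product vectors living in $\ker\rho$ (a product vector determines its $B$-part up to a scalar).

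The conclusion then follows from a rank analysis of $\rho$. If $\rho$ is pure then $\operatorname{supp}(\rho)$ is one-dimensional and \emph{Proposition 3} already guarantees equality, so only the mixed cases must be excluded. If $\operatorname{rank}\rho=4$ then $\ker\rho=\{0\}$ contains no nonzero product vector, a contradiction; if $\operatorname{rank}\rho=3$ then $\ker\rho$ is one-dimensional, so its single ray would have to carry the two distinct $B$-parts $\ket{\phi_B}$ and $\ket{\phi'_B}$ at once, which is impossible. This disposes of ranks $3$ and $4$ cleanly.

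The remaining rank-$2$ case is the main obstacle. Here $\ker\rho$ is a two-dimensional subspace of $\mathcal{H}_A\otimes\mathcal{H}_B$, and such a plane meets the set of product vectors (the Segre quadric) in at most two rays, whose $A$-parts $\ket{u},\ket{u'}$ are moreover constrained to lie on the fixed line $\operatorname{span}(\ket{\phi_A},A\ket{\phi_A})$. The delicate point is to decide whether a two-dimensional kernel can simultaneously accommodate product vectors with the two prescribed $B$-parts $\ket{\phi_B}$ and $\ket{\phi'_B}$; this is exactly where the non-orthogonality hypothesis and the ``unique mixed density matrix'' counting must be used, and I expect the bulk of the work to sit here. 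A direct dimension count alone does not settle it (a generic rank-$2$ kernel does contain two product rays), so the argument has to pin down the admissible configurations precisely and invoke a genericity/uniqueness statement on the pair of post-selections. Once the rank-$2$ possibility is excluded, only $\operatorname{rank}\rho=1$ survives, proving that $\rho$ must be pure.
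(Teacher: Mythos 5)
Your reduction of the equality to a kernel condition --- namely that $\bigl((A-\bar\lambda)\ket{\phi_A}\bigr)\otimes\ket{\phi_B}$ must lie in $\ker\rho$ --- is correct, and it is equivalent to the paper's own reduction (the paper instead collapses onto subsystem $A$ and invokes \emph{Lemma 1} to say the equality holds iff $\rho_A^{\phi_B}=\braket{\phi_B|\rho|\phi_B}$ is rank one). Your treatment of ranks $3$ and $4$ is also fine. The genuine gap is exactly where you place it: the rank-$2$ case. But it is not merely unfinished work --- it cannot be closed, because the exclusion you would need is false. Take $\ket{\phi_B}=\ket{0}$, $\ket{\phi'_B}=(\ket{0}+\ket{1})/\sqrt2$ and
\[
\rho=\tfrac12\,\ketbra{0}{0}\otimes\ketbra{1}{1}+\tfrac12\,\ketbra{1}{1}\otimes\ketbra{-}{-},\qquad \ket{-}=(\ket{0}-\ket{1})/\sqrt2 .
\]
Then $\braket{\phi_B|\rho|\phi_B}=\tfrac14\ketbra{1}{1}$ and $\braket{\phi'_B|\rho|\phi'_B}=\tfrac14\ketbra{0}{0}$ are both rank one, so the Cauchy--Schwarz step saturates and both equalities hold for \emph{every} Hermitian $A$ and every $\ket{\phi_A}$ with $\braket{0|\phi_A},\braket{1|\phi_A}\neq0$, yet $\rho$ has rank $2$. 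Even more simply, $\rho=\ketbra{\psi_A}{\psi_A}\otimes\tfrac{I}{2}$ satisfies both equalities for all product post-selections and all $A\otimes I$, since the relevant quantities factor through subsystem $A$ and never see the mixedness of the $B$ marginal. So the two-dimensional kernel $K=\mathrm{span}(\ket{u}\ket{\phi_B},\ket{u'}\ket{\phi'_B})$ you were worried about genuinely is realized by legitimate rank-$2$ states, and no genericity or uniqueness argument on the pair $(\ket{\phi_B},\ket{\phi'_B})$ will rescue the claim.

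For comparison, the paper's proof gets past this point only by asserting that a mixed $2\otimes2$ state with $\braket{\phi_B|\rho|\phi_B}$ pure must have the special form $\sum_{i=1}^{2}p_i\ketbra{\psi_A^i}{\psi_A^i}\otimes\ketbra{\phi_B^i}{\phi_B^i}$ with $\{\ket{\phi_B^i}\}$ an orthonormal basis containing $\ket{\phi_B}$ --- the ``unique mixed density matrix'' alluded to in the statement. The examples above violate that characterization, so the paper's proof contains the same unestablished (and in fact false) exclusion that you, correctly, declined to wave through. Your instinct that a dimension count alone does not settle the rank-$2$ case is the right one; the honest conclusion is that the lemma requires a stronger hypothesis (for instance an entangled post-selection $\ket{\phi_{AB}}$, an observable that acts nontrivially on $B$, or quantification over all $\ket{\phi_B}$) before either your strategy or the paper's can be completed.
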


    \begin{lemma}
Qubit-qutrit system: If for an observable $A$ and any complete orthonormal basis $\{\ket{\phi^k_A}\}_{k=1}^3$ (to be used as post-selected states) for a qutrit, and the condition $\braket{\phi_A^1|A|\phi_A^2} = 0$ is considered, 
then equality of $\braket{\Delta {(\hspace{-.5mm}A\hspace{-.5mm}\otimes \hspace{-.5mm}I)_w}}^{\phi\hspace{-.5mm}_{A\hspace{-.5mm}B}}_{\rho}$ and $\braket{\Delta {(\hspace{-.5mm}A\hspace{-.5mm}\otimes \hspace{-.5mm}I)}}^{\phi\hspace{-.5mm}_{A\hspace{-.5mm}B}}_{\rho}$ and separately of $\braket{\Delta {(\hspace{-.5mm}A\hspace{-.5mm}\otimes \hspace{-.5mm}I)_w}}^{\phi^{\prime}\hspace{-.5mm}_{A\hspace{-.5mm}B}}_{\rho}$ and $\braket{\Delta {(\hspace{-.5mm}A\hspace{-.5mm}\otimes \hspace{-.5mm}I)}}^{\phi^{\prime}\hspace{-.5mm}_{A\hspace{-.5mm}B}}_{\rho}$  hold  if and only if the $3 \otimes 2$ pre-selected  state $\rho$ is pure.
\end{lemma}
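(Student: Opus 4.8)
The plan is to prove the two implications of the biconditional separately. The ``if'' direction is immediate from \emph{Proposition 3}: if $\rho$ is pure then equality in Eq.~(\ref{MixedPPS-4}) holds for \emph{every} post-selection, in particular for $\ket{\phi_{AB}}$ and $\ket{\phi'_{AB}}$. Hence the whole content is the ``only if'' direction, which I would attack by contraposition, assuming $\rho$ is mixed, i.e. $\dim S\geq 2$ for $S:=\mathrm{supp}(\rho)$, and deriving that the two equalities cannot both hold. The first step is to convert each equality into a geometric condition on $S$. Tracing the Cauchy--Schwarz step in the proof of \emph{Proposition 3}, equality $\braket{\Delta (A\otimes I)_w}^{\Phi}_{\rho}=\braket{\Delta (A\otimes I)}^{\Phi}_{\rho}$ for a post-selection $\ket{\Phi}$ holds iff there is a scalar $c$ with $\braket{\Phi|(A\otimes I)|\Psi_i}=c\braket{\Phi|\Psi_i}$ for every eigenvector $\ket{\Psi_i}$ of $\rho$ in $S$; equivalently $\big((A-\bar{c})\otimes I\big)\ket{\Phi}\perp S$. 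For the two product post-selections this reads $\big((A-\bar{c}_k)\ket{\phi_A^{k}}\big)\otimes\ket{\phi_B}\perp S$ and $\big((A-\bar{c}'_k)\ket{\phi_A^{k}}\big)\otimes\ket{\phi_B'}\perp S$.

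The second step is a reduction to the qutrit case already handled in \emph{Lemma 2}. I would introduce the contracted subspaces $S_B:=(I\otimes\bra{\phi_B})S$ and $S_{B'}:=(I\otimes\bra{\phi_B'})S$ of the qutrit factor. Contracting the above orthogonality relations over the qubit index shows they are equivalent to $(A-\bar{c}_k)\ket{\phi_A^{k}}\perp S_B$ and $(A-\bar{c}'_k)\ket{\phi_A^{k}}\perp S_{B'}$, which are exactly the conditions of \emph{Lemma 2} with the qutrit support replaced by $S_B$, respectively $S_{B'}$. Running the linear-algebra casework of \emph{Lemma 2} --- which uses $\braket{\phi_A^{1}|A|\phi_A^{2}}=0$ to decouple $\ket{\phi_A^{1}}$ and $\ket{\phi_A^{2}}$ and the off-diagonal entries of $A$ in the basis $\{\ket{\phi_A^{k}}\}$ to exclude two- and three-dimensional contractions --- I would conclude $\dim S_B\leq 1$ and $\dim S_{B'}\leq 1$. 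Since $\ket{\phi_B}$ and $\ket{\phi_B'}$ are non-orthogonal and distinct they span the qubit, so the map $\ket{\Psi}\mapsto\big((I\otimes\bra{\phi_B})\ket{\Psi},(I\otimes\bra{\phi_B'})\ket{\Psi}\big)$ is injective on $S$, whence $\dim S\leq\dim S_B+\dim S_{B'}\leq 2$.

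The third and decisive step --- and the reason two \emph{non-orthogonal} qubit post-selections are needed --- is to rule out the surviving possibility $\dim S=2$. In that case both contractions are one-dimensional and the two restricted kernels are distinct lines, since their generators are product vectors with qubit parts $\ket{\phi_B^{\perp}}\not\parallel\ket{\phi_B'^{\perp}}$; this forces $S=\mathrm{span}\{\ket{w_1}\otimes\ket{\phi_B^{\perp}},\,\ket{w_2}\otimes\ket{\phi_B'^{\perp}}\}$. I would then substitute this form back into the \emph{uncontracted} orthogonality relations: because $\braket{\phi_B'^{\perp}|\phi_B}\neq 0$ and $\braket{\phi_B^{\perp}|\phi_B'}\neq 0$ the cross terms do not drop out, and they impose qutrit-type conditions simultaneously on $\ket{w_1}$ and $\ket{w_2}$ that, combined with $\braket{\phi_A^{1}|A|\phi_A^{2}}=0$, I aim to show are incompatible with $S$ being genuinely two-dimensional, leaving $\dim S=1$, i.e. $\rho$ pure.

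I expect this last exclusion to be the main obstacle. The per-factor bound of Step two only yields $\dim S\leq 2$, so the real work is isolated in the dimension-two case, and it is delicate precisely because product-structured supports are exactly the configurations a single post-selection cannot detect --- which is why a second, non-orthogonal $\ket{\phi_B'}$ is indispensable. One must also handle degenerate data with care: if some $\ket{\phi_A^{k}}$ is an eigenvector of $A$ then $\ket{\phi_A^{k}}\otimes\ket{\cdot}$ is an eigenvector of $A\otimes I$ and the equality becomes automatic for that index, so such cases have to be excluded for the argument to close. Verifying that the joint use of the condition $\braket{\phi_A^{1}|A|\phi_A^{2}}=0$ and the non-orthogonality $\braket{\phi_B|\phi_B'}\neq 0$ genuinely forbids every two-dimensional $S$ is the crux of the proof and the step I would scrutinize most carefully.
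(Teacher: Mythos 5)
Your Steps 1 and 2 are correct and coincide with what the paper actually does: the difference of the two variances for the post-selection $\ket{\phi_A^k}\otimes\ket{\phi_B}$ depends only on the collapsed qutrit operator $\rho_A^{\phi_B}=\braket{\phi_B|\rho|\phi_B}$, the casework of Lemma 2 gives $\dim S_B\le 1$ and $\dim S_{B'}\le 1$, and the joint contraction bounds $\dim S\le 2$. The gap sits exactly where you placed it, in Step 3, and it cannot be closed: the two-dimensional supports you single out are genuine counterexamples, not configurations awaiting exclusion. If $S=\mathrm{span}\{\ket{w_1}\otimes\ket{\phi_B^{\perp}},\,\ket{w_2}\otimes\ket{\phi_B'^{\perp}}\}$ then $\rho_A^{\phi_B}\propto\ket{w_2}\bra{w_2}$ and $\rho_A^{\phi_B'}\propto\ket{w_1}\bra{w_1}$ are rank one, and a rank-one collapsed state satisfies the Cauchy--Schwarz equality of \emph{Proposition 3} automatically for \emph{every} qutrit post-selection: with $\rho_A^{\phi_B}=q\ket{w_2}\bra{w_2}$, both $|\braket{\phi_A^k|A\rho_A^{\phi_B}|\phi_A^k}|^2/\braket{\phi_A^k|\rho_A^{\phi_B}|\phi_A^k}$ and $\braket{\phi_A^k|A\rho_A^{\phi_B}A|\phi_A^k}$ equal $q|\braket{\phi_A^k|A|w_2}|^2$. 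In your language, the cross terms only demand $\braket{w_2|(A-\bar c_k)|\phi_A^k}=0$ for \emph{some} scalar $c_k$, which is solvable for any $\ket{w_2}$ with $\braket{w_2|\phi_A^k}\ne 0$ by taking $\bar c_k=\braket{w_2|A|\phi_A^k}/\braket{w_2|\phi_A^k}$; the hypothesis $\braket{\phi_A^1|A|\phi_A^2}=0$ never gets a chance to act. The starkest instance is $\rho=\ket{w}\bra{w}\otimes\tfrac{1}{2}I$ (i.e.\ $\ket{w_1}=\ket{w_2}=\ket{w}$): it is mixed, yet every one of the required equalities holds for every choice of product post-selections, simply because $A\otimes I$ is blind to the qubit factor. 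So the ``only if'' direction of the Lemma is false and no completion of your Step 3 exists.

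You should also know that the paper does not close this gap either. Its proof of Lemma 4 is a one-sentence reference to the treatment of Lemma 3, and the load-bearing claim there --- that a mixed $\rho$ with $\rho_A^{\phi_B}$ pure must have the form $\sum_i p_i\ket{\psi_A^i}\bra{\psi_A^i}\otimes\ket{\phi_B^i}\bra{\phi_B^i}$ with $\{\ket{\phi_B^i}\}$ orthonormal --- is asserted as ``easy to see'' but is false: purity of $\rho_A^{\phi_B}$ only forces $S$ to contain one vector of the form $\ket{w}\otimes\ket{\phi_B^{\perp}}$, with the complementary direction of $S$ unconstrained. Your instinct that product-structured supports are precisely what these post-selections cannot detect, and that the $\dim S=2$ case is the crux, was the right one; the honest outcome of that scrutiny is that the statement must be weakened (e.g.\ to certifying purity of the collapsed qutrit states $\rho_A^{\phi_B}$, $\rho_A^{\phi_B'}$) or supplemented with observables acting nontrivially on the qubit factor.
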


 Extension of  this method for higher dimensional systems  will require  more conditions to be imposed on the observable and post-selections. So it might be difficult to apply our method for higher dimensions. To overcome this difficulties,  Eq. (\ref{URPPS-1}) or (\ref{URPPS-5}) can be used to detect the mixedness of the initially prepared states. Note that,  Mal \emph{et al.} have used   the stronger version of the RHUR (\ref{RHUR-5}) to do so \cite{Mal-2013}. \\

\subsection{Stronger Uncertainty Relation}\label{IV B}
\emph{Motivation.} If, for example, the initially prepared state of the system is an eigenstate of one of the two incompatible observables $A$ and $B$, both the sides of the RHUR (\ref{RHUR-4}) becomes trivial (i.e., zero).  For certain states, a trivial  lower bound is always possible because the right side of the relation (\ref{RHUR-4}) contains the average of the commutator of incompatible observables. For such cases, the RHUR (\ref{RHUR-4})  does not capture the incompatibility of the non-commuting observables. One can think of adding  \emph{Schr$\ddot{o}$dinger's term}  in the RHUR  but still this can be become  trivial (e.g.,  when the prepared states is an eigenstate of either $A$ or $B$). So, none of them are unquestionably appropriate to capture the incompatibility of the non-commuting observables. \par

\begin{figure}[H]
\centering
\includegraphics[scale=0.2]{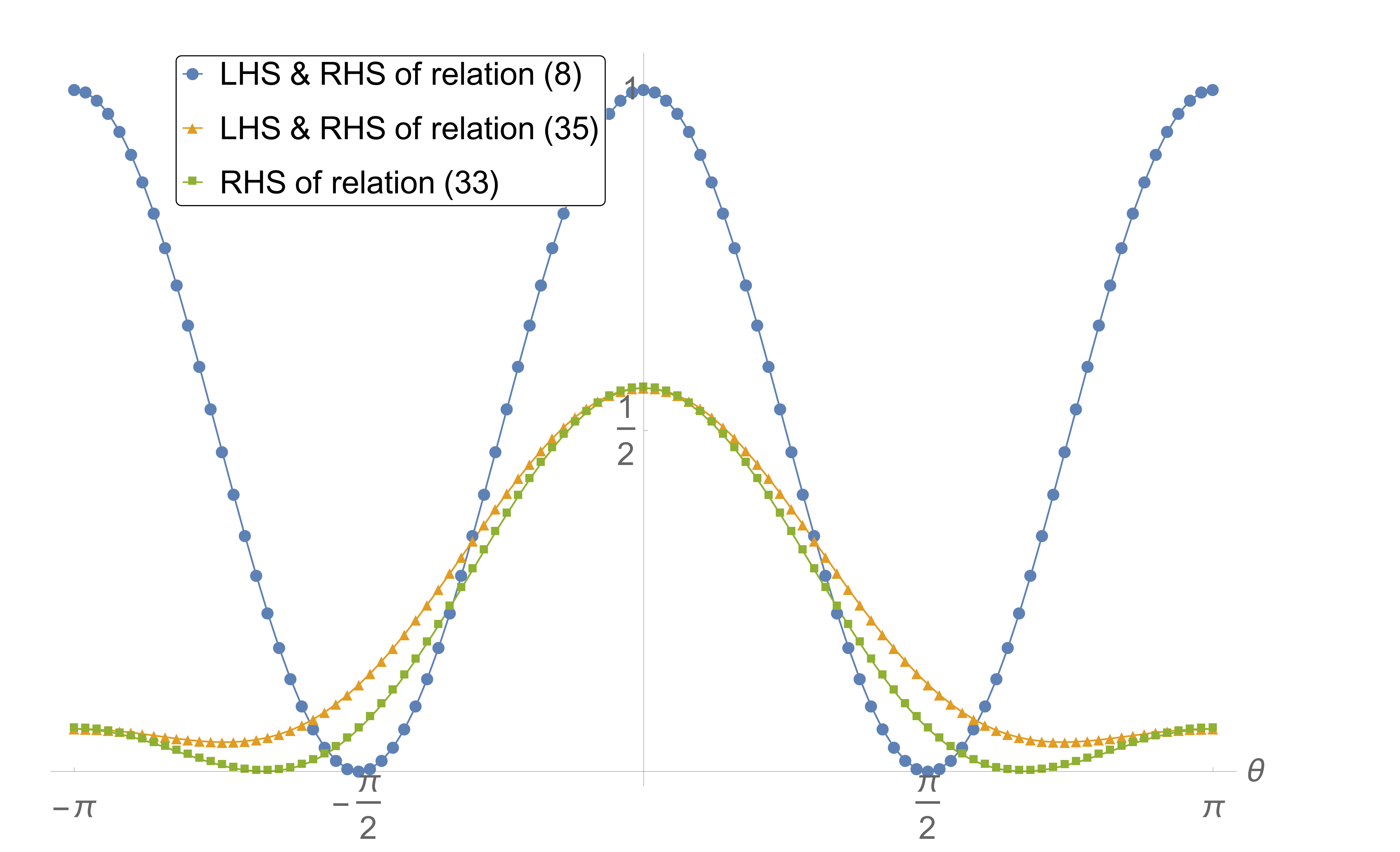}
\caption{Comparison between the  RHURs (\ref{RHUR-4}) and (\ref{RHUR-5}),  and the uncertainty relations (\ref{Strong-1}) and (\ref{Strong-3}).  We choose $A=\sigma_x$, $B=\sigma_y$ for a spin-1/2 particle and   $\ket{\psi}=\cos(\theta/2)\ket{0}+e^{i\xi}\sin(\theta/2)\ket{1}$,  $\ket{\phi}=\cos(\omega/2)\ket{0}+e^{i\eta}\sin(\omega/2)\ket{1}$  with $\xi=0$, $\omega=\pi/3$ and $\eta=\pi/5$. The blue curve is the LHS of the RHUR and for this particular case, it coincides with its lower bound i.e., RHS of RHUR. The orange curve is the LHS of  Eq. (\ref{Strong-3})  and for this particular case, it   coincides with the RHS of  Eq. (\ref{Strong-3}). The green curve is the RHS of Eq. (\ref{Strong-1}). Now, notice that, for $\theta=-\pi/2$ and $\pi/2$, the RHUR  becomes trivial while for the same values of $\theta$, the relation (\ref{Strong-1}) as well as the relation (\ref{Strong-3}) are nontrivial. For this particular choice of post-selection, both the relations (\ref{Strong-1}) and (\ref{Strong-3}) are stronger than the RHUR (\ref{RHUR-4}). Note that, the relation (\ref{Strong-3}) is the strongest under this condition as it is non-trivial for all the values of $\theta$. If, for the fixed values of $\theta$ and $\xi$,   the relations  (\ref{Strong-1}) and (\ref{Strong-3})   is trivial, then one should keep changing the values of $\omega$ and $\eta$ (i.e., by choosing the post-selection suitably) until they  become  nontrivial which is our main goal.}
\label{FIG1}
\end{figure}

It is Maccone and Pati \cite{Maccone-Pati} who considered a different uncertainty relation, based on the sum of the variances  $\braket{\Delta{A}}^{2}_{\psi} + \braket{\Delta{B}}^{2}_{\psi}$, that is guaranteed to be nontrivial  {(i.e., having  non-zero lower bound)} whenever the observables are incompatible on the given state $\ket{\psi}$.  But there are  shortcomings in the Maccone-Pati Uncertainty Relations (MPUR). {It is easy to show  that in two dimensional Hilbert space \cite{example-1}, if, for example, the initial state $\ket{\psi}$ of the system is an eigenstate of the observable $A$, then one finds that  the first inequality  $\braket{\Delta{A}}^{2}_{\psi} + \braket{\Delta{B}}^{2}_{\psi} \geq \pm i\braket{\psi|[A,B]|\psi} + \left|\braket{\psi|(A\pm iB)|\psi^{\perp}}\right|^2$  in MPUR becomes $\braket{\Delta{B}}^{2}_{\psi}\geq \braket{\Delta{B}}^{2}_{\psi}$, where $\ket{\psi^{\perp}}$ is arbitrary state orthogonal to $\ket{\psi}$. Similarly, it can be shown that the second inequality  $\braket{\Delta{A}}^{2}_{\psi} + \braket{\Delta{B}}^{2}_{\psi} \geq \frac{1}{2}\left|\braket{\psi^{\perp}_{A+B}|(A+B)|\psi}\right|^2$ in MPUR becomes $\braket{\Delta{B}}^{2}_{\psi}\geq \frac{1}{2}\braket{\Delta{B}}^{2}_{\psi}$, where $\ket{\psi^{\perp}_{A+B}}=(1/\braket{\Delta{(A+B)}}_{\psi})(A+B-\braket{A+B}_{\psi})\ket{\psi}$ and $\braket{\Delta{(A+B)}}_{\psi}^2=\braket{(A+B)^2}_{\psi}-\braket{A+B}_{\psi}^2$  for arbitrary dimensional Hilbert space if the initial state of the system is an eigenstate of the observable $A$ \cite{example-2}. It indicates that the first and second inequalities in MPUR for two and arbitrary dimensions, respectively, contain no information about the observable $A$ and are therefore of no practical significance.  In other words, we learn nothing new about the quantum system other than the trivial  fact that $\braket{\Delta{B}}_{\psi}$ is non-negative.  In addition,  if the initially prepared state $\ket{\psi}$ is unknown, then $\ket{\psi^{\perp}}$ is likewise unknown in the MPUR inequalities  and, so is the lower bound of MPUR.  The first  inequality in MPUR may be useful in a quantum system  with Hilbert spaces of more than two dimensions.  \par 
Here, we demonstrate that relations (\ref{URPPS-1}) and (\ref{URPPS-5})  can be used to solve the triviality problem of the RHUR and the problem with MPUR that we have mentioned above, i.e., these  uncertainty relations can provide non-trivial  information about the observable $A$. Even if the initially prepared state (pre-selection) $\ket{\psi}$ is unknown, the lower bound of our stronger uncertainty relation can be calculated.\par

Consider the relation (\ref{URPPS-1}) which,  using Eq. (\ref{SDPPS-5}), becomes
\begin{align}
\left(\hspace{-1mm}\braket{\Delta{A}}^{2}_{\psi}\hspace{-1mm} + \epsilon_A\hspace{-1mm} \right) \hspace{-1mm}\left(\hspace{-1mm}\braket{\Delta{B}}^{2}_{\psi} \hspace{-1mm}+ \epsilon_B\hspace{-1mm} \right)\hspace{-1mm} \geq \hspace{-1.5mm}\left[ \frac{1}{2i}\hspace{-1mm}\bra{\psi}\hspace{-1mm}[{A},{B}]\hspace{-1mm}\ket{\psi} \hspace{-1mm} - \mathcal{I}m\hspace{-1mm} \left(W_{AB}\right)\hspace{-0.5mm} \right]^2\hspace{-1mm},\label{Strong-1}
 \end{align} 
where $\epsilon_{X}={\braket{X}^2_{\psi}}  -|{\braket{X_w}^{\phi}_{\psi}}|^2|{\braket{\phi|\psi}}|^2$, with  $X=A$ or $B$. Now suppose $\ket{\psi}$ is an eigenstate  of $A$  then, the Eq. (\ref{Strong-1})  is nontrivial unless $\ket{\phi}=\ket{\psi}$, as, in the case when  $\ket{\phi}\ne\ket{\psi}$, the inequality (\ref{Strong-1}) becomes
\begin{align}
  \epsilon_A\left(\braket{\Delta{B}}^{2}_{\psi} + \epsilon_B \right) \geq & \left[ \mathcal{I}m \left(W_{AB}\right) \right]^2. \label{Strong-2}
 \end{align}\par

  Notice that, in the both sides of Eq. (\ref{Strong-2}), there is a  quantum state $\ket{\phi}$ which can be chosen  independently in the standard quantum system. So, it is always possible to choose a suitable $\ket{\phi}$ such that the relation (\ref{Strong-1}) is  nontrivial.  With   ``Schr$\ddot{o}$dinger's term", the relations (\ref{Strong-1}) and (\ref{Strong-2}) becomes 
\begin{align}
\left(\hspace{-1mm}\braket{\Delta{A}}^{2}_{\psi}\hspace{-1mm} + \epsilon_A\hspace{-1mm} \right) \hspace{-1mm}\left(\hspace{-1mm}\braket{\Delta{B}}^{2}_{\psi} \hspace{-1mm}+ \epsilon_B\hspace{-1mm} \right)\hspace{-1mm} &\geq\hspace{-1mm}\left[ \frac{1}{2i}\hspace{-1mm}\bra{\psi}\hspace{-1mm}[{A},{B}]\hspace{-1mm}\ket{\psi} \hspace{-1mm} - \mathcal{I}m\hspace{-1mm} \left(W_{AB}\right)\hspace{-0.5mm} \right]^2\nonumber\\
 &+ \left[\frac{1}{2}\hspace{-1mm}\braket{\psi|\{A,B\}|\psi}\hspace{-1mm} -\hspace{-1mm} \mathcal{R}e\hspace{-1mm}\left(W_{AB}\right)\hspace{-0.5mm}\right]^2\hspace{-1mm},\label{Strong-3} \\
  \epsilon_A\left(\hspace{-1mm}\braket{\Delta{B}}^{2}_{\psi} \hspace{-1mm}+ \epsilon_B\hspace{-1mm} \right)\hspace{-0.5mm} &\geq  \left[\mathcal{I}m \left(W_{AB}\right) \right]^2 \nonumber\\
 & + \left[\frac{1}{2}\hspace{-1mm}\braket{\psi|\{A,B\}|\psi}\hspace{-1mm} -\hspace{-1mm} \mathcal{R}e\hspace{-1mm}\left(W_{AB}\right)\hspace{-0.5mm}\right]^2\hspace{-1mm}, \label{Strong-4}
 \end{align}
respectively. As $\epsilon_A$ and $\epsilon_B$ can also be negative, the left hand side of relation (\ref{Strong-1}) can become lower  than the left hand side of relation (\ref{RHUR-4}). The same holds true for the right-hand side as well.  So, for a fixed $\ket{\psi}$, we always want to have  a nontrivial lower bound  from the relations (\ref{RHUR-4}) and  (\ref{Strong-1}) which can be combined in  a single uncertainty relation i.e.,  the \emph{stronger uncertainty relation} 
\begin{align}
max\left\{ \mathcal{L}_{RH}, \mathcal{L}_{PPS} \right\} \geq max\left\{ \mathcal{R}_{RH}, \mathcal{R}_{PPS} \right\}, \nonumber
 \end{align} 
where $\mathcal{L}_{RH}=\braket{\Delta{A}}_{\psi}^{{2}}\hspace{-1mm}\braket{\Delta{B}}_{\psi}^{{2}}$, $\mathcal{L}_{PPS}=\left(\hspace{-1mm}\braket{\Delta{A}}^{2}_{\psi}\hspace{-1mm} + \epsilon_A\hspace{-1mm} \right) \hspace{-1mm}\left(\hspace{-1mm}\braket{\Delta{B}}^{2}_{\psi} \hspace{-1mm}+ \epsilon_B\hspace{-1mm} \right)\hspace{-1mm}$ , $\mathcal{R}_{RH}=\left[\frac{1}{2i}\braket{\psi|[A,B]|\psi}\right]^2$ and $\mathcal{R}_{PPS}=\left[ \frac{1}{2i}\bra{\psi}[{A},{B}]\ket{\psi}  - \mathcal{I}m \left(W_{AB}\right) \right]^2$.

In  FIG. \ref{FIG1},  comparison between the relations (\ref{RHUR-4}) and, both (\ref{Strong-1}) and (\ref{Strong-3})  is shown.  Eqs. (\ref{Strong-2}) and (\ref{Strong-4}) capture the informations about the operator $A$ when the initially prepared state $\ket{\psi}$ is one of the eigenstates of $A$, while MPUR fails to capture  such informations which we have already discussed.\par
Moreover, even if the initial state (i.e., pre-selection) is unknown, the lower bound  of the uncertainty relation (\ref{Strong-1})  can be calculated experimentally and in that case  we need the average value of the hermitian  operator $\frac{1}{i}[A,B]$ and weak values of the operators $A$ and $B$. \par
Sum uncertainty relation in the PPS system  can also be used to obtain stronger uncertainty relation in the standard quantum system. One can easily show that  
\begin{align}
\left(\hspace{-1mm}\braket{\Delta{A}}^{2}_{\psi}\hspace{-1mm} + \epsilon_A\hspace{-1mm} \right) \hspace{-1mm}+\left(\hspace{-1mm}\braket{\Delta{B}}^{2}_{\psi} \hspace{-1mm}+ \epsilon_B\hspace{-1mm} \right)\geq \pm\Big(i\braket{\psi|[A,B]|\psi}\nonumber\\
-2 \mathcal{I}m(W_{AB})\Big)\nonumber
\end{align}
holds in Eq. (\ref{UEPPS-4}) in  Theorem 3 by discarding the summation part which is always a positive number. This inequality  remains strong against when $\ket{\psi}$ is one of the eigenstates of $A$ by suitably choosing post-selection $\ket{\phi}$. \par 
 \subsection{Purely quantum uncertainty relation}\label{IV C}
\emph{Motivation.} In practice, it is not always possible to carry out quantum mechanical tasks with pure states because of interactions with the environment. Because the mixed initial prepared state is a classical mixture of pure quantum states, any task or measurement involves a hybrid of classical and quantum parts. In modern technologies, it is considered that   quantum advantage is more effective and superior to classical advantage. Hence, a hybrid of a  quantum and classical component may be less advantageous than a  quantum component alone. For example, the uncertainty of an observable $A$ in standard quantum system increases in general under classical mixing of quantum states i.e.,  $\braket{\Delta A}^2_{\rho}\geq \sum_i p_i \braket{\Delta A}^2_{\psi_i}$  (where $\rho=\sum_{i}p_i\ket{\psi_i}\bra{\psi_i}$) and this is disadvantageous in the sense that the knowledge about the  observable is more uncertain than only when the average of the  pure state uncertainties are considered.  The uncertainty that one gets to see due to (classical) mixing of pure states is considered here as `classical uncertainty'.   The standard deviation  $\braket{\Delta A}_{\rho}$ can be  referred as the  hybrid of classical and quantum uncertainties and hence the RHUR (\ref{RHUR-4}) can be considered as the hybrid uncertainty relation in the standard quantum systems. \par
Purely quantum uncertainty relation, a crucial component of the quantum world, may be very useful, but in order to obtain it, the classical uncertainty  must be eliminated from  the hybrid uncertainty relation. To do this, we first need to determine the purely quantum uncertainty of an observable, which can be done in a number of ways, such as by eliminating the classical component of the hybrid uncertainty or by specifying a purely quantum uncertainty straight away.\par
 Any measure of  purely quantum uncertainty should have at least the following  intuitive and expected property (below `$\Phi$' represents  sometimes quantum observables, sometimes states, etc for  different  type of quantum mechanical systems. For example, if the system is a PPS system, then `$\Phi$'  is the post-selection $\ket{\phi}$. If the system is a standard quantum system, then the term `$\Phi$' disappears):\\
($i$) \emph{ Quantum uncertainty should not be affected (neither increasing nor decreasing) by the classical mixing of quantum states i.e.,}
\begin{align}
\mathcal{Q}(\rho, A, \Phi) \hspace{-1mm}= \hspace{-1mm}\sum_i p_i \mathcal{Q}(\psi_i,A,\Phi), \hspace{1mm} where \hspace{1mm} \rho= \hspace{-1mm}\sum_{i}p_i\ket{\psi_i}\bra{\psi_i}.\nonumber
\end{align}
Here $\mathcal{Q}(\rho, A,\Phi)$ is some measure of  purely quantum uncertainty of the observable $A$ for a given $\rho$. There might exist some other properties depending upon the nature of the system (e.g., standard systems, PPS systems, etc) but we emphasize  that  the most important property of a   purely quantum uncertainty should be ($i$).
\par
{It is seen  that  the variance of $A$ in PPS system i.e.,   $\braket{Var A}_{\rho}^{\phi}=(\braket{\Delta A}^{\phi}_{\rho})^2$ is a  purely quantum uncertainty which satisfies the property ($i$).  Now the purely   quantum mechanical uncertainty relation in this regard is Eq. (\ref{MixedPPS-5}}).\par
{As can be seen from Eq. (\ref{MixedPPS-4}}) for mixed states, the second definition of the  variance of $A$ i.e., $\braket{Var A_w}_{\rho}^{\phi}=(\braket{\Delta A_w}^{\phi}_{\rho})^2$ in the PPS system defined in  Eq. (\ref{MixedPPS-3}) is a hybrid uncertainty. Hence, the  uncertainty in PPS system   based on weak value has both classical and quantum parts. When measurement is carried out in the PPS system and weak values are involved, classical uncertainty may be crucial in determining how much classicality (in the form of classical uncertainty) the mixed state $\rho$ possesses.  Mixed states with less classicality  should have more quantumness (in the form of quantum uncertainty), and vice versa. To distinguish classical uncertainty from the hybrid uncertainty $\braket{Var A_w}_{\rho}^{\phi}$, we subtract the   quantum uncertainty $\braket{Var A}_{\rho}^{\phi}$ from it i.e.,
\begin{align}
C(\rho, A, \phi)=\left(\braket{\Delta A_w}^{\phi}_{\rho}\right)^2 - \left(\braket{\Delta A}^{\phi}_{\rho}\right)^2.\label{PureUR-1}
\end{align}
This is one of the good measures of classical uncertainty which should have some intuitive and expected properties:\\
($i$) $\mathcal{C}(\rho, A, \Phi)\geq 0$ for a quantum state  $\rho$,\\
($ii$) $\mathcal{C}(\rho, A, \Phi)=0$ when $\rho=\ket{\psi}\bra{\psi}$ (absence of classical mixing),\\
($iii$) Total classical uncertainty of disjoint systems should be the sum of individual systems's classical uncertainties:
\begin{align}
\mathcal{C}(\rho,A_1\hspace{-1mm}\otimes \hspace{-1mm}I+I\hspace{-1mm}\otimes \hspace{-1mm}A_2,\Phi)=\mathcal{C}(\rho,A_1\hspace{-1mm}\otimes \hspace{-1mm}I,\Phi)+\mathcal{C}(\rho,I\hspace{-1mm}\otimes \hspace{-1mm}A_2,\Phi),\nonumber
\end{align}
when $\rho=\rho_1\otimes\rho_2$.
}\par 
One can show that all the properties ($i$)-($iii$) of  classical uncertainty are satisfied by $C(\rho, A, \phi)$ defined in Eq. (\ref{PureUR-1}). Particularly, property ($iii$) is satisfied by taking $\Phi=\ket{\phi_1}\ket{\phi_2}$. Here, $\ket{\phi_1}$ and $\ket{\phi_2}$ are post-selections of the two disjoint systems, respectively.\par
{There are some works by Luo and  other authors regarding the purely  quantum uncertainty relation. Initial attempt was made by Luo  and Zhang \cite{Luo-Zhang-2004} to obtain  uncertainty relation by using skew information (introduced by Wigner and Yanase \cite{Wigner-Yanase-1993}) but it was found to be incorrect in general \cite{Kuriyama-2005}.  Later, another attempt was made by Luo himself \cite{Luo-2005}, which is obtained by discarding the classical part from the hybrid  uncertainty relation using skew information. {But this uncertainty relation can’t be guaranteed to be an intrinsically quantum uncertainty relation (according to property $(i)$) as the uncertainty they claim to be a quantum uncertainty is a product of skew information (which is a convex function under the mixing of quantum states) and a concave function under the same mixing}.  After that a series of successful and failed attempts  was performed by modifying the works of Luo and    other authors \cite{Li-2009,Furuichi-2009,Yanagi-2010,Kenjiro_Yanagi-2010}. }\par
Instead, we have given a  quantum uncertainty relation  although it is based on pre- and post-selections which is different from the standard quantum mechanics but a  quantumness can be seen in the relation (\ref{MixedPPS-5}).
\subsection{ State dependent  tighter uncertainty relations in standard systems}
 The RHUR (\ref{RHUR-4}) or (\ref{RHUR-5}) is known  not to be the  tight one. Some existing tighter bounds are given in \cite{Yao-2015,Maccone-Pati,Qiao-2016}. The drawback of these tighter uncertainty relations is that their lower bounds   depend on the states perpendicular to the given state of the system. If the given state is unknown, then the  lower bound of these uncertainty relations also remain  unknown.\par
 Here we show  that by the use  of arbitrary  post-selected state $\ket{\phi}$, the lower bound of the RHUR based on sum uncertainties  can be made arbitrarily tight and even if the given state (i.e., pre-selection here) is unknown, the lower bound of our tighter uncertainty relation can be obtained in experiments.
 \begin{theorem}
Let $\rho\in \mathcal{L}(\mathcal{H})$ be the density operator of the standard quantum system, then  the sum of the standard deviations of  two non-commuting observables $A$, $B\in \mathcal{L}(\mathcal{H})$ satisfies  
\begin{align}
\hspace{-2mm}\braket{\Delta{A}}_{\rho}^{2}+\braket{\Delta{B}}_{\rho}^{2}\geq\pm \,i\,Tr([A,B]\rho)+
 \braket{\phi|C^{\dagger}_{\pm}\rho C_{\pm}|\phi},\label{Tight-1}
\end{align}
where $C_{\pm}=A\pm iB-\braket{A\pm iB}_{\rho}I$ and  the `$\pm$' sign is taken in such a way that the first term in the right hand side  is always positive.
\end{theorem}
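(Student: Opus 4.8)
The plan is to derive the bound from a single operator identity combined with the elementary fact that a positive semi-definite operator has trace no smaller than any of its diagonal matrix elements. First I would introduce the (non-Hermitian) operator $C_{\pm}=A\pm iB-\braket{A\pm iB}_{\rho}I$ exactly as in the statement, and observe that $M_{\pm}:=C_{\pm}^{\dagger}\rho C_{\pm}$ is positive semi-definite for any density operator $\rho$, since $\braket{v|C_{\pm}^{\dagger}\rho C_{\pm}|v}=\braket{C_{\pm}v|\rho|C_{\pm}v}\geq 0$ for every $\ket{v}$. This positivity is the structural feature that drives the whole argument.

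The key computational step is to evaluate $Tr(M_{\pm})$ in closed form. Writing $D=A\pm iB$ so that $C_{\pm}=D-\braket{D}_{\rho}I$, a direct expansion using cyclicity of the trace and $Tr(\rho)=1$ gives $Tr(C_{\pm}^{\dagger}\rho C_{\pm})=\braket{DD^{\dagger}}_{\rho}-\braket{D}_{\rho}\braket{D^{\dagger}}_{\rho}$. Since $DD^{\dagger}=A^{2}+B^{2}\mp i[A,B]$ and $\braket{D}_{\rho}\braket{D^{\dagger}}_{\rho}=\braket{A}_{\rho}^{2}+\braket{B}_{\rho}^{2}$, this collapses to the identity
\begin{align}
Tr(C_{\pm}^{\dagger}\rho C_{\pm})=\braket{\Delta{A}}_{\rho}^{2}+\braket{\Delta{B}}_{\rho}^{2}\mp i\,Tr([A,B]\rho).\nonumber
\end{align}
Rearranging isolates the sum of variances as $\braket{\Delta{A}}_{\rho}^{2}+\braket{\Delta{B}}_{\rho}^{2}=Tr(M_{\pm})\pm i\,Tr([A,B]\rho)$, which already recovers the first term of Eq. (\ref{Tight-1}).

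The final step is to bound $Tr(M_{\pm})$ from below by its $\ket{\phi}$-diagonal element. Completing $\ket{\phi}$ to an orthonormal basis $\{\ket{\phi},\ket{\phi_{k}^{\perp}}\}_{k=1}^{d-1}$ and using positivity of $M_{\pm}$, I would write $Tr(M_{\pm})=\braket{\phi|C_{\pm}^{\dagger}\rho C_{\pm}|\phi}+\sum_{k}\braket{\phi_{k}^{\perp}|C_{\pm}^{\dagger}\rho C_{\pm}|\phi_{k}^{\perp}}\geq\braket{\phi|C_{\pm}^{\dagger}\rho C_{\pm}|\phi}$, since every discarded term is non-negative. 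Substituting this into the rearranged identity yields exactly Eq. (\ref{Tight-1}), with the sign chosen so that $\pm i\,Tr([A,B]\rho)\geq 0$; this choice is legitimate because $[A,B]$ is anti-Hermitian, so $Tr([A,B]\rho)$ is purely imaginary and $i\,Tr([A,B]\rho)$ is real.

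I expect the only delicate point to be the bookkeeping of the $\pm/\mp$ signs through the expansion of $DD^{\dagger}$ and of $\braket{D}_{\rho}\braket{D^{\dagger}}_{\rho}$; everything else is forced. The positivity-plus-completeness device is precisely the mechanism underlying the uncertainty equalities of Theorem 3, so once the trace identity is established the inequality is immediate, and its tightness is controlled entirely by which residual terms $\braket{\phi_{k}^{\perp}|C_{\pm}^{\dagger}\rho C_{\pm}|\phi_{k}^{\perp}}$ are dropped — which is exactly the freedom the arbitrary post-selection $\ket{\phi}$ provides.
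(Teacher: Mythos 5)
Your proposal is correct, and it reaches the same exact identity as the paper,
\begin{align}
\braket{\Delta{A}}_{\rho}^{2}+\braket{\Delta{B}}_{\rho}^{2}=\pm\, i\,Tr([A,B]\rho)+Tr\!\left(C_{\pm}^{\dagger}\rho\, C_{\pm}\right),\nonumber
\end{align}
followed by the same final move of expanding the trace in a basis containing $\ket{\phi}$ and discarding the non-negative terms $\braket{\phi_k^{\perp}|C_{\pm}^{\dagger}\rho C_{\pm}|\phi_k^{\perp}}$. Where you differ is in how the identity is obtained. The paper derives it by routing through the PPS machinery: it takes the sum-uncertainty equality (\ref{UEPPS-4}) for each pure component $\ket{\psi_j}$ of $\rho$, averages with weights $p_j$ using Eq. (\ref{MixedPPS-2}), rewrites the weak-value cross term via $\mp 2\mathcal{I}m(\braket{\phi|B\rho A|\phi})=\braket{\phi|(A\pm iB)\rho(A\mp iB)|\phi}-\braket{\phi|A\rho A|\phi}-\braket{\phi|B\rho B|\phi}$, and only then subtracts $Tr(A\rho)^2+Tr(B\rho)^2$ and resums over the completeness relation to land on the standard-system variances. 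You instead get the identity in two lines by directly computing $Tr(C_{\pm}^{\dagger}\rho C_{\pm})=\braket{DD^{\dagger}}_{\rho}-\braket{D}_{\rho}\braket{D^{\dagger}}_{\rho}$ with $D=A\pm iB$ and using $DD^{\dagger}=A^2+B^2\mp i[A,B]$; your sign bookkeeping checks out, and your observation that $i\,Tr([A,B]\rho)$ is real (so the sign can always be chosen to make it non-negative) is the correct justification for the `$\pm$' prescription. Your route is more elementary and self-contained — it needs only positivity of $C_{\pm}^{\dagger}\rho C_{\pm}$ and completeness — whereas the paper's detour through Theorem 3 makes explicit that this tighter standard-system bound is literally the mixed-pre-selection PPS sum-uncertainty equality in disguise, which is the conceptual point the authors want to emphasize. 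Nothing is missing from your argument.
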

\begin{proof}
Considering   Eq. (\ref{UEPPS-4}) for pre-selection $\ket{\psi_j}$ and multiply by `$p_j$', and  then after  summing  over `$j$',  we have  
\begin{align}
&\sum_{j}p_j\left(\braket{\Delta{A}}^{\phi}_{\psi_j}\right)^2+\sum_{j}p_j\left(\braket{\Delta{B}}^{\phi}_{\psi_j}\right)^2\nonumber\\
&=\pm i \hspace{-1mm}\sum_jp_j\hspace{-1mm}\braket{\psi_j|[A,B]|\psi_j}\mp2\mathcal{I}m\Big(\sum_jp_j\hspace{-1mm}\braket{\phi|B|\psi_j}\hspace{-1mm}\braket{\psi_j|A|\phi}\hspace{-1.5mm}\Big)\nonumber\\
&\hspace{4mm}+\hspace{-1mm}\sum_{k=1}^{d-1}\sum_jp_j\hspace{-1mm}\braket{\phi_k^{\perp}|(A\pm iB)|\psi_j}\hspace{-1mm}\braket{\psi_j|(A\mp iB)|\phi_k^{\perp}},\label{Tight-2}
\end{align}
where we have used $W_{AB}=\braket{\phi|B|\psi_j}\hspace{-1mm}\braket{\psi_j|A|\phi}$. By using Eq. (\ref{MixedPPS-2}) for $A$ and $B$ when $\rho=\sum_jp_j\ketbra{\psi_j}{\psi_j}$, we have 
\begin{align}
&\left(\braket{\Delta{A}}^{\phi}_{\rho}\right)^2+\left(\braket{\Delta{B}}^{\phi}_{\rho}\right)^2\nonumber\\
&=\pm i \Tr([A,B]\rho)\mp2\mathcal{I}m\Big(\hspace{-1mm}\braket{\phi|B\rho A|\phi}\hspace{-1mm}\Big)\nonumber\\
&\hspace{2.5cm}+\sum_{k=1}^{d-1}\braket{\phi_k^{\perp}|(A\pm iB)\rho (A\mp iB)|\phi_k^{\perp}}\nonumber\\
&=\pm i \Tr([A,B]\rho)\hspace{-1mm}+\hspace{-1mm}\braket{\phi|(A\pm iB)\rho(A\mp iB)|\phi}\hspace{-1mm}-\hspace{-1mm}\braket{\phi|A\rho A|\phi}\nonumber\\
&\hspace{1mm}-\braket{\phi|B\rho B|\phi}+\sum_{k=1}^{d-1}\braket{\phi_k^{\perp}|(A\pm iB)\rho (A\mp iB)|\phi_k^{\perp}},\label{Tight-3}
\end{align}
where  $\mp2\mathcal{I}m(\braket{\phi|B\rho A|\phi})=\pm i(\braket{\phi|B\rho A|\phi}-\braket{\phi|A\rho B|\phi})$ = $\braket{\phi|(A\pm iB)\rho(A\mp iB)|\phi}-\braket{\phi|A\rho A|\phi}-\braket{\phi|B\rho B|\phi}$ has been used. Now put $\left(\braket{\Delta{A}}^{\phi}_{\rho}\right)^2=\Tr(A^2\rho)-\braket{\phi|A\rho A|\phi}$ defined in Eq. (\ref{MixedPPS-1}) (similarly for $B$ also) and after subtracting $\Tr(A\rho)^2+\Tr(B\rho)^2$ from both sides of Eq. (\ref{Tight-3}) and using $\ketbra{\phi}{\phi}+\sum_{k=1}^{d-1}\ketbra{\phi_k^{\perp}}{\phi^{\perp}_k}=I$, we have
\begin{align}
&\braket{\Delta A}_{\rho}^2+\braket{\Delta B}_{\rho}^2\nonumber\\
&=\pm i \Tr([A,B]\rho)+\Tr[(A\pm iB)(A\mp iB)\rho]\nonumber\\
&\hspace{5.3cm}-\Tr(A\rho)^2-\Tr(B\rho)^2\nonumber\\
&=\pm i\Tr([A,B]\rho)+\Tr[(A\pm iB)(A\mp iB)\rho]\nonumber\\
&\hspace{3.9cm}-\Tr[(A\pm iB)\rho]\Tr[(A\mp iB)\rho] \nonumber\\
&=\pm i\Tr([A,B]\rho)+\Tr(M^{\dagger}_{\mp}M_{\mp}\rho)-|\Tr(M_{\mp}\rho)|^2\nonumber\\
&=\pm i\Tr([A,B]\rho)+\Tr[(M_{\mp}-\hspace{-1mm}\braket{M_{\mp}}_{\rho}\hspace{-1mm}I)^{\dagger}(M_{\mp}-\hspace{-1mm}\braket{M_{\mp}}_{\rho}\hspace{-1mm}I)\rho],\label{Tight-4}
\end{align}
where $M_{\mp}=A\mp iB$.  Now let $C_{\pm}=M_{\pm}-\braket{M_{\pm}}_{\rho}I$ then Eq. (\ref{Tight-4}) can be rewritten as 
\begin{align}
\braket{\Delta A}_{\rho}^2+\braket{\Delta B}_{\rho}^2=&\pm i\Tr([A,B]\rho)
+\braket{\phi|C_{\pm}^{\dagger}\rho C_{\pm}|\phi}\nonumber\\
&\hspace{1.8cm}+\sum_{i}^{d-1}\braket{\phi^{\perp}_i|C_{\pm}^{\dagger}\rho C_{\pm}|\phi^{\perp}_i},\nonumber
\end{align}
where \{$\ket{\phi}$,  \{$\ket{\phi^{\perp}_i}\}^{d-1}_{i=1}\}$ is an orthonormal basis in $\mathcal{H}$. By discarding the summation  term which is always a positive number  in the above equation,  we obtain the inequality (\ref{Tight-1}). 
\end{proof}
Notice that, the lower bound of Eq. (\ref{Tight-1}) has different non-zero values depending on different choices of the post-selections $\ket{\phi}$. The inequality (\ref{Tight-1}) becomes an equality when $\ket{\phi}\propto (A\pm iB-\braket{A\pm iB}_{\rho}I)\ket{\psi}$, where $\rho=\ket{\psi}\bra{\psi}$ is a pure state. In the references \cite{Yao-2015,Maccone-Pati,Qiao-2016},  the lower bound of the sum uncertainty relation depends on  the state  orthogonal to the  initial pure state, and if  the  initial  state is a mixed state, then  the lower bound can not always be  computed at least for the full rank density matrix. The reason is that we can not find a state which is orthogonal to all the eigenstates of a full rank density matrix. Moreover,  if the initial density matrix is unknown then computing the lower bound will be hard. In contrast, Eq. (\ref{Tight-1}) doesn't have such issues as the first  and second terms in right hand side of Eq. (\ref{Tight-1}) are  the average values of the hermitian operators  $i[A,B]$ and $(A\pm iB-\braket{A\pm iB}_{\rho}I)\ket{\phi}\bra{\phi}(A\mp iB-\braket{A\mp iB}_{\rho}I)$  in the  state $\rho$, respectively, where  $\braket{A\pm iB}_{\rho}=\braket{A}_{\rho}\pm i\braket{B}_{\rho}$. All of them can be obtained in experiments even if $\rho$ is unknown.

    \subsection{Tighter upper bound for out-of-time-order correlators}
 Recently Bong \emph{et al.} \cite{Bong-2018} used the RHUR for unitary operators to give  upper  bound for out-of-time-order correlators (OTOC) which is defined by $F=Tr[(W_t^{\dagger}V^{\dagger}W_tV)\rho]$, where $V$ and $W_t$ are fixed and time dependent unitary operators, respectively. The OTOC   diagnoses the spread of quantum information by measuring how quickly two commuting operators $V$  and $W$ fail to commute, which is  quantified by $\braket{|[W_t,V]|^2}_{\rho}=2(1-Re[F])$, where $|X|^2=X^{\dagger}X$. The OTOC has strong connection with chaos and information scrambling \cite{Swingle-2018,Swingle-2016,Halpern-2019} and  also with high energy physics \cite{Susskind-2008,Hayden-2013,Stanford-2016,Hosur-2016}.  It is known that OTOC's upper bound is essential for limiting how quickly many-body entanglement can generate \cite{Susskind-2008,Hayden-2013,Stanford-2016}.  The standard upper bound  for modulus of the  OTOC given by Bong \emph{et al.} \cite{Bong-2018} is $|F|\leq cos(\theta_{VW_t}-\theta_{W_tV})$,  where $\theta_{VW_t}=cos^{-1}|Tr(\rho VW_t)|$,  $\theta_{W_tV}=cos^{-1}|Tr(\rho W_tV)|$.    \par
 Here, we show that uncertainty relation in PPS system  for unitary operators  can be used to derive  tighter upper bound for the  OTOC. 
 \begin{theorem}
     Let $\rho\in \mathcal{L}(\mathcal{H})$ be the system's state  and  $\ket{\phi}$ be  any  arbitrary state, then modulus of the  OTOC $F=Tr[(W_t^{\dagger}V^{\dagger}W_tV)\rho]$ for fixed and time dependent unitary operators $V$, $W_t\in \mathcal{L}(\mathcal{H})$, respectively  is upper bounded by 
 \begin{align}
 |F|=|\braket{W_t^{\dagger}V^{\dagger}W_tV}|\leq cos(\theta_{VW_t}^{\phi}-\theta_{W_tV}^{\phi}), \label{OTOC-1}
 \end{align}
 where $\theta_{VW_t}^{\phi}=cos^{-1}||\sqrt{\rho}(VW_t)^{\dagger}\ket{\phi}||$ and  $\theta_{W_tV}^{\phi}=cos^{-1}||\sqrt{\rho}(W_tV)^{\dagger}\ket{\phi}||$.   Here, $||.||$ defines a vector norm.
\end{theorem}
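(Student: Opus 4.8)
The plan is to recast the correlator as a single overlap in a $\rho$-weighted operator space and then reproduce, at the level of operators, exactly the decomposition of \emph{Proposition 2} (the Aharonov--Vaidman/PPS identity), with the post-selection entering through the projector $\Pi_\phi=\ket{\phi}\bra{\phi}$.

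First I would set $P=W_tV$ and $Q=VW_t$, both unitary, so that $W_t^\dagger V^\dagger W_t V=Q^\dagger P$ and hence $F=\Tr[Q^\dagger P\rho]$. On the space of operators I would introduce the positive semi-definite inner product $\langle\langle X|Y\rangle\rangle_\rho:=\Tr[\rho X^\dagger Y]$, with induced norm $\|\cdot\|_\rho$. In this product $F=\langle\langle Q|P\rangle\rangle_\rho$, and since $P,Q$ are unitary and $\Tr\rho=1$ both $P$ and $Q$ are \emph{unit} vectors, $\|P\|_\rho=\|Q\|_\rho=1$. This is the unitary-operator analogue of expressing a weak value as an overlap.

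Next I would decompose $P$ and $Q$ by left multiplication with $\Pi_\phi$, writing $P=\Pi_\phi P+(I-\Pi_\phi)P$ and likewise for $Q$. A one-line computation using $\Pi_\phi(I-\Pi_\phi)=0$ shows the two pieces are orthogonal in $\langle\langle\cdot|\cdot\rangle\rangle_\rho$, so all cross terms vanish and $F=\langle\langle \Pi_\phi Q\,|\,\Pi_\phi P\rangle\rangle_\rho+\langle\langle (I-\Pi_\phi)Q\,|\,(I-\Pi_\phi)P\rangle\rangle_\rho$. The crucial identification is $\|\Pi_\phi P\|_\rho^2=\Tr[\rho P^\dagger\Pi_\phi P]=\bra{\phi}P\rho P^\dagger\ket{\phi}=\|\sqrt{\rho}P^\dagger\ket{\phi}\|^2=\cos^2\theta_{W_tV}^{\phi}$, and the same computation gives $\|\Pi_\phi Q\|_\rho=\cos\theta_{VW_t}^{\phi}$; by the Pythagorean relation the orthogonal components have norms $\sin\theta_{W_tV}^{\phi}$ and $\sin\theta_{VW_t}^{\phi}$ (each angle lies in $[0,\pi/2]$ since every cosine is a norm bounded by $1$).

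Finally I would bound $|F|$ by the triangle inequality followed by Cauchy--Schwarz on each orthogonal piece, obtaining $|F|\le\cos\theta_{VW_t}^{\phi}\cos\theta_{W_tV}^{\phi}+\sin\theta_{VW_t}^{\phi}\sin\theta_{W_tV}^{\phi}=\cos(\theta_{VW_t}^{\phi}-\theta_{W_tV}^{\phi})$, which is the claimed bound. I expect the only delicate points to be (i) verifying orthogonality of the two components in the $\rho$-weighted product so that the cross terms genuinely drop, and (ii) noting that when $\rho$ is not full rank the product is only a \emph{semi}-inner product — but Cauchy--Schwarz still holds in that case, so the argument is unaffected. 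The structure parallels that of Bong \emph{et al.}, whose bound amounts to projecting onto the single reference direction $\sqrt{\rho}$; here the post-selection $\ket{\phi}$ furnishes a tunable projector, and optimizing over $\ket{\phi}$ supplies the tightening.
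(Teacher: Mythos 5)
Your proof is correct and is essentially the paper's own argument in different clothing: the orthogonal split $F=\langle\langle \Pi_\phi Q|\Pi_\phi P\rangle\rangle_\rho+\langle\langle (I-\Pi_\phi)Q|(I-\Pi_\phi)P\rangle\rangle_\rho$ reproduces exactly the paper's two terms $\braket{\phi|W_tV\rho (W_tV)^{\dagger}|\phi}$ and $Tr[\rho Q^{\dagger}(I-\Pi_\phi)P]$, and your two Cauchy--Schwarz applications coincide with the paper's vector Cauchy--Schwarz on the first term and its Frobenius-norm uncertainty relation (\ref{OTOC-2}) on the second. Your packaging via an explicit $\rho$-weighted semi-inner product with the Pythagorean identity is a cleaner way to see why the sines appear, but it is not a different route.
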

\begin{proof}
For a given  mixed  state $\rho$   and  arbitrary state $\ket{\phi}$  which we consider to be pre- and post-selections, respectively, the standard deviation $\braket{\Delta X}_{\rho}^{\phi}$  of any operator $X$  in the PPS system is defined  as $\left(\braket{\Delta X}_{\rho}^{\phi}\right)^2=Tr(XX^{\dagger}\rho)-\braket{\phi|X^{\dagger}\rho X|\phi}=Tr\left((\sqrt{\rho}X_0^{\phi})^{\dagger}\sqrt{\rho}X_0^{\phi}\right)=||\sqrt{\rho}X_0^{\phi}||^2_{F}$, where $X_0^{\phi}=X-X\ket{\phi}\bra{\phi}$ and $||A||_F=\sqrt{Tr(A^{\dagger}A)}$ denotes the Frobenius norm of the operator $A$. When  $X$ is a hermitian operator, $\braket{\Delta X}_{\rho}^{\phi}$ becomes the standard deviation of $X$ defined in Eq. (\ref{MixedPPS-1}). Now consider $X$ to be  unitary operators $U$ and $V$. So, we can derive uncertainty relation  for two unitary operators $U$ and $V$ using the Cauchy-Schwarz inequality for operators with Frobenius norm  when the system is in pre- and post-selections $\rho$ and $\ket{\phi}$, respectively as
\begin{align}
\braket{\Delta U}_{\rho}^{\phi} \braket{\Delta V}_{\rho}^{\phi} \geq \left|Tr\left[(\sqrt{\rho}U_0^{\phi})^{\dagger}\sqrt{\rho}V_0^{\phi}\right]\right|\nonumber\\
=\left| Tr(VU^{\dagger}\rho)-\braket{\phi|U^{\dagger}\rho V|\phi} \right|,\label{OTOC-2}
\end{align} 
where $\braket{\Delta U}_{\rho}^{\phi}=\sqrt{1-\braket{\phi|U^{\dagger}\rho U|\phi}}$ and similarly for $V$ also.  
Now, by  replacing $U\rightarrow V^{\dagger}W_t^{\dagger}$ and  $V\rightarrow W_t^{\dagger}V^{\dagger}$,  (\ref{OTOC-2}) becomes
\begin{align}
&|Tr(W_t^{\dagger}V^{\dagger}W_tV\rho)|\nonumber\\
&\leq |\braket{\phi|(V^{\dagger}W_t^{\dagger})^{\dagger}\rho W_t^{\dagger}V^{\dagger}|\phi}|+ \braket{\Delta (V^{\dagger}W_t^{\dagger})}_{\rho}^{\phi} \braket{\Delta (W_t^{\dagger}V^{\dagger})}_{\rho}^{\phi}\nonumber\\
&\leq ||\sqrt{\rho}V^{\dagger}W_t^{\dagger}\ket{\phi}|| ||\sqrt{\rho}W_t^{\dagger}V^{\dagger}\ket{\phi}||\nonumber\\
&\hspace{1cm}+ \sqrt{1-||\sqrt{\rho}V^{\dagger}W_t^{\dagger}\ket{\phi}||^2}\sqrt{1-||\sqrt{\rho}W_t^{\dagger}V^{\dagger}\ket{\phi}||^2},\label{OTOC-3}
\end{align}
where we used the Cauchy-Schwarz inequality for vectors and  $\braket{\Delta(V^{\dagger}W_t^{\dagger})}_{\rho}^{\phi}=\sqrt{1-||\sqrt{\rho}V^{\dagger}W_t^{\dagger}\ket{\phi}||^2}$  and $\braket{\Delta(W_t^{\dagger}V^{\dagger})}_{\rho}^{\phi}=\sqrt{1-||\sqrt{\rho}W_t^{\dagger}V^{\dagger}\ket{\phi}||^2}$, where $||\ket{\chi}||=\sqrt{\braket{\chi|\chi}}$ denotes vector norm.\par
Now, by setting  $||\sqrt{\rho}(VW_t)^{\dagger}\ket{\phi}||=cos\theta_{VW_t}^{\phi}$ and  $||\sqrt{\rho}(W_tV)^{\dagger}\ket{\phi}||=cos\theta_{W_tV}^{\phi}$ in (\ref{OTOC-3}), the inequality (\ref{OTOC-1}) is proved.    
\end{proof}
 \begin{figure}[H]
 \includegraphics[width=10.5cm]{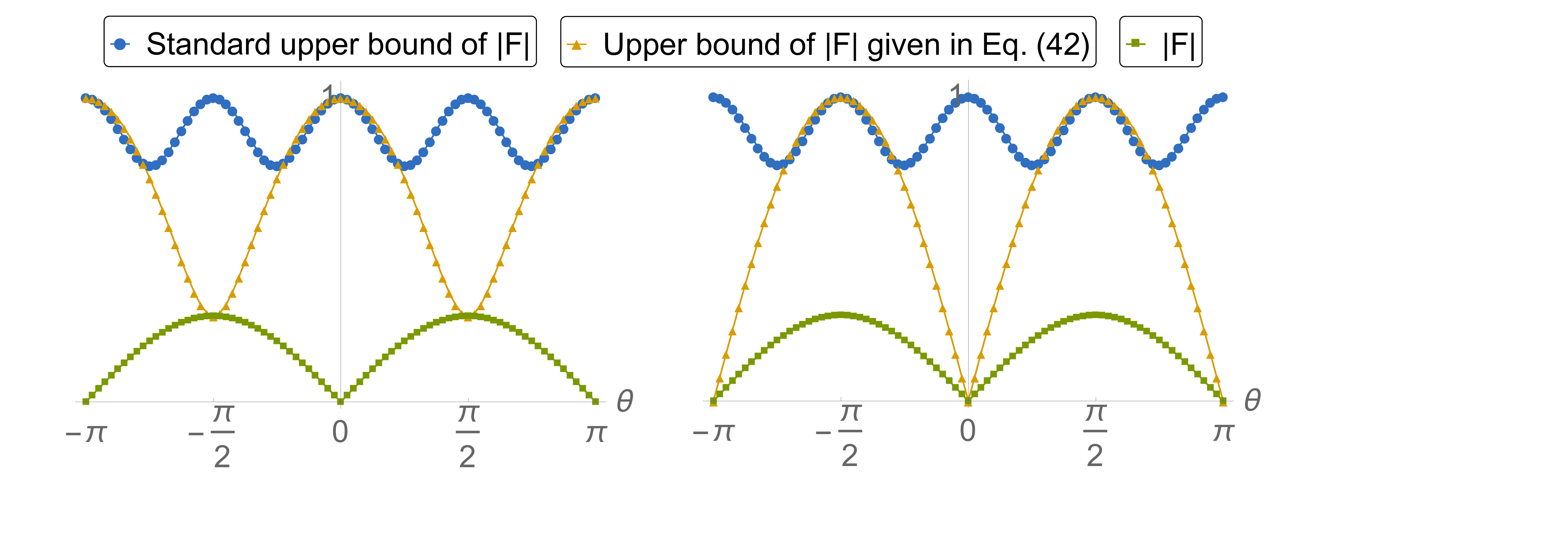}
 \centering
\caption{For both the figures, the blue  curve is the standard upper bound for $|F|$ given by Bong \emph{et al.} \cite{Bong-2018} and the green curve is $|F|$. We have considered $V=\sigma_z$ and $W_t=\frac{1}{\sqrt{2}}\big(\begin{smallmatrix}
  1 & 1\\
  -i & i
\end{smallmatrix}\big)$ for a fixed time.  Initially prepared state is $\ket{\psi}=cos(\theta/2)\ket{0}+e^{i\pi/11}sin(\theta/2)\ket{1}$. Now in the left figure, the orange curve is the upper bound of $|F|$ given in  Eq. (\ref{OTOC-1}) when the post-selection is $\ket{\phi_1}=cos(\pi/2)\ket{0}+e^{i\pi/2}sin(\pi/2)\ket{1}$. In the right  figure, the orange curve is the upper bound of $|F|$ given in the Eq. (\ref{OTOC-1}) when the post-selection is $\ket{\phi_2}=cos(\pi/4)\ket{0}+e^{i\pi/2}sin(\pi/4)\ket{1}$.  Here  for two (or more) different post-selections, it is   clearly seen  that the upper bound given in  Eq. (\ref{OTOC-1}) is tighter than  the standard upper bound given by Bong \emph{et al.} \cite{Bong-2018}.  } 
\label{FIG2}
\end{figure}
 In Fig. \ref{FIG2}, it is shown that by suitably choosing $\ket{\phi}$, the upper bound of $|F|$ in Eq. (\ref{OTOC-1}) can be made tighter than the standard upper bound given by Bong \emph{et al.} \cite{Bong-2018}. Hence, we conclude that the tighter upper bound for the modulus of the  OTOC is 
  \begin{align}
 |F|\hspace{-1mm}\leq min \left\{\underset{\phi}{min}\{\hspace{0mm}cos(\theta_{VW_t}^{\phi}-\theta_{W_tV}^{\phi})\hspace{0mm}\}\hspace{0mm}, cos(\theta_{VW_t}-\theta_{W_tV})\hspace{-1mm}\right\}\hspace{-1mm}.\nonumber
 \end{align}\par

  \section{Conclusion}\label{V}
  We have defined standard deviation of an observable in a PPS system, interpreted it geometrically as well as informationally  from  the perspective of  weak PPS measurements and subsequently derived the Robertson-Heisenberg like uncertainty relation for two non commuting observables. Such   uncertainty relations in  PPS system  impose limitations on the joint sharp preparation of pre- and post-selected states for two incompatible observables. We provided the necessary and sufficient condition for zero uncertainty of an observable and show its usefulness in achieving optimized Fisher information in quantum metrology. We have derived both product and sum uncertainty equalities from which a series of uncertainty inequalities can be obtained.   The generalization of uncertainty relation for mixed pre-selection  in PPS system has also been  discussed. We have demonstrated that the PPS system can exhibit more bizarre behaviors than the usual ones. For instances,  it is possible in PPS system that measurement  of two compatible observables can disturb each other's measurement results i.e., the lower bound in the uncertainty relation can be made non zero by suitably choosing post-selections. A similar  property  in PPS system  was first shown by  Vaidman \cite{Vaidman-1993}.  It is also possible that  a quantum state (pre-selection) can be prepared in a PPS system  for which both of the standard deviations of incompatible observables are zero although this is not possible  in a standard quantum system (see section \ref{III B}).
  \par
 The standard deviation and uncertainty relation in the PPS system have been used to provide physical applications. ($i$)  We have used two different definitions of the standard  deviations in the PPS system to detect purity of an unknown state. ($ii$) The uncertainty relation in the PPS system is used to derive the stronger uncertainty relation (i.e., nontrivial for all possible choices of initially prepared states) in the standard quantum system. For two dimensional quantum system,  the stronger uncertainty relation by Maccone-Pati \cite{Maccone-Pati} fails to provide the informations about the incompatible observables when the system state is an eigenstate of either observables. We have shown that our stronger uncertainty relation overcomes this shortcoming of Maccone-Pati uncertainty relation. ($iii$)  Since the variance in the PPS system remains unaffected (i.e., neither increases nor decreases) by the classical mixing of quantum states, we have concluded that the uncertainty relation in the PPS system is a purely quantum uncertainty relation. In contrast, variance in the standard system increases in general under the classical mixing of quantum states. Following this observation we have provided a measure of classical uncertainty whose less value implies more purely quantum uncertainty. ($iv$) Tighter sum uncertainty relation in the standard quantum system  has been  derived  where the tightness depends on the post-selection. ($v$)  Uncertainty relation in PPS system for two  unitary operators  has been used to provide   tighter upper bound for out-of-time-order correlators.
  \par
   {Future directions: ($i$) it will be interesting if the global minimum for sum of uncertainties of non-commuting observables in the PPS system exists because that can be used to detect  entanglement by suitably choosing post-selections, Similar to the work by Hofmann and Takeuchi \cite{Hofmann-2003}.  ($ii$) Applications and  implications of the ideas like  `zero uncertainty' and `joint sharp preparation of a quantum state for non-commuting observables' need more attention. ($iii$) This is a matter of further study if the uncertainty relation (\ref{URPPS-1}) in PPS system has applications similar to the RHUR (\ref{RHUR-4}), such as quantum metrology, spin squeezing, improving the accuracy of phase measurement in quantum interferometers, etc. ($iv$) We have derived  the condition for the ``intelligent pre- and post-selected states" to achieve the minimum bound of the  uncertainty relation in the PPS system and  intelligent pre- and post-selected states can be exploited to get highly precise phase measurements because  many theoretical and experimental efforts have been made in recent years involving the minimum uncertainty states (for which the RHUR saturates) and the spin-squeezing states in the standard quantum systems (see, for example, \cite{Pezze-2018,Ma-2011,Mlodinow-1993}) for precise phase measurements.\\
 {\bf{Acknowledgment}}: We would like to thank  Klaus M$ø$lmer and  David R. M. Arvidsson-Shukur  for bringing the references  \cite{Gammelmark-2013,Tan-2015,Bao-2020,Molmer-2021} and \cite{Shukur-2020}, respectively to our attention. {We  acknowledge Debmalya Das for valuable comments on this work.}

\section{APPENDICES}
\appendix

\section{}\label{A}
Here we derive the condition for which the inequality (\ref{URPPS-1}) saturates. In the Cauchy-Schwarz inequality Eq. (\ref{URPPS-2}), the remainder  and the real term  to be vanished for the equality condition of Eq. (\ref{URPPS-1}) i.e.,
 \begin{align}
  \ket{\widetilde{\phi}^{\perp}_{A\psi}} - \frac{{{\braket{\widetilde{\phi}^{\perp}_{B\psi}|\widetilde{\phi}^{\perp}_{A\psi}}}}}{{{\braket{\widetilde{\phi}^{\perp}_{B\psi}|\widetilde{\phi}^{\perp}_{B\psi}}}}}{{\ket{\widetilde{\phi}^{\perp}_{B\psi}}}}=0,\label{A1}\\
  {{\braket{\widetilde{\phi}^{\perp}_{A\psi}|\widetilde{\phi}^{\perp}_{B\psi}}}}+{{\braket{\widetilde{\phi}^{\perp}_{B\psi}|\widetilde{\phi}^{\perp}_{A\psi}}}}=0\label{A2}.
 \end{align}
Now take the inner product between   $\bra{\phi^{\perp}_{A\psi}}$ and Eq. (\ref{A1}), and use the condition (\ref{A2}), then we have
 \begin{align}
  \braket{\widetilde{\phi}^{\perp}_{A\psi}|\widetilde{\phi}^{\perp}_{A\psi}} + \frac{\left({{\braket{\widetilde{\phi}^{\perp}_{B\psi}|\widetilde{\phi}^{\perp}_{A\psi}}}}\right)^2}{{{\braket{\widetilde{\phi}^{\perp}_{B\psi}|\widetilde{\phi}^{\perp}_{B\psi}}}}}=0.\label{A3}
 \end{align}
 Now using $\braket{\widetilde{\phi}^{\perp}_{X\psi}|\widetilde{\phi}^{\perp}_{X\psi}}= \left(\hspace{-0mm}\braket{\Delta{X}}^{\phi}_{\psi}\hspace{-0mm}\right)^2$, where X=\{A,B\}; the Eq. (\ref{A3}) becomes
 \begin{align}
 {{\braket{\widetilde{\phi}^{\perp}_{B\psi}|\widetilde{\phi}^{\perp}_{A\psi}}}}=\pm i\braket{\Delta{A}}^{\phi}_{\psi}\braket{\Delta{B}}^{\phi}_{\psi}.\label{A4}
 \end{align}
 Finally, use Eqs. (\ref{A4}) and (\ref{SDPPS-3}) in Eq. (\ref{A1}) to obtain the condition (\ref{URPPS-7}).

\section{}\label{B}
 To show that the uncertainty relation  (\ref{URPPS-1}) or (\ref{URPPS-5}) is also valid for  mixed pre-selected state $\rho$, we consider the  following operator 
 \begin{align}
 T=A_0^{\phi}+(\gamma+i\epsilon)B_0^{\phi},\label{B1}
 \end{align}
 where $A_0^{\phi}=A-A\ket{\phi}\bra{\phi}$, $B_0^{\phi}=B-B\ket{\phi}\bra{\phi}$ and $\gamma$, $\epsilon$ are some real parameters. Now for any operator $T$, the inequality 
 \begin{align}
Tr(\rho TT^{\dagger})\geq 0,\label{B2}
 \end{align}
  holds. Using Eq. (\ref{B1}), we have 
 \begin{align}
Tr(\rho TT^{\dagger})=&\left(\braket{\Delta{A}}^{\phi}_{\rho}\right)^2 + \left(\gamma^2+\epsilon^2\right)\left(\braket{\Delta{B}}^{\phi}_{\rho}\right)^2 \nonumber\nonumber\\
&+\gamma\left(\braket{\{A,B\}}_{\rho}-2\mathcal{R}e W_{AB}\right)\nonumber\\
&-i\epsilon\left(\braket{[A,B]}_{\rho}-2\mathcal{I}m W_{AB}\right)  \geq 0,\label{B3}
 \end{align}
 where $\left(\braket{\Delta{A}}^{\phi}_{\rho}\right)^2=Tr(\rho A_0^{\phi} {A_0^{\phi}}^{\dagger})$ is defined in Eq. (\ref{MixedPPS-1}), $\braket{[A,B]}_{\rho}=Tr(\rho[A,B])$, $\braket{\{A,B\}}_{\rho}=Tr(\rho\{A,B\})$  and $W_{AB}=Tr(\Pi_{\phi}B\rho A)$. Now one finds the quantity $Tr(\rho TT^{\dagger})$ is minimum for $\gamma=-\frac{\braket{\{A,B\}}_{\rho}-2\mathcal{R}e W_{AB}}{2\left(\braket{\Delta{B}}^{\phi}_{\rho}\right)^2}$ and  $\epsilon=\frac{i(\braket{[A,B]}_{\rho}-2i\mathcal{I}m W_{AB})}{2\left(\braket{\Delta{B}}^{\phi}_{\rho}\right)^2}$. Hence, $min_{\gamma,\epsilon}$$Tr(\rho TT^{\dagger})\geq 0$ becomes
 \begin{align}
 \left(\braket{\Delta{A}}^{\phi}_{\rho}\right)^2 \left(\braket{\Delta{B}}^{\phi}_{\rho}\right)^2 &\geq  \left[\frac{1}{2i}\braket{[A,B]}_{\rho}-\mathcal{I}m W_{AB}\right]^2 \nonumber\\
 &\hspace{1.5mm}+ \left[\frac{1}{2}\braket{\{A,B\}}_{\rho}-\mathcal{R}e W_{AB}\right]^2.\label{B4}
\end{align}
By discarding  the second term which is a positive number in the right hand side of Eq. (\ref{B4}), the uncertainty relation (\ref{MixedPPS-5}) is achieved.

\section{}\label{C}
Here we show the proofs of all the \emph{Lemmas}  to detect mixedness of an unknown state in qubit, qutrit, qubit-qubit and qubit-qutrit systems. Let us recall the mathematical expression of the statement  of the \emph{Proposition 3} which is given by 
\begin{align}
\braket{\Delta A_w}^{\phi}_{\rho}\geq \braket{\Delta A}^{\phi}_{\rho}.\label{C1}
\end{align} \par
In the following, we will use Eq. (\ref{C1}) to prove all the \emph{Lemmas}. \par

The general form of a mixed state is $\rho=\sum_{i}p_i\ket{\psi_i}\bra{\psi_i}$ and the condition for which the equality of Eq. (\ref{C1}) holds is (see  proof of the \emph{Proposition 3}) 
\begin{align}
\braket{\phi|A|\psi_i}=\lambda\braket{\phi|\psi_i},\label{C2}
\end{align}
  where `$\lambda$' is some constant which  depends on  the index of $\ket{\phi}$ (e.g., for $\ket{\phi_k}$, it is   $\lambda_k$).\\
\textbf{\emph{ The proof of \emph{Lemma 1:}}}

\begin{proof}
We first assume that each $\ket{\psi_i}$ is distinct and hence from Eq. (\ref{C2}), we have a set of equations 
\begin{align}
\bra{\phi}(A-\lambda I)\ket{\psi_i}=0,\label{C3}
\end{align}
for each $\ket{\psi_i}$. Denote the unnormalized state vector $\ket{\widetilde{\phi}^{\lambda}_{A}}=(A-\lambda I)\ket{\phi}$.  As  $\ket{\widetilde{\phi}^{\lambda}_{A}}$ is a  unnormalized state vector different from $\ket{\phi}$ and  the $\ket{\psi_i}$, $\forall i$ are orthogonal to $\ket{\widetilde{\phi}^{\lambda}_{A}}$, it implies that \{$\ket{\psi_i}\}_{i=1}$ are  confined in one  dimensional  Hilbert space.  Hence  each   $\ket{\psi_i}$ is the  same initially prepared state that is $\rho$ is  a pure state in a qubit system.
\end{proof}
\hspace{-3.6mm}\textbf{\emph{ The proof of \emph{Lemma 2:}}}
\begin{proof}
The qubit argument can not be generalized for the higher dimensional systems. The reason is simply because in three dimensional Hilbert space (for example) all the $\ket{\psi_i}$ can be confined in a {two dimensional subspace of the Hilbert space} which is orthogonal to $\ket{\widetilde{\phi}^{\lambda}_{A}}$. 
To make an ``if and only if" condition, we consider the orthogonal basis $\{\ket{\phi_k}\}_{k=1}^3$ as valid post-selections. Here valid post-selections are those post-selections for which  weak values are defined.\par
As there are three post-selections in three dimensional Hilbert space, we have three sets of equations like (\ref{C2}) for the equality of the inequality (\ref{C1})
\begin{align}
\{\bra{\phi_1}(A-\lambda_1 I)\ket{\psi_i}=0\}_{i=1},\label{C4}\\
\{\bra{\phi_2}(A-\lambda_2 I)\ket{\psi_i}=0\}_{i=1},\label{C5}\\
\{\bra{\phi_3}(A-\lambda_3 I)\ket{\psi_i}=0\}_{i=1}.\label{C6}
\end{align}
Now, there are three possibilities which is implied by (\ref{C4}), (\ref{C5}) and (\ref{C6}):\\
($i$)  The  state vectors   \{$\ket{\widetilde{\phi}^{\lambda_k}_{kA}}=(A-\lambda_k I)\ket{\phi_k}\}^{3}_{k=1}$ span the whole 3-dimensional Hilbert space $\mathcal{H}$,\\
($ii$)  \{$\ket{\widetilde{\phi}^{\lambda_k}_{kA}}\}^{3}_{k=1}$ span a 2-dimensional Hilbert space $\mathcal{H}$,\\
($iii$)  \{$\ket{\widetilde{\phi}^{\lambda_k}_{kA}}\}^{3}_{k=1}$ span a 1-dimensional Hilbert space $\mathcal{H}$.\par
Below, we will show that possibility-($i$) is discarded naturally whereas to discard possibility-($iii$), we need a condition on observable $A$ and post-selection $\ket{\phi}$. Then, possibility-($ii$) will  automatically indicate   that all the \{$\ket{{\psi_i}}\}_{i=1}$ are same i.e., $\rho$ is pure.\par
To start with possibility-($i$), let's assume that possibility-($i$) is true, then \{$\ket{\psi_i}\}_{i=1}$ has to be orthogonal to  \{$\ket{\widetilde{\phi}^{\lambda_k}_{kA}}\}^{3}_{k=1}$  according to     (\ref{C4}), (\ref{C5}) and  (\ref{C6}) implying $\ket{\psi_i}=0$ $\forall$ $i$ i.e., $\rho=0$.  So we discard this possibility.\par
Possibility-($iii$) implies
\begin{align}
\mathcal{N}_1(A-\lambda_1I)\hspace{-1mm}\ket{\phi_1}=\mathcal{N}_2(A-\lambda_2 I)\hspace{-1mm}\ket{\phi_2}=\mathcal{N}_3(A-\lambda_3 I)\hspace{-1mm}\ket{\phi_3},\label{C7}
\end{align}
along z-axis (for example)
and hence  \{$\ket{{\psi_i}}\}_{i=1}$ span 2-dimensional xy-plane (see Fig. 4). Here $\mathcal{N}_k$ are nomalization constants.
Now  the inner product of (\ref{C7}) with $\ket{\phi_1}$, $\ket{\phi_2}$ and $\ket{\phi_3}$, respectively  gives
\begin{align}
\mathcal{N}_1(\braket{\phi_1|A|\phi_1}-\lambda_1)=\mathcal{N}_2\braket{\phi_1|A|\phi_2}=\mathcal{N}_3\braket{\phi_1|A|\phi_3},\label{C8}\\
\mathcal{N}_1\braket{\phi_2|A|\phi_1}=\mathcal{N}_2(\braket{\phi_2|A|\phi_2}-\lambda_2)=\mathcal{N}_3\braket{\phi_2|A|\phi_3},\label{C9}\\
\mathcal{N}_1\braket{\phi_3|A|\phi_1}=\mathcal{N}_2\braket{\phi_3|A|\phi_2}=\mathcal{N}_3(\braket{\phi_3|A|\phi_3}-\lambda_3).\label{C10}
\end{align}
Now, the  particular choice
\begin{align}
\braket{\phi_1|A|\phi_2}=0\label{C11}
\end{align}
implies that Eq. (\ref{C8}) and  (\ref{C9})   do not hold if $\braket{\phi_1|A|\phi_3}\neq 0$ and $\braket{\phi_2|A|\phi_3}\neq 0$, respectively. If either of  Eq. (\ref{C8}) and  (\ref{C9})   does not hold then possibility-($iii$) is discarded. But, if  $\braket{\phi_1|A|\phi_3}= 0$ and $\braket{\phi_2|A|\phi_3}= 0$, then we have to proceed further. Note that,  by setting $\braket{\phi_1|A|\phi_2}=0$ from Eq. (\ref{C11}),  $\braket{\phi_1|A|\phi_3}= 0$ and $\braket{\phi_2|A|\phi_3}= 0$ in   Eqs. (\ref{C8}), (\ref{C9}) and  (\ref{C10}), we have 
\begin{align}
\lambda_k=\braket{\phi_k|A|\phi_k} \hspace{5mm}for \hspace{5mm} k=1,2,3.\label{C12}
\end{align}
Now, it is easy to see that with the values of $\lambda_k$  from Eq. (\ref{C12}),  \{$\braket{\phi_k|\widetilde{\phi}^{\lambda_k}_{kA}}=0\}^{3}_{k=1}$ holds. This implies that \{$\ket{\widetilde{\phi}^{\lambda_k}_{kA}}\}^{3}_{k=1}$ can not be  confined in one dimensional Hilbert space i.e., along a particular axis and in our assumption it is   the z-axis. But according to Eq. (\ref{C7}), \{$\ket{\widetilde{\phi}^{\lambda_k}_{kA}}\}^{3}_{k=1}$ are along the z-axis. Hence it shows the contradiction and we discard the possibility-($iii$) when   the condition   $\braket{\phi_1|A|\phi_2}=0$ is considered.\par
Finally the  possibility-($ii$) implies that \{$\ket{\psi_i}\}_{i=1}$ must be  spanned in 1-dimensional Hilbert space $\mathcal{H}$ that is,   each   $\ket{\psi_i}$ is the  same initially prepared state which is a pure state. \par
So, we conclude that  if for an observable $A$ and a complete orthonormal basis $\{\ket{\phi_k}\}_{k=1}^3$ (to be used as post-selected states) of any three-level quantum system (i.e., a qutrit), the  condition  $\braket{\phi_1|A|\phi_2} = 0$  is considered,  then equality in Eq. (\ref{C1}) holds good if and only if the pre-selected state $\rho$ is pure.\par
\end{proof}
\hspace{-3.6mm} \textbf{\emph{The proof of \emph{Lemma 3:}}}
\begin{proof}
 For this bipartite system, we consider the observable and the post-selection to be  $A\otimes I$  and  $\ket{\phi_{AB}}=\ket{\phi_A}\ket{\phi_B}$, respectively. The standard deviations defined in  Eq. (\ref{MixedPPS-1}) and (\ref{MixedPPS-3}) for the given bipartite state $\rho$ become 
 \begin{align}
 \left(\hspace{-.8mm}\braket{\Delta {(\hspace{-.5mm}A\hspace{-.5mm}\otimes \hspace{-.5mm}I)_w}}^{\phi\hspace{-.5mm}_{A\hspace{-.5mm}B}}_{\rho}\hspace{-1mm}\right)^2\hspace{-1.5mm}&=Tr[\hspace{-.5mm}(\hspace{-.5mm}A\hspace{-.5mm}\otimes \hspace{-.5mm} I)^2\hspace{-.5mm}\rho]- \frac{|\hspace{-1mm}\braket{\phi\hspace{-.5mm}_{A\hspace{-.5mm}B}|\hspace{-.5mm}(\hspace{-.5mm}A\hspace{-.5mm}\otimes\hspace{-.5mm} I)\rho|\phi\hspace{-.5mm}_{A\hspace{-.5mm}B}}\hspace{-1mm}|^2}{\braket{\phi\hspace{-.5mm}_{A\hspace{-.5mm}B}|\rho|\phi\hspace{-.5mm}_{A\hspace{-.5mm}B}}}\nonumber\\
  &=Tr[A^2\rho_A] - \frac{|\hspace{-1mm}\braket{\phi_A|A\rho_A^{\phi_B}|\phi_A}\hspace{-1mm}|^2}{\braket{\phi_A|\rho_A^{\phi_B}|\phi_A}},\label{C13}
 \end{align}
 
 \begin{align}
 \hspace{-1mm}\left(\hspace{-.8mm}\braket{\Delta {(\hspace{-.5mm}A\hspace{-.5mm}\otimes \hspace{-.5mm}I)}}^{\phi\hspace{-.5mm}_{A\hspace{-.5mm}B}}_{\rho}\hspace{-1mm}\right)^2\hspace{-1.5mm}&=Tr[\hspace{-.5mm}(\hspace{-.5mm}A\hspace{-.5mm}\otimes \hspace{-.5mm} I)^2\hspace{-.5mm}\rho]\hspace{-.5mm}- \hspace{-.5mm}\braket{\phi\hspace{-.5mm}_{A\hspace{-.5mm}B}|\hspace{-.5mm}(\hspace{-.5mm}A\hspace{-.5mm}\otimes\hspace{-.5mm} I)\rho(\hspace{-.5mm}A\hspace{-.5mm}\otimes\hspace{-.5mm}I)\hspace{-.5mm}|\phi\hspace{-.5mm}_{A\hspace{-.5mm}B}}\nonumber\\
 &=Tr[A^2\rho_A] - \braket{\phi_A|A\rho_A^{\phi_B}A|\phi_A},\label{C14}
 \end{align}
respectively, where $\rho_A^{\phi_B}=\braket{\phi_B|\rho|\phi_B}$ is the collapsed density operator of the subsystem A  when a projection operator $\Pi_{\phi_B}=\ket{\phi_B}\bra{\phi_B}$ is measured in the subsystem B. In a qubit-qubit system, the subsystem  A is two dimensional and  hence $\braket{\Delta {(\hspace{-.5mm}A\hspace{-.5mm}\otimes \hspace{-.5mm}I)_w}}^{\phi\hspace{-.5mm}_{A\hspace{-.5mm}B}}_{\rho}$ from Eq. (\ref{C13}) and $\braket{\Delta {(\hspace{-.5mm}A\hspace{-.5mm}\otimes \hspace{-.5mm}I)}}^{\phi\hspace{-.5mm}_{A\hspace{-.5mm}B}}_{\rho}$ from Eq. (\ref{C14}) are equal ``if and only if" $\rho_A^{\phi_B}$ is pure. Now, $\rho_A^{\phi_B}$ being pure can be from $\rho$ being both pure and mixed.  If $\rho$ is pure then  $\rho_A^{\phi_B}$ is always pure but if  $\rho$ mixed, then it is easy to see that $\rho_A^{\phi_B}$ is pure only when  $\rho=\sum_{i=1}^2 p_i \ket{\psi_A^i}\bra{\psi_A^i}\otimes \ket{\phi_B^i}\bra{\phi_B^i}$, where $\ket{\phi_B^1}=\ket{\phi_B}$ and $\sum^2_{i=1} \ket{\phi_B^i}\bra{\phi_B^i}=I$. So, let us consider another  post-selection $\ket{\phi^{\prime}_B}$ (which  is not orthogonal to  \{$\ket{\phi^{i}_B}\}^{2}_{i=1}$)  and if we find $\rho_A^{\phi^{\prime}_B}$ to be pure which is equivalent to the equality of  $\braket{\Delta {(\hspace{-.5mm}A\hspace{-.5mm}\otimes \hspace{-.5mm}I)_w}}^{\phi^{\prime}\hspace{-.5mm}_{A\hspace{-.5mm}B}}_{\rho}$  and $\braket{\Delta {(\hspace{-.5mm}A\hspace{-.5mm}\otimes \hspace{-.5mm}I)}}^{\phi^{\prime}\hspace{-.5mm}_{A\hspace{-.5mm}B}}_{\rho}$, then we are sure that the bipartite state $\rho$ is a pure state (due to the virtue of qubit system discussed above), where $\ket{\phi^{\prime}\hspace{-.5mm}_{A\hspace{-.5mm}B}}=\ket{\phi_A\phi_B^{\prime}}$.\par
{So, here is the conclusion}: Consider any  two non orthogonal post-selections  $\ket{\phi_B}$ and $\ket{\phi^{\prime}_B}$ in the subsystem B.  For any observable $A$,  equality of $\braket{\Delta {(\hspace{-.5mm}A\hspace{-.5mm}\otimes \hspace{-.5mm}I)_w}}^{\phi\hspace{-.5mm}_{A\hspace{-.5mm}B}}_{\rho}$ and $\braket{\Delta {(\hspace{-.5mm}A\hspace{-.5mm}\otimes \hspace{-.5mm}I)}}^{\phi\hspace{-.5mm}_{A\hspace{-.5mm}B}}_{\rho}$ and separately of  $\braket{\Delta {(\hspace{-.5mm}A\hspace{-.5mm}\otimes \hspace{-.5mm}I)_w}}^{\phi^{\prime}\hspace{-.5mm}_{A\hspace{-.5mm}B}}_{\rho}$ and $\braket{\Delta {(\hspace{-.5mm}A\hspace{-.5mm}\otimes \hspace{-.5mm}I)}}^{\phi^{\prime}\hspace{-.5mm}_{A\hspace{-.5mm}B}}_{\rho}$  hold \emph{only} when the $2\otimes 2$ pre-selected state  $\rho$ is pure.
\end{proof}
\hspace{-3.6mm} \textbf{\emph{The proof of \emph{Lemma 4:}}}
\begin{proof}
 The treatment above with the condition of the  qutrit system, we have the conclusion:
 if for an observable $A$ and any complete orthonormal basis $\{\ket{\phi^k_A}\}_{k=1}^3$ (to be used as post-selected states) for a qutrit,  the condition $\braket{\phi_A^3|A|\phi_A^1} = 0$ is considered,
then equality of $\braket{\Delta {(\hspace{-.5mm}A\hspace{-.5mm}\otimes \hspace{-.5mm}I)_w}}^{\phi\hspace{-.5mm}_{A\hspace{-.5mm}B}}_{\rho}$ and $\braket{\Delta {(\hspace{-.5mm}A\hspace{-.5mm}\otimes \hspace{-.5mm}I)}}^{\phi\hspace{-.5mm}_{A\hspace{-.5mm}B}}_{\rho}$ and separately of $\braket{\Delta {(\hspace{-.5mm}A\hspace{-.5mm}\otimes \hspace{-.5mm}I)_w}}^{\phi^{\prime}\hspace{-.5mm}_{A\hspace{-.5mm}B}}_{\rho}$ and $\braket{\Delta {(\hspace{-.5mm}A\hspace{-.5mm}\otimes \hspace{-.5mm}I)}}^{\phi^{\prime}\hspace{-.5mm}_{A\hspace{-.5mm}B}}_{\rho}$  hold  if and only if the $3 \otimes 2$ pre-selected  state $\rho$ is pure, where $\ket{\phi^{\prime}\hspace{-.5mm}_{A\hspace{-.5mm}B}}=\ket{\phi_A\phi_B^{\prime}}$.
\end{proof}


\bibliographystyle{apsrev4-1}


\end{document}